\newcommand{\size}[1]{\left| #1 \right|}
\newcommand{\E}{\mathbb{E}}
\newcommand{\V}{\mathbb{V}}
\newcommand{\remove}[1]{}
\newcommand{\R}{\mathbb{R}}
\newcommand{\N}{\mathbb{N}}
\newcommand{\cN}{\mathcal{N}}
\newcommand{\cT}{\mathcal{T}}
\newcommand{\cE}{\mathcal{E}}
\newcommand{\cAe}{\mathcal{A}_{\text{exact}}}
\newcommand{\cAc}{\mathcal{A}_{\text{coarse}}}
\newcommand{\cB}{\mathcal{B}}
\newcommand{\cD}{\mathcal{D}}
\newcommand{\cP}{\mathcal{P}}
\newcommand{\cR}{\mathcal{R}}
\newcommand{\cG}{\mathcal{G}}
\newcommand{\cO}{\mathcal{O}}
\newcommand{\Oh}{\mathcal{O}}
\newcommand{\tOh}{\tilde{\mathcal{O}}}
\newcommand{\cF}{\mathcal{F}}
\newcommand{\cH}{\mathcal{H}}
\newcommand{\cZ}{\mathcal{Z}}
\newcommand{\eps}{\epsilon}
\newcommand{\complain}[1]{\textcolor{red}{#1}}
\newcommand{\comments}[1]{\textcolor{blue}{\bf{#1}}}
\newcommand{\dsubset}{A_1,\ldots, A_d}
\newcommand{\ssubset}{A_1^{[a_1]},\ldots, A_s^{[a_s]}}
\newcommand{\tsubset}{B_1^{[b_1]},\ldots, B_t^{[b_t]}}
\newcommand{\gpis}{{\sc GPIS}\xspace}
\newcommand{\gonepis}{$\mbox{\gpis}_1$ }
\newcommand{\gtwopis}{$\mbox{\gpis}_2$ }
\newcommand{\bis}{{\sc BIS}\xspace}
\newcommand{\tis}{{\sc TIS}\xspace}
\newcommand{\colored}{{\sc COLORED}\xspace}
\newcommand{\hest}{{\sc Hyperedge-Estimation}\xspace}
\newcommand{\pr}{{\mathbb{P}}\xspace}
\newcommand{\verest}{{\sc Verify-Estimate}\xspace}
\newcommand{\cest}{{\sc Coarse-Estimate}\xspace}
\newcommand{\est}{\mbox{{\sc Est}}}
\newcommand{\act}{\mbox{{\sc Act}}}
\newcommand{\tuple}{\mbox{{\sc Tuple}}}
\newcommand{\lbeps}{\left({n^{-d}\log ^{5d+5} n} \right)^{1/4}}
\newcommand{\etal}{{\it{et al.}}}
\theoremstyle{plain}
\newtheorem{theo}{Theorem}[section]
\newtheorem{lem}[theo]{Lemma}
\newtheorem{pro}[theo]{Proposition}
\newtheorem{cl}[theo]{Claim}
\theoremstyle{definition}
\newtheorem{defi}[theo]{Definition}
\newtheorem{obs}[theo]{Observation}
\newtheorem{fact}[theo]{Fact}
\newcommand{\defproblem}[3]{
  \vspace{1mm}
\noindent\fbox{
  \begin{minipage}{0.96\textwidth}
  \begin{tabular*}{\textwidth}{@{\extracolsep{\fill}}lr} #1 \\ \end{tabular*}
  {\bf{Input:}} #2  \\
  {\bf{Output:}} #3
  \end{minipage}
  }
  \vspace{1mm}
}
\begin{document}
\title{Hyperedge Estimation using Polylogarithmic Subset Queries}

\author{
	Anup Bhattacharya\thanks{Supported by NPDF Fellowship, India, ACM Unit, Indian Statistical Institute, Kolkata, India}
\and Arijit Bishnu \footnote{
ACM Unit, Indian Statistical Institute, Kolkata, India.
}
\and
Arijit Ghosh
\footnotemark[2]
\and
Gopinath Mishra
\footnotemark[2]
}

\date{}
\maketitle

\maketitle
\begin{abstract}

\noindent In this work, we estimate the number of hyperedges in a hypergraph $\cH(U(\cH),\cF(\cH))$, where $U(\cH)$ denotes the set of vertices and $\cF(\cH)$ denotes the set of hyperedges. We assume a query oracle access to the hypergraph $\cH$. Estimating the number of edges, triangles or small subgraphs in a graph is a well studied problem. Beame \etal~and Bhattacharya \etal~gave algorithms to estimate the number of edges and triangles in a graph using queries to the {\sc Bipartite Independent Set} ({\sc BIS}) and the {\sc Tripartite Independent Set} ({\sc TIS}) oracles, respectively. We generalize the earlier works by estimating the number of hyperedges using a query oracle, known as the {\bf Generalized $d$-partite independent set oracle (\gpis)}, that takes $d$ (non-empty) pairwise disjoint subsets of vertices $\dsubset \subseteq U(\cH)$ as input, and answers whether there exists a hyperedge in $\cH$ having (exactly) one vertex in each $A_i, i \in \{1,2,\ldots,d\}$. \remove{Our work on estimating the number of hyperedges in a hypergraph using query oracles can be seen as a natural generalization of these above works.} We give a randomized algorithm for the hyperedge estimation problem using the \gpis query oracle to output $\widehat{m}$ for $m(\cH)$ satisfying $(1-\eps) \cdot m(\cH) \leq \widehat{m} \leq (1+\eps) \cdot m(\cH)$. The number of queries made by our algorithm, assuming $d$ to be a constant, is polylogarithmic in the number of vertices of the hypergraph.

\end{abstract}

\section{Introduction} \label{sec:intro}

\noindent A hypergraph $\cH$ is a \emph{set system} $(U(\cH),\cF(\cH))$, where $U(\cH)$ denotes a set of $n$ vertices and $\cF(\cH)$, a set of subsets of $U(\cH)$, denotes the set of hyperedges. A hypergraph $\cH$ is said to be $d$-uniform if every hyperedge in $\cH$ consists of exactly $d$ vertices. The cardinality of the hyperedge set is denoted as $m(\cH)=\size{\cF(\cH)}$. We investigate the problem of estimating $m(\cH)$, where the hypergraph $\cH$ can be accessed only by queries to an oracle that answers subset queries of a particular kind.

In the \emph{subset size estimation} problem using the \emph{query model} of computation, the \emph{subset query} oracle is used to estimate the size of an unknown subset $S \subseteq U$, where $U$ is a known universe of elements. A \emph{subset query} with a subset $T \subseteq U$ asks whether $S \cap T$ is empty or not. \remove{The estimation technique involves querying the subset query oracle with different subsets $T$ and use some concentration bounds to estimate the size of $S$.} Viewed differently, this is about estimating an unknown set $S$ by looking at its intersection pattern with a known set $T$. At its core, a subset query essentially enquires about the existence of an intersection between two sets -- a set chosen by the algorithm designer and an unknown set whose size we want to estimate. \remove{A \emph{set membership query} is a special case of a subset query, where one asks a Yes/No question about the existence of an element in a set.} The study of subset queries was initiated in a breakthrough work by Stockmeyer~\cite{Stockmeyer83,Stockmeyer85} and later formalized by Ron and Tsur~\cite{RonT16}.

\remove{
\comments{In subset size estimation problem using subset query oracle, we want to estimate the size of a subset $S\subseteq T$ using an oracle which tells us, given a query set $Q\subseteq T$, whether $Q \cap T$ is empty or not. Viewed differently, this is about estimating an unknown set $S$ by looking at its intersection pattern with a known set $T$.}
}

Beame \etal~\cite{BeameHRRS18} used a subset query oracle, named {\sc Bipartite Independent Set} (\bis) query oracle to estimate the number of edges in a graph using polylogarithmic query complexity\footnote{query complexity means the number of queries used by the corresponding query oracle}. The \bis query oracle answers a YES/NO question on the existence of an edge between two disjoint subsets of vertices of a graph $G$. Having estimated the number of edges in a graph using \bis{} queries, a very natural question was to estimate the number of hyperedges in a hypergraph using an appropriate query oracle. The answer to the above question is not obvious as two edges in a graph can intersect in at most one vertex but the intersection between two hyperedges in a hypergraph can be an arbitrary set. As a first step towards resolving this generalized question, Bhattacharya et al.~\cite{Bhatta-abs-1808-00691, BhattaISAAC} considered the hyperedge estimation problem using a {\sc Tripartite Independent Set} ({\sc TIS}) oracle in $3$-uniform hypergraphs. A {\sc TIS} query oracle takes three disjoint subsets of vertices as input and reports whether there exists a hyperedge having a vertex in each of the three sets. They showed that when the number of hyperedges having two vertices in common is bounded above (polylogarithmic in $n$), then the number of hyperedges in a $3$-uniform hypergraph can be estimated using polylogarithmic TIS queries. This leads us to ask the next set of questions given as follows.
\begin{itemize}
        \item \emph{{\bf Question 1}: For a $3$-uniform hypergraph, is the dependence of the {\sc TIS} query complexity on the number of hyperedges with two common vertices inherent as in Bhattacharya et al.~\cite{Bhatta-abs-1808-00691, BhattaISAAC}?}
        \item \emph{{\bf Question 2}: Can the subset query oracle framework of Beame \etal~be extended to estimate the number of hyperedges in a $d$-uniform hypergraph using only polylogarithmic many queries?}
\end{itemize}

In this paper, we give positive answers to both these questions. We show that the number of hyperedges in a $d$-uniform hypergraph can be estimated using polylogarithmic~\footnote{Here the exponent of $\log n$ is $\Oh(d)$, but the exponent of $\eps$ is an absolute constant.} many \gpis queries. \remove{Even though our algorithmic framework is similar to that of Beame et al.~\cite{BeameHRRS18} and Bhattacharya et al.~\cite{Bhatta-abs-1808-00691, BhattaISAAC}, the techniques used in this paper are different.} Next, we define formally our query model and state the main results.

\subsection{Query model, problem description and our results} \label{ssec:querymodel}
\noindent We start by defining our query oracle. 
\begin{defi}\label{def:gpis} 
{Generalized $d$-partite independent set oracle (\gpis)~\cite{BishnuGKM018}:} Given $d$ pairwise disjoint subsets of vertices $\dsubset \subseteq U(\cH)$ of a hypergraph $\cH$ as input, \gpis{} query oracle answers {\sc Yes} if and only if $m(\dsubset) \neq 0$, where $m(\dsubset)$ denotes the number of hyperedges in $\cH$ having exactly one vertex in each $A_i$, $\forall i \in \{1,2,\ldots,d\}$. \end{defi}

Observe that the \gpis query is a generalization of subset queries, as for $d=1$, \gpis is equivalent to asking a Yes/No question about the existence of any element of interest in the queried set. For $d=2$, \gpis{} is \bis{}. An involved use of an induction on $d$ will show how \gpis generalizes from the subset queries and this process will unravel the intricate intersection pattern of the $d$-uniform hyperedges. \bis{}, \tis{} and their generalization, the \gpis{} query has a \emph{transversal} nature to it. A \emph{transversal}~\cite{Matgeo} of a hypergraph $\cH=(U(\cH), \cF(\cH))$ is a subset $T \subseteq U(\cH)$ that intersects all sets of $\cF(\cH)$. One can see a \gpis{} query as a \emph{transversal query} as it answers if there exists a transversal for the disjoint subsets $\dsubset$ as in Definition~\ref{def:gpis}.

We now state the precise problem that we solve in the \gpis oracle framework and present our main result in Theorem~\ref{theo:main_restate}.

\defproblem{\hest}{A set of $n$ vertices $U(\cH)$ of a hypergraph $\cH$, a \gpis{} oracle access to $\cH$, and $\eps\in (0,1)$.}{An estimate $\widehat{m}$ of $m(\cH)$ such that $(1 - \epsilon) \cdot m(\cH) \leq \widehat{m} \leq (1 - \epsilon) \cdot m(\cH)$.}

\begin{theo} \label{theo:main_restate} Let $\cH$ be a hypergraph with $\size{U(\cH)}=n$. For any $\eps \in (0,1)$, \hest can be solved using ${\cO}_d \left(\frac{\log^{5d+5} n}{\eps^4} \right)$ \gpis queries with high probability\footnote{high probability means a probability of at least $1 - n^{-O(1)}$}, where the constant in $\Oh_d(\cdot)$ is a function of $d$.
 \end{theo}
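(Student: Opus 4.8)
The plan is to build the estimator through a sequence of reductions, following the high-level philosophy of Beame et al.\ and Bhattacharya et al.\ but handling the arbitrary intersection patterns of $d$-uniform hyperedges. First I would reduce the counting problem to a \emph{colored} version: by random partitioning the vertex set into $d$ color classes (a constant number of times, boosting success probability), it suffices to count hyperedges that are ``rainbow,'' i.e.\ have exactly one vertex in each class; a standard coloring argument shows a constant fraction of hyperedges become rainbow, so an estimate for the colored count yields an estimate for $m(\cH)$ after rescaling. With the colored reduction in hand, a \gpis{} query on subsets drawn one from each color class behaves like a genuine ``is this $d$-box nonempty'' oracle on the induced $d$-partite hypergraph, which is the clean setting to work in.

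Next I would set up a \emph{coarse estimation} routine: using \gpis{} queries in a bisection/geometric-search style (halving subsets in each coordinate), obtain a rough estimate $\widetilde{m}$ that is within, say, a polylogarithmic factor of the true rainbow count. This is the analogue of the coarse-to-fine strategy; the cost here should be $\mathrm{polylog}(n)$ queries since each coordinate contributes a $\log n$ factor and there are $d$ coordinates. The coarse estimate is then refined via an importance-sampling / sparsification step: partition each color class into roughly $\widetilde{m}^{1/d}$ random buckets so that a typical $d$-tuple of buckets contains $O(1)$ rainbow hyperedges in expectation, query each sampled $d$-tuple of buckets with \gpis{}, and use the fraction of nonempty tuples together with a more careful recursive \gpis-based subroutine to count the hyperedges inside a nonempty tuple. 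Summing the local counts over a random sample of tuples and rescaling gives $\widehat m$; a Chernoff/Chebyshev argument over the sampled tuples controls the variance and yields the $(1\pm\eps)$ guarantee, with the $\eps^{-4}$ coming from the two levels of sampling (bucket-level and within-bucket).

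The technical heart — and the step I expect to be the main obstacle — is the within-tuple counting via \gpis{}: given a $d$-tuple of subsets guaranteed to contain only a bounded number of rainbow hyperedges, exactly recover that number (or enumerate the hyperedges) using only $\mathrm{polylog}(n)$ \gpis{} queries. The natural approach is an induction on $d$: fix the structure in the first coordinate by binary-searching for the vertices of $A_1$ that participate in some hyperedge (each such search is $O(\log n)$ queries), and for each such vertex $v$ recurse on a $(d-1)$-partite problem obtained by ``projecting out'' $v$ — but \gpis{} is defined only for $d$-partite inputs, so one must simulate a $(d-1)$-\gpis{} query using the ambient $d$-\gpis{} oracle by feeding the singleton $\{v\}$ in the first slot. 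The subtlety is that hyperedges may share vertices across the recursion branches, so the recursion tree can blow up unless the bounded-count hypothesis is carefully maintained at every level; ensuring the branching stays polylogarithmic (and that the depth-$d$ recursion multiplies only to $\log^{O(d)} n$) is where the $\log^{5d+5}n$ exponent is born. Once the within-tuple subroutine and its query bound are established, assembling the pieces — colored reduction, coarse estimate, bucketed sampling, local counting, rescaling, and a union bound over all randomized steps — gives Theorem~\ref{theo:main_restate}.
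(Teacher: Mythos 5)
There are two genuine gaps, both at the heart of why this problem is hard. First, your opening reduction --- randomly $d$-color the vertices, keep only ``rainbow'' hyperedges, and rescale --- assumes that a constant fraction of hyperedges become rainbow \emph{with high probability}. That is true only in expectation; because hyperedges can intersect arbitrarily, the rainbow count need not concentrate. For example, if all $m$ hyperedges share the same $d-1$ vertices, then with constant probability those shared vertices do not receive distinct colors and the rainbow count is $0$; a ``constant number of repetitions'' cannot repair a constant-probability total failure into a $(1\pm\eps)$ guarantee. Controlling exactly this kind of dependency is the main technical content of the paper: its Sparsification Lemma (Lemma~\ref{lem:sparse}) proves concentration of the number of properly colored hyperedges via an induction on $d$ together with the method of averaged bounded differences, using a hash function $h_d:[k]^d\to\{0,1\}$ rather than a plain rainbow coloring. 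Your proposal treats this step as a ``standard coloring argument,'' which is precisely where the argument cannot be standard.

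Second, your main estimator --- split each class into about $\widetilde m^{1/d}$ buckets so a ``typical'' $d$-tuple of buckets holds $O(1)$ hyperedges, uniformly sample tuples, count inside each sampled tuple, and rescale --- breaks on skewed instances. In the same example (all hyperedges through $d-1$ common vertices), only about $\widetilde m^{1/d}$ of the $\widetilde m$ bucket-tuples are nonempty and each holds roughly $m^{1-1/d}$ hyperedges, so the within-tuple counts are not bounded (your polylog-query exact-counting subroutine does not apply) and the variance of uniform tuple-sampling forces polynomially many sampled tuples, not $\mathrm{polylog}(n)$. The paper avoids this by never sampling tuples uniformly: it iterates sparsification (keeping a random $1/k$-fraction of color-tuples, with the martingale concentration above), exactly counts a tuple only once its count drops below a polylog threshold $\tau$ (Lemma~\ref{lem:exact}), and when the number of tuples grows too large it first computes coarse estimates correct up to a $\Oh_d(\log^{d-1}n)$ factor (Lemma~\ref{lem:coarse_main}) and then applies \emph{importance} sampling weighted by those estimates (Lemma~\ref{lem:importance-app}), which is exactly what neutralizes skew. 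Your coarse-estimation and within-tuple counting ideas are close in spirit to the paper's Lemmas~\ref{lem:coarse_main} and~\ref{lem:exact}, but without the hash-based sparsification with its inductive concentration proof and without estimate-weighted sampling, the plan does not yield the claimed $\cO_d\left(\frac{\log^{5d+5}n}{\eps^4}\right)$ bound.
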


Recently, it came to our notice that concurrently and independently, Dell et al.~\cite{DellLM19} obtained polylogarithmic query complexity for the hyperedge estimation problem using a similar query model. We will discuss their result shortly.

\subsection{Other related works}
\noindent Graph parameter estimation, where one wants to estimate the number of edges, triangles or small subgraphs in a graph, is a well-studied area of research in sub-linear algorithms. Feige \cite{Feige06}, and Goldreich and Ron \cite{GoldreichR08} gave algorithms to estimate the number of edges in a graph using degree, and degree and neighbour queries, respectively. Eden \etal~\cite{EdenLRS15} estimated the number of triangles in a graph using degree, neighbour and edge existence queries, and gave almost matching lower bound on the query complexity. This result was generalized for estimating the number of cliques of size $k$ in~\cite{EdenRS18}. Since the information revealed by degree, neighbour and edge existence queries is limited to the locality of the queried vertices, these queries are known as local queries \cite{G2017}. The subset queries, used in \cite{BeameHRRS18,Bhatta-abs-1808-00691, BhattaISAAC,DellLM19}, are examples of global queries, where a global query can reveal information of the graph at a much broader level.

\remove{
We study graph parameter estimation problems, for example estimating the number of edges or triangles in the graph in the query model of computation. In the query model we don't have direct access to the graph, instead we are allowed to query the graph. Graph parameter estimation using query access to the graph is a well studied problem. Feige \cite{Feige06}, and Goldreich and Ron \cite{GoldreichR08} gave algorithms to estimate the number of edges in a graph using degree, and degree and neighbour queries, respectively. Eden \etal~\cite{EdenLRS15} estimated the number of triangles in a graph using degree, neighbour and edge existence queries, and gave almost matching lower bound on the query complexity. Since the information revealed by degree, neighbour and edge existence queries is limited to the locality of the queried vertices, these queries are known as local queries \cite{G2017}. The subset queries, defined next, are examples of global queries, where a global query can reveal information of the graph at a much broader level.}

Goldreich and Ron \cite{GoldreichR08} solved the edge estimation problem in undirected graphs using $\tilde{O}(n/\sqrt{m})$ local queries. Dell and Lapinskas~\cite{DellL18} used the {\sc Independent set} ({\sc IS}) oracle to estimate the number of edges in bipartite graphs, where an {\sc IS} oracle takes a subset $S$ of the vertex set as input and outputs whether $S$ is an independent set or not. Their algorithm for edge estimation in bipartite graphs makes polylogarithmic {\sc IS} queries and $\Oh(n)$\remove{~\footnote{$n$ and $m$ denote the number of vertices and edges in the input graph.}} edge existence queries. Beame et al.~\cite{BeameHRRS18} extended the above result for the edge estimation problem in bipartite graphs to general graphs, and showed that the edge estimation problem in general graphs can be solved using $\tOh\left(\min \{\sqrt{m},n^2/m\}\right)$~\footnote{$\tOh(\cdot)$ hides a polylogarithmic term.} {\sc IS} queries. Recently, Chen et al.~\cite{CLW2019} improved this result to solve the edge estimation problem using only $\tOh\left(\min \{\sqrt{m},n/\sqrt{m}\}\right)$ {\sc IS} queries. \remove{Note that there is a large gap between the query complexities for the edge estimation problem in bipartite and general graphs, using {\sc IS} queries.}

\subsection{Setup and Notations}
\noindent We denote the sets $\{1,\ldots,n\}$ and $\{0,\ldots,n\}$ by $[n]$ and $[n^*]$, respectively. A hypergraph $\cH$ is a \emph{set system} $(U(\cH),\cF(\cH))$, where $U(\cH)$ denotes the set of vertices and $\cF(\cH)$ denotes the set of \remove{unordered} hyperedges. The set of vertices present in a hyperedge $F \in \cF(\cH)$ is denoted by $U(F)$ or simply $F$. A hypergraph $\cH$ is said to be $d$-uniform if all the hyperedges in $\cH$ consist of exactly $d$ vertices. The cardinality of the hyperedge set is $m(\cH)=\size{\cF(\cH)}$. For $u \in U(\cH)$, $\cF(u)$ denote the set of \remove{unordered} hyperedges that are incident on $u$. For $u \in U(\cH)$, the degree of $u$ in $\cH$, denoted as $\deg_{\cH}(u)=\size{\cF(u)}$ is the number of hyperedges incident on $u$. For a set $A$ and $a \in \N$, $A,\ldots,A$ ($a$ times) will be denoted as $A^{[a]}$. Let $\dsubset \subseteq U(\cH)$ be such that for every $i,j \in [d]$ either $A_i=A_j$ or $A_i \cap A_j =\emptyset$. 
This has a bearing on the \gpis{} oracle queries we make; either the sets we query with are disjoint, or are the same. Consider the following $d$-partite sub-hypergraph of $\cH$: $\left( U(\dsubset), \cF(\dsubset) \right)$ where the vertex set is $U(\dsubset) =\bigcup_{i=1}^d A_i$ and the hyperedge set is $\cF(\dsubset) = \left\{ \{ i_1, \ldots, i_d \}~|~ i_j \in A_j \right\}$; we will denote this $d$-partite sub-hypergraph of $\cH$ as $\cH(\dsubset)$. With this notation, $\cH\left( U^{[d]} \right)$ makes sense as a $d$-partite sub-hypergraph on a vertex set $U$.
The number of hyperedges in $\cH(\dsubset)$ is denoted by $m(\dsubset)$.
\paragraph{Ordered hyperedge} We will use the subscript $o$ to denote the set of ordered hyperedges. For example, $\cF_o(\cH)$ denotes the set of ordered hyperedges,  $m_o(\cH)$ denote $\size{\cF_o(\cH)}$, $\cF_o(u)$ denote the set of ordered hyperedges incident on $u$.  The ordered hyperedge set puts an order on the vertices such that $i$-th vertex of a hyperedge comes from $A_i$. Formally, $\cF_o(\dsubset)=\{F_o \in \cF_o(\cH):\mbox{the $i$-th vertex of $F_o$ is in $A_i, \forall i \in [d]$}\}$. The corresponding number for ordered hyperedges is $m_o(\dsubset)$. We have the following relation between $m(\dsubset)$ and $m_o(\dsubset)$. 
\begin{fact}
For $s \in [d]$, $m_o(A_1^{[a_1]},\ldots,A_s^{[a_s]})=m(A_1^{[a_1]},\ldots,A_s^{[a_s]})~ \times ~ \prod\limits_{i=1}^{s} a_i!$, where $a_i  
\in [d]$ and $\sum_{i=1}^s a_i = d$.
\label{obs:1}
\end{fact}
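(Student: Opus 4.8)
The plan is to exhibit an explicit $\left(\prod_{i=1}^{s} a_i!\right)$-to-one surjection from the set $\cF_o(A_1^{[a_1]},\ldots,A_s^{[a_s]})$ of ordered hyperedges onto the set $\cF(A_1^{[a_1]},\ldots,A_s^{[a_s]})$ of unordered hyperedges, after which the identity follows by counting fibers. First I would fix notation: expand the list $A_1^{[a_1]},\ldots,A_s^{[a_s]}$ into the $d$-tuple of sets $(B_1,\ldots,B_d)$ in which $B_j = A_i$ for every $j$ in the block $P_i := \{a_1+\cdots+a_{i-1}+1,\ldots,a_1+\cdots+a_i\}$, so that by definition an element of $\cF_o(A_1^{[a_1]},\ldots,A_s^{[a_s]})$ is an ordered hyperedge $F_o = (v_1,\ldots,v_d)$ of $\cH$ with $v_j \in B_j$ for all $j \in [d]$, while an element of $\cF(A_1^{[a_1]},\ldots,A_s^{[a_s]})$ is a hyperedge $F$ with $\size{F \cap A_i} = a_i$ for all $i \in [s]$ (there are no vertices of $F$ outside $\bigcup_i A_i$, since $\size{F} = d = \sum_i a_i$). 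Here I would invoke the standing hypothesis that for $i \neq j$ either $A_i = A_j$ or $A_i \cap A_j = \emptyset$, so that the \emph{distinct} sets among $A_1,\ldots,A_s$ are pairwise disjoint; this disjointness is exactly what makes the block structure rigid.

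Next I would define the forgetful map $\Phi$ sending $F_o = (v_1,\ldots,v_d)$ to its underlying vertex set $\{v_1,\ldots,v_d\}$. Well-definedness amounts to observing that if $F_o$ is an ordered hyperedge then $\{v_1,\ldots,v_d\} \in \cF(\cH)$, and that for each $i$ the $a_i$ coordinates indexed by $P_i$ all lie in $A_i$ while, by disjointness, no other coordinate lies in $A_i$; hence $\size{\Phi(F_o) \cap A_i} = a_i$ and $\Phi(F_o) \in \cF(A_1^{[a_1]},\ldots,A_s^{[a_s]})$. For surjectivity, given such an unordered hyperedge $F$, partition $F = \bigsqcup_{i=1}^{s} (F \cap A_i)$ with $\size{F \cap A_i} = a_i$, pick any bijection from $P_i$ onto $F \cap A_i$ for each $i$, and read off the resulting tuple; this is a preimage of $F$.

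Finally I would count the fiber $\Phi^{-1}(F)$ for a fixed $F \in \cF(A_1^{[a_1]},\ldots,A_s^{[a_s]})$. Any ordered hyperedge $F_o$ with $\Phi(F_o) = F$ must, for each block $P_i$, place the $a_i$ vertices of $F \cap A_i$ into the $a_i$ positions of $P_i$ — because those are the only positions whose associated set is $A_i$, and by disjointness a vertex of $F \cap A_i$ cannot occupy a position belonging to $A_{i'}$ for $i' \neq i$ — and within each block the arrangement is an arbitrary bijection $P_i \to F \cap A_i$. These choices are independent across the $s$ blocks and distinct choices yield distinct tuples, so $\size{\Phi^{-1}(F)} = \prod_{i=1}^{s} a_i!$. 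Summing over all $F$ gives $m_o(A_1^{[a_1]},\ldots,A_s^{[a_s]}) = \size{\cF_o(A_1^{[a_1]},\ldots,A_s^{[a_s]})} = \sum_{F} \size{\Phi^{-1}(F)} = m(A_1^{[a_1]},\ldots,A_s^{[a_s]}) \cdot \prod_{i=1}^{s} a_i!$. There is no serious obstacle; the only point needing care is checking that the block-to-$A_i$ assignment is forced — which is precisely where pairwise disjointness of the distinct $A_i$'s enters — together with the (immediate, since $\cH$ is $d$-uniform) observation that the $d$ vertices of a hyperedge are genuinely distinct, so that an ordered hyperedge is an honest tuple of distinct coordinates.
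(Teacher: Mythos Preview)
Your argument is correct: the forgetful map $\Phi$ together with the block decomposition $P_1,\ldots,P_s$ gives exactly the $\bigl(\prod_i a_i!\bigr)$-to-one correspondence needed, and you are right that the disjointness of the $A_i$'s is what forces each vertex of $F\cap A_i$ into the block $P_i$ and nowhere else. One small remark on hypotheses: in the paper the sets $A_1,\ldots,A_s$ in this notation are always taken pairwise disjoint (see the definition of \gonepis and the sparsification lemma), not merely ``equal or disjoint''; your fiber count $\prod_i a_i!$ relies on this, since if two of the $A_i$ coincided the correct factor would collapse to a single factorial on the merged block. Your proof implicitly uses exactly the right hypothesis, so this is only a matter of stating it cleanly.

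As for comparison with the paper: the paper records this as a Fact with no proof at all, treating it as an immediate counting observation. Your write-up is therefore not an alternative route but a complete spelling-out of the intended (and only natural) argument.
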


For a set $\cal P$, ``$\cal P$ is \colored with [n]'' means that elements of $\cal P$ is assigned a color out of $[n]$ colors independently and uniformly at random. Let $\E[X]$ and $\V[X]$ denote the expectation and variance of a random variable $X$. For an event $\cE$, $\overline{\cE}$ denotes the complement of $\cE$. Throughout the paper, the statement that ``event $\cE$ occurs with high probability'' is equivalent to $\pr(\cE) \geq 1-\frac{1}{n^c}$, where $c$ is an absolute constant. The statement ``$a$ is an $1 \pm \eps$ multiplicative approximation of $b$'' means $\size{b-a} \leq \eps \cdot b$. For $x \in \R$, $\exp(x)$ denotes the standard exponential function $e^x$. We denote $[k]\times \cdots \times [k]\mbox{~($p$ times)}$ using $[k]^{p}$, where $p \in \N$.\remove{ For a function $h:[k]^{d-1}\rightarrow [k]$, $\Lambda(h)=\{(c_1,\ldots,c_{d-1},c_d) ~:~c_1,\ldots,c_{d-1},c_d \in [k]~\mbox{and}~c_d=h(c_1,\ldots,c_{d-1})\}$. $\Lambda'(h)$ be the maximal subset of $\Lambda(h)$ such that no two tuples in $\Lambda'(h)$ are \emph{isomorphic}. $\Pi(h)=\{(c_{\pi(1)},\ldots,c_{\pi(d)})~:~(c_1,\ldots,c_d) \in \Lambda'(h)~\mbox{and}~\pi:[d]\rightarrow[d]~\mbox{is a bijection}\}$.} For us, $d$ is a constant.  $\Oh_d(\cdot)$ denotes the standard $\Oh(\cdot)$ where the constant depends on $d$. By polylogarithmic, we mean {$\Oh_d\left( \frac{(\log n)^{\Oh_{d}(1)}}{{\eps}^{\Oh(1)}}\right)$. The notation $\tOh_d(\cdot)$ hides a polylogarithmic term in $\Oh_d(\cdot)$.}

\remove{
As stated earlier, a pertinent question remains -- could we have estimated hyperedges using reasonable number of local queries? We establish the following lower bounds on \hest, thereby, establishing  the need for subset type 
of queries. 
Theorem~\ref{theo:main_local} is about the algorithm for \hest problem using degree and neighbor queries; and Theorem~\ref{theo:lower_local} gives the matching lower bound even if we use core and adjacency query along with degree and neighbor query. Both the theorems stated below are proved in Appendix~\ref{algo:edge_local}. 
\begin{theo}
\label{theo:main_local}
Let $\cH$ be a graph with $\size{U(\cH)}=n$. For any
$\eps >0$, \hest can be solved using $\widetilde{\cO}_d \left( \frac{n}{(m(\cH))^{1/d}} \right)$  queries with high probability, where a query is either a degree query or a neighbor query.
\end{theo}

\begin{theo}
\label{theo:lower_local}
Any multiplicative approximation algorithm that estimates the number of hyperedges in a hypergraph $\cH$ with constant probability, requires $\Omega_d\left( \frac{n}{(m(\cH))^{1/d}}\right)$ queries, where allowed queries are degree, neighbor, adjacency and core query. 
\end{theo}
}

\subsection{Paper organization}
\label{ssec:paperorg}
\noindent
We define in Section~\ref{sec:prelim_actual} two other query oracles, $\mbox{{\sc GPIS}}_1$ and $\mbox{{\sc GPIS}}_2$ that can be simulated by using polylogarithmic \gpis queries. The role of these two oracles is mostly expository -- they help us to describe our algorithms in a neater way. Section~\ref{sec:overview} gives a broad overview of our query algorithm that involves exact estimation, sparsification, coarse estimation and sampling. Section~\ref{sec:compare} contextualizes our work vis-a-vis recent works~\cite{BeameHRRS18,Bhatta-abs-1808-00691,BhattaISAAC,DellLM19}.
The novel contribution of this work is sparsification which is given in detail in Section~\ref{sec:sparse}. Sections~\ref{sec:exact} and \ref{sec:coarse} consider the proofs for exact and coarse estimation, respectively. The algorithm and its proof of correctness are discussed in Sections~\ref{sec:algo} and \ref{sec:correct}, respectively. The equivalence proofs of the \gpis{} oracle and its variants, that were introduced in Section~\ref{sec:prelim_actual}, are discussed in Appendix~\ref{append:gpis}.  Some useful probability results are given in Appendix~\ref{sec:prelim}. As this paper talks of different kinds of oracles, we have put all the definitions at one place in Appendix~\ref{sec:oracle-def} for ease of reference.

 \remove{ Note that, here, we do not assume anything on the intersection patten of hyperedges something like that is considered Bhattacharya et al.~\cite{Bhatta-abs-1808-00691, BhattaISAAC}. 
\comments{Make a full paragraph on independent set/BIS focusing on all recent developments.}}

\remove{
Starting from {\sc Edge Estimation}~\cite{Feige06, GoldreichR08}, counting different structures like 
triangles~\cite{EdenLRS15}, cliques~\cite{EdenRS18}, {cycles~\cite{CzumajGRSSS14}} and stars~\cite{GonenRS11}, etc.~in graphs using different query models 
like \emph{local queries} (degree and neighbor queries) or \emph{subset queries} have been an intense area of focus~\cite{Stockmeyer85, RonT16, BeameHRRS18}. The primary aim of this line of research is to estimate as difficult a substructure as possible of the graph with as simple a query model/oracle of the graph as possible and
 more often than not this effort hits a roadblock of lower bounds. As an example, the number of edges in a 
 graph can be estimated by using $\tilde{\mathcal{O}}\left(\frac{n}{\sqrt{m}}\right)$ local queries and $\Omega \left(\frac{n}{\sqrt{m}}\right)$ queries are necessary~\cite{GoldreichR08}. To get around this lower bound, Beame et al.~\cite{BeameHRRS18} introduced the \bis query model and estimated the number of edges using polylogarithmic {\sc BIS} queries. In a farther generalization, \emph{triangle estimation} with polylogarithmic queries in a graph using a subset query named \emph{Tripartite Independent Set} query was studied in~\cite{Bhatta-abs-1808-00691}. The \gpis{} subset query that we use in this paper was earlier used to design parameterized query complexities for the hitting set problem~\cite{Bishnu18}. 
 
Graph parameter estimation using subset queries is an interesting and relevant area of research. We extend this research direction of parameter estimation problems using subset queries to \hest. Our algorithm can be seen as a natural extension of our earlier work on triangle estimation \cite{Bhatta-abs-1808-00691}. \remove{\comments{Very recently, we came to know of the work of Dell et al.~\cite{DellLM-arxiv-2019} who independently obtained similar results on hyperedge estimation using a query model similar to what we use in this paper}. {\bf We should compare our results with theirs and highlight our strong points}. }}

\section{Preliminaries: \gpis oracle and its variants} \label{sec:prelim_actual}

\noindent Note that the \gpis query oracle takes as input $d$ pairwise disjoint subsets of vertices. We now define two related query oracles \gpis{}$ _1$ and \gpis{}$_2$ that remove the disjointness criteria on the input. We show that both these query oracles can be simulated by making polylogarithmic queries to the \gpis oracle with high probability. \gpis{}$_1$ and \gpis{}$_2$ oracles will be used in the description of the algorithm for ease of exposition.

\begin{description}
	\item[($\mbox{{\sc GPIS}}_1$)] Given $s$ pairwise disjoint subsets of vertices $A_1,\ldots,A_s \subseteq U(\cH)$ of a hypergraph $\cH$ and $a_1,\ldots,a_s \in [d]$ such that $\sum_{i=1}^{s}a_i =d$, \gonepis query oracle on input $A_1^{[a_1]},A_2^{[a_2]},\cdots,A_s^{[a_s]}$ answers {\sc Yes} if and only if $m(\ssubset) \neq 0$.

	\item[($\mbox{{\sc GPIS}}_2$)] Given any $d$ subsets of vertices $\dsubset \subseteq U(\cH)$ of a hypergraph $\cH$, \gtwopis query oracle on input $A_1,\ldots,A_d$ answers  {\sc Yes}  if and only if $m(\dsubset) \neq 0$.
\end{description}

Observe that the \gtwopis query oracle is the same as the \gpis query oracle without the requirement that the input sets are disjoint. For the \gonepis query oracle, multiple repetitions of the same set is allowed in the input. It is obvious that a \gpis query can be simulated by a \gonepis or \gtwopis query oracle. Using the following observations, whose proofs are given in Appendix~\ref{append:gpis}, we show how a \gonepis or a \gtwopis query can be simulated by polylogarithmic \gpis queries. 

\begin{obs} \label{obs:queryoracles}
\begin{itemize}
	\item[(i)] A \gonepis query can be simulated using polylogarithmic \gpis queries with high probability.
	\item[(ii)] A \gtwopis query can be simulated using $2^{\Oh(d^2)}$ \gonepis queries.
	\item[(iii)] A \gtwopis query can be simulated using polylogarithmic \gpis queries with high probability.
\end{itemize}
\end{obs}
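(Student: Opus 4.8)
The plan is to establish the three parts in the stated order, since (ii) and (iii) build on (i), and (iii) follows by combining (i) and (ii). For part (i), I would simulate a \gonepis query on input $A_1^{[a_1]},\ldots,A_s^{[a_s]}$ by replacing each repeated block $A_i^{[a_i]}$ with $a_i$ disjoint ``copies'' of $A_i$, obtained by coloring the elements of $A_i$ with $[a_i]$ colors independently and uniformly at random and taking the resulting color classes $A_i^{(1)},\ldots,A_i^{(a_i)}$. Running a \gpis query on the union of all these color classes (now $\sum_i a_i = d$ genuinely disjoint sets) returns {\sc Yes} only if there is a hyperedge with one vertex in each class, which certainly implies $m(A_1^{[a_1]},\ldots,A_s^{[a_s]}) \neq 0$; conversely, if $m(A_1^{[a_1]},\ldots,A_s^{[a_s]}) \neq 0$, then fixing one witnessing hyperedge $F$, the $a_i$ vertices of $F$ lying in $A_i$ receive distinct colors with probability $a_i!/a_i^{a_i} \geq d^{-d}$, and these events are independent across $i$, so a single random coloring succeeds with probability at least $d^{-d^2}$, a constant. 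Repeating the coloring $\Theta(d^{d^2}\log n)$ times and answering {\sc Yes} if any trial's \gpis query says {\sc Yes} boosts the success probability to $1 - n^{-\Omega(1)}$; since there are only polynomially many \gonepis queries issued over the whole algorithm, a union bound keeps all of them correct with high probability. That is polylogarithmically many \gpis queries per \gonepis query (with $d$ constant).

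For part (ii), I would remove the disjointness requirement by a deterministic case analysis. Given arbitrary $A_1,\ldots,A_d$, consider their (at most $2^d$) nonempty Venn regions; a hyperedge $F$ counted by $m(A_1,\ldots,A_d)$ has its $d$ vertices $v_1,\ldots,v_d$ with $v_j$ lying in some Venn region contained in $A_j$. Enumerate all functions assigning each index $j\in[d]$ a Venn region $R_j$ with $R_j \subseteq A_j$, and for each such assignment group the indices by the region they are sent to; this turns the query into a \gonepis query on the distinct regions with the appropriate multiplicities $a_i$, and $m(A_1,\ldots,A_d)\neq 0$ iff at least one of these \gonepis queries returns {\sc Yes}. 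The number of assignments is at most $(2^d)^d = 2^{d^2}$, and each contributes one \gonepis query, giving the claimed $2^{\Oh(d^2)}$ bound. (A minor subtlety is that distinct Venn regions are automatically disjoint, so the \gonepis hypothesis that the underlying sets are pairwise disjoint is satisfied.) Part (iii) is then immediate: compose (ii) and (i), so a \gtwopis query costs $2^{\Oh(d^2)}$ \gonepis queries, each of which costs polylogarithmically many \gpis queries, for a total that is still polylogarithmic with high probability after a union bound over all trials.

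The main obstacle is getting the probability bookkeeping in part~(i) right: I need the coloring to succeed simultaneously on all parts of a single witnessing hyperedge, so the per-trial success probability degrades like $d^{-d^2}$ rather than $d^{-d}$, and I must make sure the $\Theta(d^{d^2}\log n)$ repetitions are enough to absorb a union bound over the (polynomially many) \gonepis queries the master algorithm makes — here it is important that the algorithm's total query count is bounded by a fixed polynomial in $n$ before this simulation is invoked, so the argument is not circular. A second, more cosmetic point to handle carefully is the one-sided nature of the error: the simulation can only ever turn a true {\sc Yes} into a false {\sc No} (never the reverse), so taking the disjunction over trials is the correct amplification, and this one-sidedness should be stated explicitly since later parts of the paper rely on it.
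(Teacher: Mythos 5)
Your proposal is correct and follows essentially the same route as the paper: part (i) by randomly splitting each $A_i$ into $a_i$ color classes and issuing a \gpis query (with one-sided error boosted by $\Oh_d(\log n)$ repetitions), part (ii) by decomposing each $A_i$ into its intersection-pattern (Venn) regions and enumerating the $2^{\Oh(d^2)}$ resulting \gonepis queries, and part (iii) by composing the two. The only difference is cosmetic: your per-trial success bound $d^{-d^2}$ is looser than the paper's $\prod_i a_i^{-a_i} \geq d^{-d}$, which is immaterial since $d$ is a constant.
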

\remove{
\begin{proof}
\begin{itemize}
	\item[(i)] Let the input of \gonepis query oracle be $\ssubset$ such that $a_i \in [d]~\forall i \in [s]$ and $\sum\limits_{i=1}^s a_i =d$. We partition each $A_i$ randomly into $a_i$ parts $B_i^j$ for $j \in [a_i]$. We make a \gpis query with input $B_1^{1},\ldots,B_1^{a_1},\ldots, B_s^{1},\ldots,B_s^{a_s}$. Note that 
$$
\cF(B_1^{1},\ldots,B_1^{a_1},\ldots, B_s^{1},\ldots,B_s^{a_s}) \subseteq \cF(\ssubset).
$$

So, if \gonepis outputs {\sc `No'} to query $\ssubset$, then the above \gpis query will also report {\sc `No'} as its answer. If \gonepis answers {\sc `Yes'}, then consider a particular hyperedge $F \in \cF(\ssubset)$. Observe that 
\begin{eqnarray*}
&& \pr(\mbox{\gpis oracle answers {\sc `Yes'}})\\
&\geq& \pr(\mbox{$F$ is present in $\cF(B_1^{1},\ldots,B_1^{a_1}, \ldots \ldots, B_s^{1},\ldots,B_s^{a_s})
$})\\
	&\geq& \prod\limits_{i=1}^s \frac{1}{a_i ^{a_i}} \\ 
	&\geq& \prod\limits_{i=1}^s \frac{1}{d ^{a_i}}~~~~~~~~~~(\because a_i \leq d~\mbox{for all}~i\in [d])\\
	 &=& \frac{1}{d^{d}}~~~~~~~~~~(\because \sum\limits_{i=1}^s a_i =d)
\end{eqnarray*}

We can boost up the success probability arbitrarily by repeating the above procedure polylogarithmic many times.
  
	\item[(ii)] Let the input to \gtwopis query oracle be $\dsubset$. Let us partition each set $A_i$ into at most $2^{d-1}-1$ subsets depending on $A_i$'s intersection with $A_j$'s for $j \neq i$. Let $\cP_i$ denote the corresponding partition of $A_i$, $i \in [d]$. Observe that for any $i \neq j$, if we take any $B_i \in \cP_i$ and $B_j \in \cP_j$, then either $B_i=B_j$ or $B_i \cap B_j = \emptyset$.
  
For each $(B_1,\ldots,B_d) \in \cP_1 \times \ldots \times \cP_d$, we make a \gonepis query with input $(B_1,\ldots,B_d)$. Total number of such \gonepis queries is at most $2^{\Oh(d^2)}$, and we report {\sc `Yes'} to the \gtwopis query if and only if at least one \gonepis query, out of the $2^{\Oh(d^2)}$ queries, reports {\sc `Yes'}.
 
 \item[(iii)] It follows from (i) and (ii).
\end{itemize}

\end{proof}
}
To prove Theorem~\ref{theo:main_restate}, we first consider the following lemma. This lemma is the central result of the paper and from it, the main theorem (Theorem~\ref{theo:main_restate}) follows. 
\begin{lem}
\label{lem:prob1}
Let $\cH=(U(\cH),\cF(\cH))$ be a hypergraph with $n$ vertices, i.e., $\size{U(\cH)}=n$. For any~$\eps > \lbeps$, \hest can be solved with probability {at least} $1-\frac{1}{n^{4d}}$ and using $\Oh\left( \eps^{-4} \log^{5d+4} n \right)$ queries, where each query is either a \gonepis query or a \gtwopis query.
\end{lem}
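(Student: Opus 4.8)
The plan is to build the estimator in four conceptual layers, matching the paper's outline: (i) an \emph{exact estimation} routine that, when the relevant sub-hypergraph has only polylogarithmically many hyperedges, recovers the count exactly using \gpis-type queries; (ii) a \emph{sparsification} step that randomly colours the vertex set with $[k]$ colours (for $k$ a suitable polylog) and argues that within each colour class tuple the number of surviving ordered hyperedges drops geometrically, so that after enough colouring rounds every ``cell'' is sparse enough for the exact routine; (iii) a \emph{coarse estimation} step that, given only a promise that $m(\cH)$ lies in a dyadic range $[2^i, 2^{i+1})$, certifies or refutes membership with few queries; and (iv) a \emph{sampling / geometric-sum} step that stitches the per-colour-class exact counts back into a $(1\pm\eps)$ estimate of $m(\cH)$, with the $\eps^{-4}$ coming from a Chebyshev/median-of-means argument over the random colourings. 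Throughout, one works with \emph{ordered} hyperedges and uses Fact~\ref{obs:1} to pass back to unordered counts, and one uses Observation~\ref{obs:queryoracles} freely so that \gonepis and \gtwopis queries are legitimate primitives (each simulated by polylog \gpis queries, which is where the extra $\log n$ factor between Lemma~\ref{lem:prob1} and Theorem~\ref{theo:main_restate} is absorbed).

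Concretely, first I would fix a dyadic guess $2^i$ for $m(\cH)$ and run the whole pipeline for each of the $\Oh(\log n)$ guesses, using the coarse estimator to discard wrong guesses. For a fixed guess, colour $U(\cH)$ with $[k]$ colours independently and uniformly; this induces, for each $d$-tuple of colour classes $(V_{c_1},\dots,V_{c_d})$, a sub-hypergraph whose expected ordered-hyperedge count is $m_o(\cH)/k^{\,d-1}$-ish after accounting for the constraint that the $d$ vertices land in the $d$ prescribed classes. Iterating the colouring $\Oh(\log n)$ times (a recursion on a ``level'' parameter $\ell$, as hinted by the $I_\ell$ and $\hat\beta_{i,\lambda}$ macros) shrinks each cell until its hyperedge count is $\Oh(\mathrm{polylog}\, n)$, at which point the exact-estimation subroutine of Section~\ref{sec:exact} returns its value exactly; summing over the surviving cells and rescaling by the product of branching factors yields an unbiased (or nearly unbiased) estimator $\widehat m$. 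The variance is controlled because, conditioned on being at the right dyadic scale, each hyperedge contributes to few cells and the pairwise correlations between distinct hyperedges landing in the same cell are small — this is precisely the kind of second-moment computation that forces the $\eps^{-4}$ (one $\eps^{-2}$ from Chebyshev, another from boosting across $\Oh(\log n / \eps^{?})$ repeats) and the $\log^{5d+4} n$ (roughly $\log^{d}$ per colouring level, $\Oh(\log n)$ levels, times the polylog cost of exact estimation and of simulating the oracle variants), and the failure probability $n^{-4d}$ comes from a union bound over all scales, all colouring rounds, and all cells after standard Chernoff concentration on the cell sizes.

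The hard part, and the genuinely novel contribution flagged in the introduction, is the \emph{sparsification analysis}: one must show that random colouring really does reduce the per-cell ordered-hyperedge count by the expected factor \emph{with high probability and simultaneously for all cells}, despite the fact that two hyperedges in a hypergraph can share an arbitrary subset of vertices (unlike graph edges, which share at most one vertex). The intersection pattern means the indicator that a hyperedge survives into a given cell is correlated across hyperedges in a way that depends on the whole sub-hypergraph structure, so the second moment does not factor cleanly; the fix is an induction on $d$ (mentioned explicitly after Definition~\ref{def:gpis}) that peels off one coordinate at a time, treating the ``shadow'' hypergraph on the remaining $d-1$ coordinates and bounding the contribution of high-degree vertices separately from low-degree ones — this is where the $c_\lambda \hat\beta_{i,\lambda}$ correction terms in \texttt{edgeestlocal} enter. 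Once sparsification is established with the stated parameters, combining it with the exact and coarse estimators and the median-of-means wrapper is routine: accumulate the query budgets ($\Oh(\eps^{-4})$ repetitions $\times$ $\Oh(\log n)$ scales $\times$ $\Oh(\log n)$ levels $\times$ $\mathrm{polylog}$ per exact call $\times$ $\mathrm{polylog}$ per \gonepis/\gtwopis simulation) to get $\Oh(\eps^{-4}\log^{5d+4} n)$, and union-bound the failure events to get probability at least $1 - n^{-4d}$, which is exactly the statement of Lemma~\ref{lem:prob1}.
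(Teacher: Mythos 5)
There is a genuine gap, and it sits in exactly the two places where the paper does real work. First, the assembly: you iterate sparsification $\Oh(\log n)$ times and then exactly count ``all surviving cells,'' rescaling by the product of branching factors. But each sparsification round multiplies the number of live cells by roughly $k^{d-1}$ (in the paper, by up to $4^d$ per round; cf.\ Observation~\ref{obs:num_tuples}), so after $\Theta(\log n)$ rounds you have $n^{\Theta(d)}$ cells, and merely touching each one with a query already destroys the polylogarithmic bound. The paper prevents this blow-up by alternating sparsification with a \emph{per-tuple} coarse estimate (Lemma~\ref{lem:coarse_main}) whose sole purpose is to feed importance sampling (Lemma~\ref{lem:importance-app}), which replaces the $r>\cN$ tuples in $\cD$ by $r'\leq \cN=\kappa_d\log^{4d}n/\eps^2$ reweighted tuples while changing the weighted count by at most a factor $1\pm\lambda$ with $\lambda=\eps/(4d\log n)$. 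Your proposal repurposes coarse estimation to certify dyadic guesses of the global $m(\cH)$ and replaces the subsampling by a median-of-means wrapper; neither controls the number of live cells, and the error accounting also changes: in the paper the $\eps^{-4}$ comes from the product of the exact-counting threshold $\tau=\Theta_d(\log^{d+2}n/\eps^2)$ and the tuple bound $\cN=\Theta_d(\log^{4d}n/\eps^2)$, with the per-iteration loss $\lambda$ compounded over $\Oh(d\log n)$ iterations (Observations~\ref{lem:iter1} and~\ref{lem:iter2}), not from a Chebyshev/median-of-means argument.

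Second, the sparsification concentration itself is asserted rather than proved, and the proof sketch you give points in the wrong direction. A second-moment computation with a high-/low-degree split is precisely the approach that forces the $\Delta$ dependence in Bhattacharya et al.: when many hyperedges share $d-1$ vertices, their survival indicators are strongly correlated and the variance inherits a factor of $\Delta$, which is exactly what this paper must avoid. (The quantities $I_\ell$, $c_\lambda\hat\beta_{i,\lambda}$ from the \texttt{edgeestlocal} macro that you invoke belong to a removed appendix on degree/neighbour queries and play no role in the sparsification analysis.) The paper's Lemma~\ref{lem:sparse} instead hashes the \emph{entire} colour tuple through $h_d:[k]^d\to\{0,1\}$ with $\Pr[h_d(\mathbf{a})=1]=1/k$, and proves concentration by the method of averaged bounded differences (Lemma~\ref{lem:dp}): the effect of recolouring a single vertex $t$ is bounded, with high probability, by applying the inductive hypothesis for $d-1$ to the link hypergraph of $t$ (Claim~\ref{cl:inter_sparse}), with Hoeffding's inequality at the base case $d=1$; this martingale-plus-induction step is what eliminates any dependence on the hyperedge intersection pattern. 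Without this argument (or a worked-out substitute), the central claim of your plan --- simultaneous, $\Delta$-free concentration for all cells --- remains unproven, and that is the novel content of the lemma.
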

Assuming Lemma~\ref{lem:prob1} to be true, we now prove Theorem~\ref{theo:main_restate}.
\begin{proof}[Proof of Theorem~\ref{theo:main_restate} ]
If $\eps \leq \lbeps$, we make a \gpis{} query with $(\{a_1\},\ldots,\{a_d\})$ for all distinct $a_1,\ldots,$ $a_d \in U(\cH)=U$ and enumerate by brute force the exact value of $m_o(\cH)$. So, we make at most $n^d=\Oh_d\left(\eps^{-4} \log^{5d+5} n \right)$ many \gpis queries as $\eps \leq \lbeps$.
 If $\eps > \lbeps$, we use the algorithm corresponding to Lemma~\ref{lem:prob1}, where each query is either a \gonepis query or a \gtwopis query. However, by Observation~\ref{obs:queryoracles}, each \gonepis and \gtwopis query can be simulated by $\Oh_d(\log n)$ many \gpis queries with high probability. So, we can replace each step of the algorithm, where we make either \gonepis or \gtwopis query, by $\Oh_d(\log n)$ many \gpis queries. Hence, we are done with the proof of Theorem~\ref{theo:main_restate}.
\end{proof}
In the rest of the paper, we mainly focus on proving Lemma~\ref{lem:prob1}.

\section{Technical Overview} \label{sec:overview}
\noindent
We briefly describe the overview of our work and put our work in context with recent works. 
\subsection{The context of our work}
\noindent Beame \etal~\cite{BeameHRRS18} developed a framework to estimate the number of edges in a graph using \bis queries. This framework involves subroutines for sparsifying a graph into a number of subgraphs each with reduced number of edges, and exactly or approximately counting the number of edges in these subgraphs. Sparsification constitutes the main building block of this framework. The sparsification routine of Beame \etal~randomly colors the vertices of a graph with $2k$ colors. Let $A_i$ denote the set of vertices colored with color $i$. Beame \etal~argued that the sum of the number of edges between $A_i$ and $A_{k+i}$ for $1\leq i\leq k$ approximates the total number of edges in the graph within a constant factor with high probability. Therefore, the original problem reduces to the problem of counting the number of edges in bipartite subgraphs. Here, we have $k$ bipartite subgraphs and it suffices to count the number of edges in these bipartite subgraphs. With this method, they obtained a query complexity of $O(\eps^{-4} \log^{14} n)$ with high probability.

Bhattacharya \etal~\cite{BhattaISAAC} extended this framework to estimate the number of triangles in a graph using \tis queries, where given three disjoint subsets of vertices $A, \, B, \, C$, a \tis query with inputs $A, \, B, \, C$ answers whether there exists a triangle with one endpoint in each of $A,B,C$. The sparsification routine now requires to color the vertices randomly using $3k$ colors and counts the number of properly colored triangles where a properly colored triangle has one endpoint each in $A_i$, $A_{k+i}$, and $A_{2k+i}$ for any $1\leq i\leq k$. Unlike the scenario in Beame \etal~where two edges intersect in at most one vertex, here two triangles can share an edge. Therefore, the random variables used to estimate the number of {\it properly colored} triangles are not independent. Bhattacharya \etal~estimated the number of triangles assuming that the number of triangles incident on any edge is bounded by a parameter $\Delta$. In this way, they obtained a query complexity of $O(\eps^{-4} \Delta^{2} \log^{18}n)$.

In this paper, we fully generalize the frameworks of Stockmeyer~\cite{Stockmeyer83,Stockmeyer85}, Ron and Tsur~\cite{RonT16}, Beame \etal~\cite{BeameHRRS18} and Bhattacharya~\etal~\cite{BhattaISAAC} to estimate the number of hyperedges in a $d$-uniform hypergraph using \gpis queries. \remove{We do not have the dependency problem for estimating the number of hyperedges in a hypergraph because existence of one hyperedge does not imply anything about the existence of another hyperedge with a subset of its vertices.}

\remove{\comments{In subset size estimation problem using subset query oracle, we want to estimate the size of a subset $S\subseteq T$ using an oracle which tells us, given a query set $Q\subseteq T$, whether $Q \cap T$ is empty or not. Viewed differently, this is about estimating an unknown set $S$ by looking at its intersection pattern with a known set $T$.}

The hyperedge estimation problem can be thought of as a generalized version of the subset size estimation problem. We show using an induction on $d$ that \hest using \gpis queries in a $d$-uniform hypergraph generalizes from the subset size estimation problem.}

\subsection{Our work in a nutshell}
\begin{figure}[h!]
	\centering
	\includegraphics[width=\linewidth]{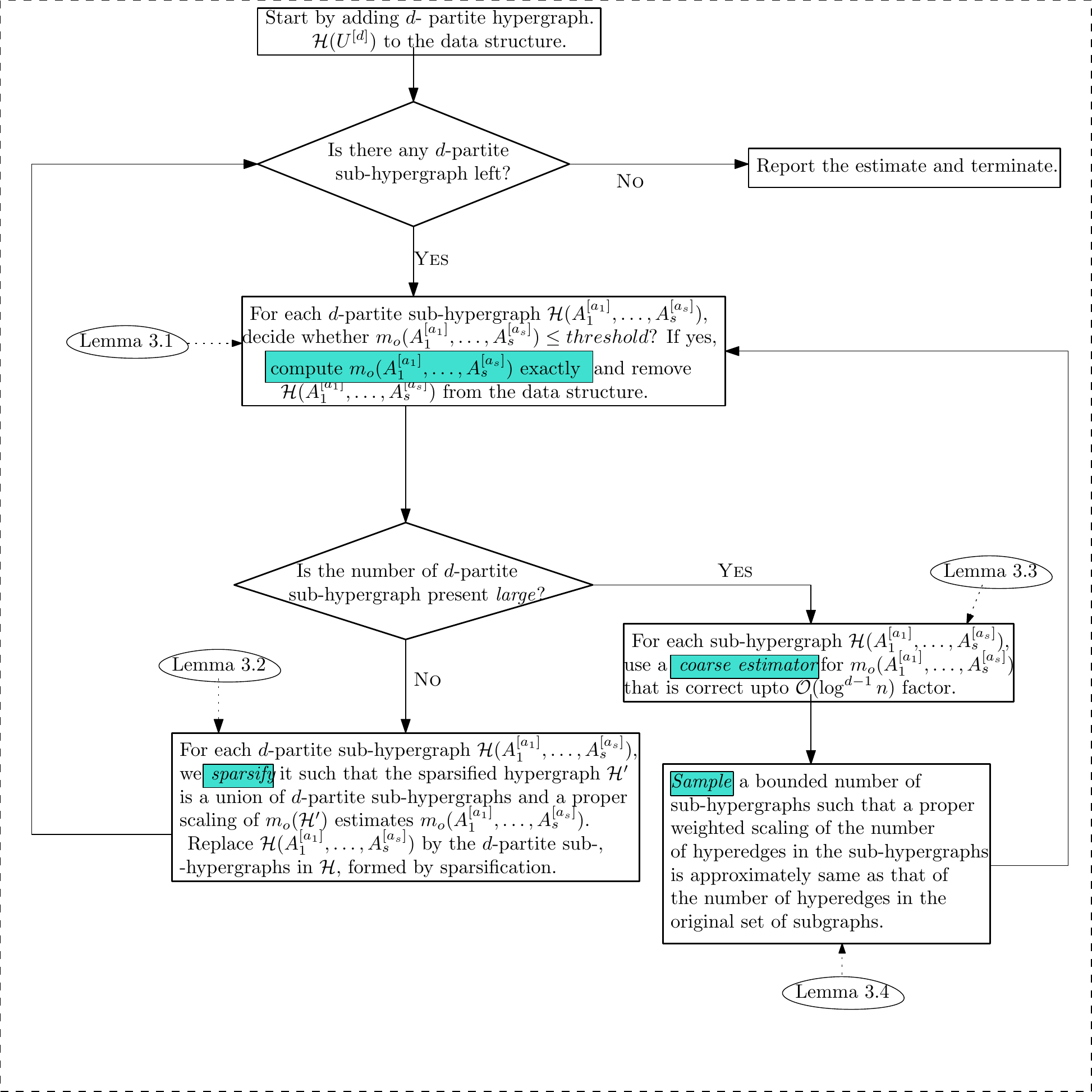}
	\caption{Flow chart of the algorithm. The highlighted texts indicate the basic building blocks of the algorithm. We also indicate the corresponding lemmas that support the building blocks. \remove{Gopi: Correct the lemma numbers later.}}
	\label{fig:flowchart}
\end{figure}

\paragraph{The algorithmic framework} In Figure~\ref{fig:flowchart}, we give a flowchart of the algorithm. Our algorithm begins by adding the $d$-partite hypergraph $\cH(U^{[d]})$ in a suitable data structure, let us call it $\cD$. For each hypergraph in $\cD$, using the \emph{exact estimation} process, we \emph{exactly count} the number of hyperedges in a hypergraph if the number of hyperedges is less than a \emph{threshold} $\tau$ and remove the corresponding hypergraph from $\cal D$. For hypergraphs in which the number of hyperedges is more than $\tau$, we can not do \emph{exact estimation}. So, we resort to \emph{sparsification} if the number of  hypergraphs left in $\cD$ is not \emph{large}, i.e., below a \emph{threshold} $\cN$. Hypergraph sparsification breaks a $d$-partite hypergraph into a disjoint union of $d$-partite sub-hypergraphs. As sparsification goes on in iterations, it may populate the data structure $\cD$ with \emph{significantly many} hypergraphs. In that case, we do a \emph{coarse estimation} followed by \emph{sampling} to have a reasonable number of sub-hypergraphs and then go back to the exact estimation step and continue in a loop. \gonepis and \gtwopis queries will be used for exact and coarse estimations, respectively; their logarithmic equivalence with \gpis will prove the final result. For ease of analysis, we consider {\it ordered} hyperedges. Fact \ref{obs:1} relates $m(A_1,\ldots,A_d)$ and $m_o(A_1,\ldots,A_d)$, the number of unordered and ordered hyperedges, respectively. 
\remove{
Since \gonepis and \gtwopis queries can be simulated using at most logarithmic \gpis queries, in the description of our algorithm, we use \gonepis and \gtwopis queries interchangeably, the usage of the particular query would be clear from the context. For ease of analysis we consider {\it ordered} hyperedges. The relation between the number of unordered hyperedges $m(A_1,\ldots,A_d)$ and the number of ordered hyperedges $m_o(A_1,\ldots,A_d)$ in sets $A_1,\ldots,A_d$ is given as Fact \ref{obs:1}.
}

\remove{
\paragraph{Exact estimation.} In this step, we look at a $d$-partite sub-hypergraph and decide whether $m_o(U^{[d]})$, the number of ordered hyperedges, is larger or smaller than a threshold $\tau$~\footnote{Threshold $\tau$ will be fixed later in Section~\ref{sec:exact}.}, and if $m_o(U^{[d]}) \leq \tau$, compute the exact value of $m_o(U^{[d]})$ using Lemma~\ref{lem:exact}, whose proof is in Section~\ref{sec:exact}. If $m_o(U^{[d]})$ is greater than $\tau$ \complain{and the number of hypergraphs in $\cD$ is below a threshold}, we move to the sparsification step, else to the coarse estimation step.  \comments{Gopi:Can we change this paragraph a bit as follows?}
}

\paragraph{Exact estimation.} In this step, we look at a $d$-partite sub-hypergraph $\cH(\ssubset)$ and decide whether $m_o(\ssubset)$, the number of ordered hyperedges, is larger or smaller than a threshold $\tau$~\footnote{Threshold $\tau$ will be fixed later in Section~\ref{sec:exact}.}, and if $m_o(\ssubset) \leq \tau$, compute the exact value of $m_o(\ssubset)$ using Lemma~\ref{lem:exact}, whose proof is in Section~\ref{sec:exact}, and then delete it from $\cD$. If the number of hypergraphs in $\cD$ is below the threshold $\cN$, with the number of ordered hyperedges in it more than $\tau$, we move to the sparsification step, else to the coarse estimation step.

\begin{lem} [Exact Estimation] \label{lem:exact} There exists a deterministic algorithm $\cAe$ that takes as input -- a $d$-uniform hypergraph $\cH$, constants $a_1,\ldots,a_s \in [d]$ such that $\sum_{i=1}^s a_i =d$ where $s \in [d]$, pairwise disjoint subsets $A_1,\ldots,A_s$ of $U(\cH)$, and a threshold parameter $\tau \in \N$ -- and decides whether $m_o(\ssubset) \leq \tau$ using $\cO_d(\tau \log n)$ \gonepis queries. Moreover, $\cAe$ finds the exact value of $m_o(\ssubset)$ when $m_o(\ssubset) \leq \tau$. \end{lem}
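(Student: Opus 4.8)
By Fact~\ref{obs:1} it is enough to compute, up to the threshold, the number $m := m(\ssubset)$ of \emph{unordered} hyperedges having exactly $a_i$ vertices in $A_i$: once we know $m$ exactly — which we can afford whenever $m\le\tau$, and in particular whenever $m_o(\ssubset)\le\tau$ since $m\le m_o(\ssubset)$ — we recover $m_o(\ssubset)=m\cdot\prod_{i=1}^{s}a_i!$ and decide whether $m_o(\ssubset)\le\tau$; and if we ever learn that $m>\tau$ we already know $m_o(\ssubset)\ge m>\tau$. So fix a linear order on $U(\cH)$ and build a deterministic recursive procedure $\textsc{Count}(B_1^{[b_1]},\dots,B_t^{[b_t]})$ that operates on \emph{configurations} — tuples of pairwise disjoint sets with positive multiplicities summing to $d$ — and returns $\min\{\,m(B_1^{[b_1]},\dots,B_t^{[b_t]}),\ \tau+1\,\}$.

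\textbf{The procedure.} First issue one \gonepis query on the current configuration. If the answer is \textsc{No}, return $0$. If the answer is \textsc{Yes} and every $B_i$ is a singleton, then necessarily every $b_i=1$, $t=d$, the configuration encodes a single candidate hyperedge, and that hyperedge is present; return $1$. Otherwise pick the largest $B_i$ (ties broken by index), split it under the fixed order into its lower and upper halves $B_i'\sqcup B_i''$, and use
\[
m\bigl(\dots,B_i^{[b_i]},\dots\bigr)\;=\;\sum_{j=0}^{b_i} m\bigl(\dots,(B_i')^{[j]},(B_i'')^{[b_i-j]},\dots\bigr),
\]
where a summand with $j=0$ or $j=b_i$ simply drops the empty part. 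Recurse on the at most $b_i+1\le d+1$ sub-configurations in order, accumulating returned values into a running sum; the moment this sum exceeds $\tau$, abort the whole recursion and report ``$m>\tau$''. Correctness follows by induction on the total size $\sum_i|B_i|$ (which strictly decreases at each split), using the decomposition identity and the fact that a \gonepis answer is exactly the predicate $m(\cdot)\neq 0$; the base cases are the two cases above.

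\textbf{Query complexity.} Bound the recursion tree that is actually explored; since each node issues exactly one \gonepis query, it suffices to bound the number of explored nodes. Every \textsc{Yes}-node has at least one \textsc{Yes}-child (some summand in the identity is nonzero), and chasing \textsc{Yes}-children down to the singleton level reaches a leaf representing a distinct hyperedge counted by $m$; hence the explored tree has at most $\tau+1$ leaves (we abort once $\tau+1$ hyperedges have been found, and when $m\le\tau$ there are only $m\le\tau$ of them). Its depth is $\Oh_d(\log n)$: along any root-to-leaf path, the part containing a fixed vertex of the eventual hyperedge has its size at least halved each time it is split, so $\Oh(\log n)$ splits per vertex and $\Oh(d\log n)$ in all. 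A rooted tree with $L$ leaves and depth $D$ in which every internal node has $\ge 1$ child has $\Oh(LD)$ nodes ($\le L-1$ branching nodes, and the degree-one nodes form $\Oh(L)$ chains of length $\le D$); as each explored internal node spawns $\le d+1$ children, the explored tree has $\Oh(LDd)=\Oh_d(\tau\log n)$ nodes — adding the $\Oh_d(\log n)$ nodes on the DFS frontier at an early abort does not change this — hence $\Oh_d(\tau\log n)$ \gonepis queries. Determinism is clear since the splitting rule and the oracle are deterministic.

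\textbf{Main obstacle.} The delicate point is exactly this query bound: a priori the recursion branches $(d{+}1)$-fold and has depth $\Theta(\log n)$, so the \emph{full} recursion tree has $n^{\Theta_d(1)}$ nodes, and the whole gain comes from pruning as soon as the running count reaches $\tau+1$, which confines the exploration to a tree with only $\Oh(\tau)$ leaves; combined with the honest $\Oh_d(\log n)$ depth bound (charging halvings to the $d$ vertices of a target hyperedge), a tree that is bushy only near its $\Oh(\tau)$ leaves has few nodes. Everything else — the decomposition identity, the two base cases, and the final multiplication by $\prod_i a_i!$ — is routine.
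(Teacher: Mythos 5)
Your proposal is correct and follows essentially the same route as the paper: a recursive halving of the vertex parts in which each node of the recursion tree costs one \gonepis query, the tree has depth $\Oh_d(\log n)$, and exploration is pruned once the threshold $\tau$ is exceeded, yielding $\Oh_d(\tau\log n)$ queries; the differences (counting unordered hyperedges and converting via Fact~\ref{obs:1}, splitting one part at a time into $\leq d+1$ children rather than all parts at once into $2^d$ children, and aborting on the running hyperedge count instead of on a node-count budget) are cosmetic. One small imprecision: the explored tree can have more than $\tau+1$ leaves because of {\sc No}-leaves, so the $\Oh(LD)$ bound should be applied to the subtree of {\sc Yes}-nodes (which indeed has at most $\tau+1$ leaves), after which your own factor of $d+1$ children per internal node gives exactly the claimed $\Oh_d(\tau\log n)$ bound.
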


\paragraph{Sparsification.} 
We {\sc color} $U(\cH)$, the vertices of hypergraph, with $[k]$ colors to sparsify the $d$-partite hypergraph $\cH(U^{[d]})$ so that 
\begin{itemize}
	\item[(i)] the sparsified hypergraph consists of a set of $d$-partite sub-hypergraphs and  
	\item[(ii)] a proper scaling of the sum of the number of ordered hyperedges in the sub-hypergraphs is a \emph{good} estimate of $m_o(U^{[d]}),U=U(\cH),$ with high probability. 
\end{itemize}
The sparsification result is formally stated next. The proof uses the \emph{method of averaged bounded differences} and \emph{Chernoff-Hoeffding} inequality. The detailed proof is given in Section~\ref{sec:sparse}. The heart of our work is the general sparsification routine that we believe will find independent uses. 
\begin{lem}[Sparsification] \label{lem:sparse}
Let $\cH$ be any $d$-uniform hypergraph and $k\geq 1$ be any positive integer. Let
        \begin{itemize}
                \item $h_d:[k]^d\rightarrow \{0,1\}$ be a hash function such that independently for any $d$-tuple ${\bf a}\in [k]^d$, $\pr[h_d({\bf a})=1]=\frac{1}{k}$.
                \item $A_1,\ldots,A_s$ be any pairwise disjoint subsets of the vertex set $U(\cH)$, where $1\leq s\leq d$. Let us choose any $a_i\in [d]$ vertices from $A_i$ such that $\sum_{i=1}^s a_i=d$. 
                \item Vertices in $A=\bigcup_{i=1}^s A_i$ are \colored with $[k]$. $\chi(i,j)$'s denote the color classes for each $A_i$, that is, $\chi(i,j)=\{v\in A_i:~v~\mbox{is colored with color $j$}\}$, where $i \in [s]$ and $j \in [k]$. 
                \item A hyperedge $(x_1,\ldots,x_d)$ is said to be \emph{properly colored} if $h(c_1,\ldots,c_d)=1$, where $c_i$ is the color of $x_i$. Let $\cR_d$ denote the number of \emph{properly colored} hyperedges defined as follows. $$\cR_d=\sum\limits_{(c_1,\ldots,c_d) \in [k]^d} m_o\left(\chi({1,c_1}),\ldots,\chi({1,c_{a_1}}),\ldots \ldots, \chi({s,c_{d-a_s+1}}),\ldots,\chi({s,c_{d}})\right)\times h_d(c_1,\ldots,c_d).$$
        \end{itemize}
Then, for a suitable constant $\theta > d$ and $p_d = \frac{d!}{n^{4\theta -2d}}$,
$$ \pr \left(\size{ \cR_d  - \frac{m_o(A_1^{[a_1]},\ldots,A_s^{[a_s]})}{k}} \geq 2^{2d}\theta^d   \sqrt{ d! ~ m_o(A_1^{[a_1]},\ldots,A_s^{[a_s]}) \log^d n} \right) \leq p_d.$$
\end{lem}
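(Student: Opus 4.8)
Throughout write $M:=m_o(A_1^{[a_1]},\ldots,A_s^{[a_s]})$, let $\cF_o$ be the family of the $M$ ordered hyperedges counted by $M$ (the $i$-th vertex of each lying in the block indexed by slot $i$), and for $F_o\in\cF_o$ let $\chi(F_o)\in[k]^d$ denote its colour pattern, so that $\cR_d=\sum_{F_o\in\cF_o}h_d(\chi(F_o))$. The first point is the mean. Since the $d$ vertices of any $F_o$ are distinct, for every realisation of the colouring $\chi$ the tuple $\chi(F_o)$ is a single element of $[k]^d$, hence $\pr[h_d(\chi(F_o))=1\mid\chi]=\tfrac1k$ by the defining property of $h_d$; by linearity $\E[\cR_d]=M/k$, which is the centre in the statement.

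\smallskip\noindent\textbf{The Doob martingale.}
I would prove the concentration by the method of averaged bounded differences, exposing the randomness in the order: first all hash values $h_d({\bf c})$, ${\bf c}\in[k]^d$, then the colours $\chi(v)$, $v\in A=\bigcup_iA_i$, one vertex at a time in an arbitrary fixed order. Let $X_0=\E[\cR_d]=M/k,X_1,\ldots$ be the resulting Doob martingale, ending at $\cR_d$. When $h_d({\bf c})$ is exposed no colour has been revealed yet, so each $F_o$ still has colour pattern ${\bf c}$ with probability exactly $k^{-d}$, and the conditional expectation of $\cR_d$ changes by $\tfrac{M}{k^d}\bigl(h_d({\bf c})-\tfrac1k\bigr)$, of absolute value at most $M/k^d$; squaring and summing over the $k^d$ hash steps contributes $M^2/k^d$ to the quadratic variation (a lower--order term in the regime in which the Lemma is invoked, where $k$ is large enough that $M^2/k^d=\Oh_d(M\log^{d-1}n)$; when $M^2/k^d$ is larger $M$ itself is so small that the claimed deviation already exceeds $M\ge\size{\cR_d-M/k}$ and the bound is trivial). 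When $\chi(v)$ is exposed the hash function is already fixed, and changing the colour of $v$ alone changes $\cR_d$ by at most $D_v:=\size{\{F_o\in\cF_o:v\in F_o\}}$, its degree in $\cF_o$, so the increment at that step is at most $D_v$.

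\smallskip\noindent\textbf{Taming high--degree vertices -- the crux, by induction on $d$.}
The naive colour--phase estimate $\sum_{v\in A}D_v^2$ is useless when $\cF_o$ has a high--degree vertex: since $\sum_vD_v=d\,M$ it can be as large as $\Theta(M\cdot\max_vD_v)$, which would force a deviation of order $M$. I would instead prove that, off an event $\cG$ with $\pr(\overline{\cG})\le p_d/2$, the \emph{realised} Doob increment at the step exposing $\chi(v)$ is $\Oh_d\!\bigl(\sqrt{D_v\log^{d-1}n}\bigr)$ for \emph{every} vertex $v$ and \emph{every} colouring prefix. The mechanism: deleting $v$ from the members of $\cF_o$ through it produces a $(d-1)$-uniform family of $D_v$ ordered hyperedges whose ``properly coloured'' count under the induced hash $h'(\cdot)=h_d(\cdot\ \text{with the colour of }v\text{ frozen in its slot})$ -- which again has all marginals $\tfrac1k$ -- is exactly a dimension-$(d-1)$ instance of the present Lemma; applying the inductive hypothesis to it (and to the conditional expectations obtained by also freezing a colouring prefix) bounds that count by $D_v/k\pm\Oh_d\!\bigl(\sqrt{(d-1)!\,D_v\log^{d-1}n}\bigr)$ with failure probability $\le p_{d-1}$, so the difference between the ``$v$ coloured $c$'' value and the ``$v$ averaged'' value -- which is precisely the Doob increment at $v$ -- has that smaller, degree--free order. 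A union bound over the $\le n$ vertices $v$, the $k$ colours of $v$, and the $\le n$ prefixes, together with Chernoff--Hoeffding for the sizes $\size{\chi(i,j)}$ of the colour classes, yields $\cG$ with the stated probability; this is where each of the $d$ recursion levels adds one factor of $\log n$ (producing the $\log^dn$ in the statement) and where the per--level union bounds and constants accumulate into the $d!$, the $2^{2d}\theta^d$, and the $n^{-(4\theta-2d)}$ of $p_d$. On $\cG$ the quadratic variation of the colour phase is $\sum_{v\in A}\Oh_d(D_v\log^{d-1}n)=\Oh_d(M\log^{d-1}n)$.

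\smallskip\noindent\textbf{Assembling, and the main obstacle.}
Combining the two phases, on $\cG$ the martingale has total quadratic variation $\Oh_d\!\bigl(M^2/k^d+M\log^{d-1}n\bigr)=\Oh_d\!\bigl(M\log^{d-1}n\bigr)$, so Azuma--Hoeffding for the stopped martingale, with the standard bad--event correction for $\overline{\cG}$, gives
\[
\pr\!\left(\size{\cR_d-\tfrac{M}{k}}\ge T\right)\ \le\ \pr(\overline{\cG})\ +\ 2\exp\!\left(-\frac{T^2}{\Oh_d\!\bigl(M\log^{d-1}n\bigr)}\right),
\]
and with $T=2^{2d}\theta^d\sqrt{d!\,M\log^dn}$ the exponential term is below $p_d/2$ once $\theta>d$ is a large enough constant (the $T$ in the statement is deliberately generous, so this closes with room to spare), while $\pr(\overline{\cG})\le p_d/2$ by the previous step; adding the two probabilities gives $p_d=d!/n^{4\theta-2d}$. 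The one genuine difficulty is the high--degree case: the crude Lipschitz constant of $\cR_d$ in a single vertex colour is that vertex's degree in $\cF_o$, hopeless for degree--skewed hypergraphs, and the only route I see to the stated degree--free bound is the recursive reading of the link of a vertex as a lower--dimensional copy of the sparsification statement, with the attendant union bounds and the one--$\log n$--per--dimension loss; everything else is linearity of expectation together with textbook martingale and Chernoff--Hoeffding concentration.
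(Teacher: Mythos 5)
Your colour-exposure phase is essentially the paper's own argument: the paper also runs the method of averaged bounded differences over the vertex colours, splits the hyperedges through the vertex $t$ currently being coloured according to the position of $t$ and whether all other vertices are already coloured, kills the incompletely coloured part in expectation, and controls the fully coloured part by reading the link of $t$ (with the colour of $t$ frozen into the hash) as a $(d-1)$-dimensional instance of the same statement --- your induction, with the same one-$\log n$-per-level loss, the same union bound over $t$ feeding a bad event, and the same bad-event version of the concentration step (Lemma~\ref{lem:dp}).

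The genuine gap is your treatment of the hash randomness. Your dismissal of the $M^2/k^d$ contribution from the hash-exposure phase is backwards: $M^2/k^d\ge M\log^{d-1}n$ means $M\ge k^d\log^{d-1}n$, i.e.\ $M$ is \emph{large}, and then the target deviation $2^{2d}\theta^d\sqrt{d!\,M\log^dn}$ is far below $M$, so the bound is not trivial in that regime and the hash phase dominates your quadratic variation; as written, your argument only covers $M=\Oh_d(k^d\log^{d-1}n)$. Moreover, no reordering of the exposure can make this term disappear, because it reflects real variance: conditioned on the colouring, $\cR_d=\sum_{{\bf c}\in[k]^d}m_{\bf c}\,h_d({\bf c})$ with the $h_d({\bf c})$ independent mean-$\frac1k$ indicators, so its conditional variance is $\frac1k\left(1-\frac1k\right)\sum_{\bf c}m_{\bf c}^2\ge\frac1k\left(1-\frac1k\right)\frac{M^2}{k^d}$ by Cauchy--Schwarz, and for a concrete instance (say a complete $d$-partite hypergraph, where every $m_{\bf c}\approx M/k^d$ under a typical colouring) deviations of order $M/k^{(d+1)/2}$ around $M/k$ occur with constant probability --- far larger than $\sqrt{M\log^dn}$ once $M\gg k^{d+1}\log^dn$ with $k$ a fixed constant, which is exactly how the lemma is invoked ($k=4$, $m_o$ up to $n^d$). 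So the ``trivial regime'' reduction cannot be patched within your plan. It is worth noting the comparison: the paper never charges martingale steps to the hash values at all --- the hash randomness is folded into the inductive hypothesis (its $N^d_\rho(t,\mu)$ are taken to be distributed like $\cR_{d-1}$ over the colour and hash randomness of the link) and, in the base case $d=1$, into an asserted independence of the indicators $X_F$, which in fact share a hash value whenever two hyperedges collide on a colour pattern. In other words, the correlation your hash phase makes explicit is precisely what the paper's argument elides; closing it would require either an upper bound on $\sum_{\bf c}m_{\bf c}^2$ (unavailable in general for constant $k$ and large $M$) or a different hashing scheme (e.g.\ $h:[k]^{d-1}\to[k]$ with a hyperedge accepted when $c_d=h(c_1,\ldots,c_{d-1})$, under which vertex-disjoint hyperedges really are independent), not the reduction you propose.
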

Sparsification ensures that $m_o(U^{[d]})$, the number of ordered hyperedges between $A_1^{[a_1]},\ldots,A_s^{[a_s]}$, is approximately preserved when $m_o(U^{[d]})$ is above a threshold $\tau$.  

Assume that $m_o(U^{[d]})$ is \emph{large}~\footnote{{A fixed polylogarithmic quantity to be decided later in Section~\ref{sec:algo}.}} and $\cH(U^{[d]})$ has been sparsified. We add to the data structure $\cD$ a set of $d$-partite sub-hypergraphs obtained from the sparsification step. Refer to the flowchart in Figure~\ref{fig:flowchart} again. With the new $d$-partite sub-hypergraphs, we loop back to the exact estimation step. 
\remove{
For each $d$-partite hypergraph $\cH(\ssubset)$ in the data structure, we check whether $m_o(\ssubset)$ is less than the threshold $\tau$ using the above algorithm for exact estimation (Lemma~\ref{lem:exact}). If it is less than threshold $\tau$, we compute the exact value of $m_o(\ssubset)$ using Lemma~\ref{lem:exact} and remove $\cH(\ssubset)$ from the data structure. 
}
After the exact estimation step, we are left with some $d$-partite hypergraphs such that the number of ordered hyperedges in each hypergraph is more than the threshold $\tau$. If the number of such hypergraphs {present in $\cD$} is not large, i.e., below $\cN$, then we sparsify each hypergraph $\cH(\ssubset)$ using the algorithm corresponding to Lemma~\ref{lem:sparse}.  
\remove{ such that
\begin{itemize}
	\item[(i)] we obtain some $d$-partite sub-hypergraphs for each of the sparsified hypergraphs, and 
	\item[(ii)] a constant scaling of the sum of the number of ordered hyperedges in the sparsified sub-hypergraphs approximates $m_o(\ssubset)$.
\end{itemize}
Gopi: This (i) and (ii) was written at the beginning also. Do we need it again? \comments{Yes, we can remove the above red marked text.}
}

\paragraph{Coarse estimation and sampling.} If we have a large number of $d$-partite sub-hypergraphs of $\cH(U^{[d]})$ and each sub-hypergraph contains a large number of ordered hyperedges, then we \emph{coarsely} estimate the number of ordered hyperedges in each sub-hypergraph; see Figure~\ref{fig:flowchart}. Our estimator is correct up to a $\Oh_d(\log^{d-1} n)$ factor using the algorithm corresponding to the following lemma, whose proof is given in Section~\ref{sec:coarse}. 
\begin{lem}[Coarse Estimation] \label{lem:coarse_main} There exists an algorithm $\cAc$ that takes as input $d$ subsets $A_1,\ldots, A_d$ of vertex set $U(\cH)$ of a $d$-uniform hypergraph $\cH$ and returns $\hat{E}$ as an estimate for $m_o(\dsubset)$ such that 
	$$ \frac{m_o(\dsubset)}{8d^{d-1}2^d \log^{d-1} n} \leq \hat{E} \leq 20d^{d-1}2^d \cdot m_o(\dsubset) \log^{d-1} n $$
with probability at least $1-n^{-8d}$. Moreover, the number of \gtwopis queries made by the algorithm is $\Oh_d(\log ^{d+1} n)$. \end{lem}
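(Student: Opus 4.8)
\medskip
\noindent\emph{Proof plan.} The algorithm $\cAc$ is a blind geometric grid search over vertex-subsampling rates that reads the order of magnitude of $m_o(\dsubset)$ off the pattern of \gtwopis answers. For every exponent vector $(\ell_1,\dots,\ell_d)\in\{0,1,\dots,\lceil\log n\rceil\}^d$, form $A_i^{(\ell_i)}$ by keeping each vertex of $A_i$ independently with probability $2^{-\ell_i}$, repeat the \gtwopis query on input $\bigl(A_1^{(\ell_1)},\dots,A_d^{(\ell_d)}\bigr)$ for $t=\Theta(d\log n)$ independent rounds (fresh subsampling each round), and call the vector \emph{live} if a strict majority of the rounds answer {\sc Yes}. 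Put $L^\ast=\max\{\,\ell_1+\dots+\ell_d:(\ell_1,\dots,\ell_d)\text{ is live}\,\}$ and output $\widehat{E}:=\kappa_d\cdot 2^{L^\ast}$ for a constant $\kappa_d=\Oh_d(1)$ fixed at the end (and $\widehat{E}:=0$ if nothing is live). There are $(\lceil\log n\rceil+1)^d=\Oh_d(\log^d n)$ vectors and each costs $t$ queries, so the algorithm makes $\Oh_d(\log^{d+1}n)$ \gtwopis queries.

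\medskip For a fixed exponent vector the number of ordered hyperedges of $\cH(\dsubset)$ that survive the subsampling has expectation exactly $m_o(\dsubset)\,2^{-(\ell_1+\dots+\ell_d)}$, so by Markov the probability $q$ that \gtwopis answers {\sc Yes} for that vector satisfies $q\le m_o(\dsubset)\,2^{-(\ell_1+\dots+\ell_d)}$. A Chernoff/Hoeffding bound over the $t$ rounds (with a suitable constant in $t$) makes the empirical majority misjudge either of ``$q\ge\tfrac34$'' or ``$q<\tfrac14$'' with probability $\le n^{-10d}$ for each vector; a union bound over the $\le n$ vectors then gives, with probability $\ge 1-n^{-8d}$, that (i) every live vector has $q\ge\tfrac14$ and (ii) every vector with $q\ge\tfrac34$ is live. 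Conditioning on this event, (i) together with the Markov bound yields $\tfrac14\le m_o(\dsubset)\,2^{-(\ell_1+\dots+\ell_d)}$ for every live vector, hence $2^{L^\ast}\le 4\,m_o(\dsubset)$ — the upper bound on $\widehat{E}$.

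\medskip The lower bound reduces to the following \emph{key claim}: there is an exponent vector with each $\ell_i\le\lceil\log n\rceil$, with $2^{\ell_1+\dots+\ell_d}\ge m_o(\dsubset)/\Oh_d(\log^{d-1}n)$, and with its {\sc Yes}-probability $q\ge\tfrac34$; by (ii) this vector is live, so $2^{L^\ast}\ge m_o(\dsubset)/\Oh_d(\log^{d-1}n)$, and choosing $\kappa_d=\Oh_d(1)$ appropriately makes $[\,\kappa_d m_o(\dsubset)/\Oh_d(\log^{d-1}n),\,4\kappa_d m_o(\dsubset)\,]$ sit inside the interval of the statement. I would prove the key claim by induction on the arity, peeling off one coordinate per step. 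Write $\cF$ for the ordered-hyperedge set; for $(x_1,\dots,x_{d-1})$ let $\deg(x_1,\dots,x_{d-1})$ count the $x_d$'s completing a hyperedge, so $\sum\deg=m_o(\dsubset)$. Every positive co-degree lies in one of the $\le\log n+1$ dyadic ranges $[2^j,2^{j+1})$, so pigeonhole gives a level $j_0$ whose heavy set $P=\{(x_1,\dots,x_{d-1}):\deg\in[2^{j_0},2^{j_0+1})\}$ satisfies $|P|\,2^{j_0}\ge m_o(\dsubset)/\Oh(\log n)$. Fix a subsampling constant $c=\Theta(\log d)$ and set $\ell_d$ so that $2^{-\ell_d}=\Theta\!\bigl(c\,2^{-j_0}\bigr)$; then for each heavy tuple some completing vertex survives in $A_d^{(\ell_d)}$ with probability $\ge 1-e^{-c}$. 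Viewing $P$ as a $(d-1)$-ary instance, the induction hypothesis supplies $\ell_1,\dots,\ell_{d-1}$ with $2^{\ell_1+\dots+\ell_{d-1}}\ge |P|/\Oh_d(\log^{d-2}n)$ under which, using only the subsampling of $A_1,\dots,A_{d-1}$, some tuple of $P$ survives with probability $\ge(1-e^{-c})^{d-1}$. Conditioning on that, taking the lexicographically first surviving heavy tuple, and using that the subsampling of $A_d$ is independent of that of the other blocks, the full query answers {\sc Yes} with probability $\ge(1-e^{-c})^d\ge\tfrac34$, the last step being where $c=\Theta(\log d)$ is used. The scales telescope: $2^{\ell_1+\dots+\ell_d}\gtrsim \frac{|P|\,2^{j_0}}{\Oh_d(\log^{d-2}n)}\ge\frac{m_o(\dsubset)}{\Oh_d(\log^{d-1}n)}$. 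The base case $d=1$ subsamples $A_1$ at rate $\Theta(1/m_o(\dsubset))$. Finally every exponent produced is at most $\lceil\log n\rceil$, since each is read off a co-degree (or, at the base, an innermost heavy set), all of size $\le n$; so the claimed vector is one the algorithm actually tries.

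\medskip The main obstacle is the key claim, and in particular keeping the polylogarithmic loss down to $\log^{d-1}n$. The naive plan — pigeonholing over which heavy tuples share the same downstream rate vector — loses a $\log^{\Theta(d)}$ factor at each level and degrades to $\log^{\Theta(d^2)}n$ overall; peeling one coordinate at a time and recursing on the \emph{heavy-tuple sub-instance} $P$ (rather than on downstream rates) loses only a single $\log n$ per coordinate, which is what gives $\log^{d-1}n$. The second delicate point is composing the $d$ constant-probability survival events: this works because the $d$ subsampling processes are mutually independent (conditioning on the low-index blocks leaves the top block untouched) and because $c=\Theta(\log d)$ keeps $(1-e^{-c})^d$ above $\tfrac34$. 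Everything else — the Chernoff and Markov estimates, the union bound over $\Oh_d(\log^d n)$ vectors, rounding rates to powers of $2$, and pinning down $\kappa_d$ — is routine bookkeeping.
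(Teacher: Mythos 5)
Your proposal is correct (up to routine constant bookkeeping) and it reaches the stated guarantee, but it is organized differently from the paper's proof, so a comparison is worth recording. The paper's algorithm \cest{} does a descending geometric search over a single scalar guess $\hat{\cR}=n^d,n^d/2,\ldots,1$; for each guess, \verest{} loops over $(d-1)$-dimensional index vectors ${\bf j}$ and samples $A_i$ with \emph{chained} rates $p(1,{\bf j})=2^{j_1}/\hat{\cR}$, $p(i,{\bf j})=2^{j_i-j_{i-1}}d\log n$, $p(d,{\bf j})=2^{-j_{d-1}}$, whose product is always $d^{d-2}\log^{d-2}n/\hat{\cR}$; the first-moment (Markov) bound kills guesses that are too large (Lemma~\ref{lem:coarse1}), and a chain of dyadic degree classes (Claim~\ref{clm:verify}), fixing one heavy vertex per level and conditioning along the chain, shows that a correct guess is accepted with probability at least $(1-e^{-1})^d\geq 2^{-d}$ (Lemma~\ref{lem:coarse2}); since this hit probability is only $2^{-d}$, the paper compensates with $\Gamma=\Theta_d(\log n)$ repetitions and the low acceptance threshold $\Gamma/(10\cdot 2^d)$, then Chernoff plus a union bound over the $\Oh(d\log n)$ guesses finishes the argument. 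You instead search directly over the $d$-dimensional grid of per-coordinate rates $2^{-\ell_i}$, read the estimate off the largest live scale $2^{L^\ast}$, and prove the hit-probability claim by an induction that peels one coordinate, buckets co-degrees dyadically, and recurses on the heavy-prefix tuple set $P$ -- a purely combinatorial statement about subsampling an arbitrary tuple set, with the independence of the $A_d$-coins used to compose the last factor; you also boost the per-vector hit probability to $3/4$ by building a $\Theta(\log d)$ slack into the rates rather than accepting $2^{-d}$ and lowering the threshold. Both routes give $\Oh_d(\log^{d+1}n)$ \gtwopis{} queries and failure probability $n^{-\Omega(d)}$, and both rest on the same two pillars (Markov for the upper direction, dyadic decomposition for the lower); what your version buys is a cleaner, self-contained key claim with a genuine induction on arity and per-coordinate exponents bounded by $\log n$ instead of $d\log n$, at the cost of slightly more care in the rate constants ($c=\Theta(\log d)$, the boundary case $2^{j_0}<c$, and the base case rate $\Theta(c/m_o)$) which you correctly flag as bookkeeping; the paper's version keeps the accept/reject test monotone in a single scalar $\hat{\cR}$, which is closer in spirit to the coarse-estimation routines of Beame et al. One caveat you inherit from the paper rather than introduce: both arguments equate ``some surviving ordered hyperedge compatible with the sampled parts'' with a {\sc Yes} answer of \gtwopis{}, which is the convention the paper itself uses in Lemmas~\ref{lem:coarse1} and~\ref{lem:coarse2}, so no additional gap arises on your side.
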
 

First, we coarsely estimate the number of ordered hyperedges in each sub-hypergraph. Then we sample a set of sub-hypergraphs such that a weighted sum of the number of hyperedges in the sample approximately preserves the sum of the number of ordered hyperedges in the sub-hypergraphs. This kind of sampling technique, also known as the importance sampling, is given for the edge estimation problem by Beame~\etal~\cite{BeameHRRS18}. The lemma corresponding to this sampling technique is formally stated as Lemma~\ref{lem:importance1} in Appendix~\ref{sec:prelim}. The importance sampling lemma that we require is stated as follows. 

\begin{lem}[Importance Sampling] \label{lem:importance-app} 
Let $\{(A_{i1},\ldots,A_{id},w_i):i \in [r]\}$ be the set of tuples in the data structure $\cD$ and $e_i$ be the coarse estimate for $m_o(A_{i1},\ldots,A_{id},w_i)$ such that (i) $w_i,e_i \geq 1~ \forall i \in [r]$, (ii) $\frac{e_i}{\alpha} \leq m_o(A_{i1},\ldots,A_{id}) \leq e_i \cdot \alpha$ for some $\alpha >0$ and $\forall ~i \in [r]$, and (iii) $\sum_{i=1}^r {w_i\cdot m_o(A_{i1},\ldots,A_{id})} \leq M$. Then, there exists an algorithm that finds a set $\{(A'_{i1},\ldots,A'_{id},w'_i):i \in [r']\}$ of tuples, with probability at least $1-\delta$, such that the above three conditions hold and 
\begin{equation*}
\size{\sum_{i=1}^{r'} {w'_i\cdot m_o(A'_{i1},\ldots,A'_{id})}-\sum_{i=1}^r {w_i\cdot m_o(A_{i1},\ldots,A_{id})}} \leq \lambda S \, , 
\end{equation*}
where $S=\sum_{i=1}^r {w_i m_o(A_{i1},\ldots,A_{id})}$, $\lambda>0$, $\delta >0$ and $r'=\cO\left(\frac {\alpha^4 \log M}{\lambda^{2}} \left(\log \log M + \log \frac{1}{\delta}\right)\right)$. 
\end{lem}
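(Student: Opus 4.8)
The plan is to reduce to a weighted sampling problem: think of each tuple $i\in[r]$ as an ``item'' with \emph{true value} $v_i := w_i \, m_o(A_{i1},\ldots,A_{id})$ and a \emph{proxy weight} $p_i := w_i e_i$, which by condition~(ii) satisfies $p_i/\alpha \le v_i \le p_i\alpha$. Since $w_i,e_i\ge 1$ we also have $p_i\ge 1$, and $P:=\sum_i p_i$ is a quantity we can compute from the $e_i$'s (we do not know $v_i$, but we never need to). First I would sample tuples \emph{with replacement}, each time choosing index $i$ with probability $p_i/P$; for a chosen index $i$ I keep the tuple $(A_{i1},\ldots,A_{id})$ but reset its weight to $w_i' := w_i \cdot \frac{P}{p_i \cdot r'}$ so that the associated contribution $w_i' m_o(A_{i1},\ldots,A_{id}) = \frac{P}{r'}\cdot\frac{v_i}{p_i}$. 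Each such draw is an unbiased estimator of $S/r'$: indeed $\mathbb{E}\big[\frac{P}{r'}\cdot\frac{v_i}{p_i}\big] = \frac{P}{r'}\sum_i \frac{p_i}{P}\cdot\frac{v_i}{p_i} = \frac{1}{r'}\sum_i v_i = \frac{S}{r'}$. Summing $r'$ independent draws gives an unbiased estimator of $S$. One must also check that conditions (i)–(iii) are inherited: $w_i'\ge 1$ can be arranged (or the condition relaxed to the weighted sum being bounded — one can absorb this), the per-tuple sandwich (ii) is unchanged since we keep the same hypergraph and its coarse estimate $e_i$, and (iii) follows because the new total weighted value concentrates around $S\le M$.

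The key quantitative step is a concentration bound on $Y := \sum_{j=1}^{r'} X_j$ where $X_j = \frac{P}{r'}\cdot\frac{v_{i_j}}{p_{i_j}}$ are i.i.d.\ with mean $S/r'$. The ratio $v_i/p_i$ lies in $[1/\alpha,\alpha]$, so each $X_j \in [\frac{P}{r'\alpha}, \frac{P\alpha}{r'}]$, a range of length $\frac{P\alpha}{r'}$. A Hoeffding/Bernstein bound then gives, for the deviation $|Y-S|\ge \lambda S$, a failure probability roughly $\exp\!\big(-\Omega(\lambda^2 S^2 / (r'\cdot (P\alpha/r')^2))\big) = \exp\!\big(-\Omega(\lambda^2 S^2 r' / (P^2\alpha^2))\big)$. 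Now the crucial observation linking $P$ and $S$: since $v_i \ge p_i/\alpha$, we have $S = \sum v_i \ge P/\alpha$, i.e.\ $P \le \alpha S$. Substituting, the exponent becomes $\Omega(\lambda^2 r' / \alpha^4)$. Demanding this be $\ge \log\frac{1}{\delta} + \log\log M$ (the extra $\log\log M$ comes from a union bound over $O(\log M)$ geometric ``scales'' of $S$, since we do not know $S$ in advance and must guess $r'$ adaptively — this is exactly Beame et al.'s trick) yields $r' = O\!\big(\frac{\alpha^4}{\lambda^2}(\log\log M + \log\frac1\delta)\big)$, and folding in the $\log M$ from the scale-guessing union bound gives the claimed $r' = O\!\big(\frac{\alpha^4 \log M}{\lambda^2}(\log\log M + \log\frac1\delta)\big)$.

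I expect the main obstacle to be the \emph{adaptivity / not knowing $S$} issue rather than the raw concentration: the sample size $r'$ depends on $S$ through the number of scales, so one runs the sampler for a geometric sequence of guesses $S \approx 2^t$ for $t = 0,1,\ldots,O(\log M)$, at each scale using the coarse estimates to test whether the guess is consistent, and taking a union bound over all $O(\log M)$ scales — which is precisely where the $\log M$ factor enters. A secondary technical point is verifying that conditions (i)–(iii) are preserved for the \emph{output} tuple set with the rescaled weights $w_i'$, in particular that $w_i' \ge 1$ (one may need to merge or discard tuples whose rescaled weight would drop below $1$, or simply note that the lemma as used only needs the weighted-sum bound (iii) and the sandwich (ii), both of which are robust to rescaling). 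Since this mirrors the edge-estimation argument of Beame~\etal~\cite{BeameHRRS18} almost verbatim — the only change being that ``edge count between $A_i,B_i$'' is replaced by ``$m_o$ of a $d$-tuple'' — I would present it by citing their Lemma~\ref{lem:importance1} framework and checking that our $d$-uniform coarse estimates satisfy its hypotheses, which they do with $\alpha = O_d(\log^{d-1} n)$ by Lemma~\ref{lem:coarse_main}.
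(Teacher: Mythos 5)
Your closing sentence is in fact exactly what the paper does: it gives no independent proof of this lemma, but treats it as a direct instantiation of Beame \etal's importance-sampling result, reproduced as Lemma~\ref{lem:importance1} in Appendix~\ref{sec:prelim}, with $D_i$ the tuple $(A_{i1},\ldots,A_{id})$, $c(D_i)=m_o(A_{i1},\ldots,A_{id})$ and $\rho=\alpha$, where Lemma~\ref{lem:coarse_main} supplies $\alpha=\Oh_d(\log^{d-1}n)$; so in that form your proposal matches the paper. Your from-scratch sketch is a genuinely different, self-contained route, and its concentration core is sound: sampling proportional to $p_i=w_ie_i$, using $v_i/p_i\in[1/\alpha,\alpha]$ and $P\le \alpha S$, does give a Hoeffding exponent of order $\lambda^2 r'/\alpha^4$. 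Two points would need repair before it could replace the citation. First, your explanation of the $\log M$ factor is off: your one-shot with-replacement sampler only uses the known quantity $P=\sum_i w_ie_i$, so no geometric guessing of $S$ is needed and your argument would actually yield $r'=\cO(\alpha^4\lambda^{-2}\log\frac{1}{\delta})$; in \cite{BeameHRRS18} the $\log M(\log\log M+\log\frac{1}{\delta})$ factor arises from the iterative, multi-round structure of their algorithm (which repeatedly shrinks the collection while re-certifying the invariants), not from a union bound over scales of $S$. Second, and relatedly, the preservation of conditions (i)--(iii) for the output -- in particular $w_i'\ge 1$ after the rescaling $w_i'=w_iP/(p_ir')$ -- is not a side remark you can wave away: the main algorithm feeds the output tuples back into sparsification and into further invocations of this lemma over $\Oh_d(\log n)$ iterations, so the invariants must genuinely hold for the returned set, and handling this is precisely what the round-based construction in Beame \etal\ buys. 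Since you ultimately propose citing Lemma~\ref{lem:importance1} and checking its hypotheses, the proposal as a whole is correct and aligned with the paper; the standalone sketch would need the two fixes above to stand on its own.
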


\paragraph{Putting things together.} The data structure $\cal D$ that our iterative algorithm uses (see the flowchart in Figure~\ref{fig:flowchart}) has an accumulator $\Psi$ for the number of hyperedges and a set of tuples representing the $d$-partite sub-hypergraphs along with their weights -- these $d$-partite sub-hypergraphs are generated from sparsification. The algorithm uses the \emph{exact estimation} step (viz. Algorithm~\ref{algo:exact}) to figure out if the number of hyperedges $m_o(\dsubset) \leq \tau$ using Lemma~\ref{lem:exact}. If $m_o(\dsubset) \leq \tau$, we add $w \cdot m_o(\dsubset)$ to $\Psi$ and remove the tuple $(A_1,\ldots,A_d,w)$ from $\cal D$. If the number of tuples in $\cal D$ is at most $\cN = \kappa_d \cdot \frac{\log ^{4d} n}{\eps^2}$, then we carry out a sparsification step, else we do an importance sampling; a coarse estimation step (viz. Algorithms~\ref{algo:verify} and~\ref{algo:coarse}) acts as a pre-processing step for importance sampling. We do either sparsification or coarse estimation followed by importance sampling in an iteration because their roles are complementary. While successive sparsifications ease estimating the number of hyperedges by breaking a $d$-partite hypergraph into a disjoint union of $d$-partite sub-hypergraphs, the number of such $d$-partite sub-hypergraphs may be as high as $4^d \cdot \cN =4^d \kappa_d \cdot \frac{\log ^{4d} n}{\eps^2}$ (see Observation~\ref{obs:num_tuples}). Coarse estimation followed by importance sampling (see Lemma~\ref{lem:importance-app}) ensures that the effect of sparsification is neutralized so that the number of tuples in $\cal D$ is at most $\cN$. This ensures that out of two successive iterations of the algorithm, there is at least one sparsification step. The coarse estimation applies when the number of tuples $r>\cN=\frac{\kappa_d\log^{4d} n}{\eps^2}$. 
\remove{For each tuple in $\cal D$, we find an estimate $\hat{E_i}$ such that $\frac{m_o(A_1,\ldots,A_d)}{\alpha} \leq \hat{E_i} \leq \alpha m_o(A_1,\ldots,A_d)$ using $\Oh_d(\log^{d+1} n)$ \gtwopis queries per tuple (see Lemma~\ref{lem:coarse_main} for the precise value of $\alpha$).}
For each tuple in $\cal D$, we find an estimate $\hat{E_i}$ using $\Oh_d(\log^{d+1} n)$ \gtwopis queries per tuple (see Lemma~\ref{lem:coarse_main}).
This estimate serves as condition (ii) for the application of Lemma~\ref{lem:importance-app}. 
\remove{For the $r$ tuples in $\cal D$, let $e_i$ be the coarse estimate of $m_o(A_{i1},\ldots,A_{id},w_i)$, $i \in [r]$, such that (i) $w_i,e_i \geq 1~ \forall i \in [r]$, (ii) $\frac{e_i}{\alpha} \leq m_o(A_{i1},\ldots,A_{id}) \leq e_i \cdot \alpha$ for some $\alpha >0$ and $\forall ~i \in [r]$, and (iii) $\sum_{i=1}^r {w_i\cdot m_o(A_{i1},\ldots,A_{id})} \leq M$.} Because of the importance sampling (see Lemma~\ref{lem:importance-app}), one can replace the $r$ tuples in $\cal D$ with a sample of $r'$ tuples such that the required estimate is approximately maintained. Since $r'\leq \kappa_d\cdot \frac{\log^ {4d} n}{\eps^2}=\cN$, the next step will be sparsification.

Observe that the query complexity of each iteration is polylogarithmic. Note that the number of ordered hyperedges reduces by a constant factor after each sparsification step. So, the number of iterations is bounded by $\Oh_d(\log n)$. Hence, the query complexity of our algorithm is polylogarithmic. This completes a high level description of our algorithm. 

\remove{
Using the sampling scheme mentioned above, we check whether $m_o(\ssubset)$ is less than threshold $\tau$ using Lemma~\ref{lem:exact} for each $d$-partite hypergraph $\cH(\ssubset)$. If $m_o(\ssubset)<\tau$, we compute the exact value of $m_o(\ssubset)$ using Lemma~\ref{lem:exact} and remove $\cH(\ssubset)$ from the data structure. Otherwise, we perform sparsification, exact counting and coarse estimation to estimate $m_o(\ssubset)$. Observe that the query complexity of each iteration is polylogarithmic. Note that the number of ordered hyperedges reduces by a constant factor after each sparsification step. So, the number of iterations is bounded by $\Oh_d(\log n)$. Hence, the query complexity of our algorithm is polylogarithmic. This completes a high level description of our algorithm. 
}

\subsection{Our work vis-a-vis some recent works} \label{sec:compare}

\paragraph*{Comparison with Beame et al.~\cite{BeameHRRS18} and Bhattacharya et al.~\cite{Bhatta-abs-1808-00691,BhattaISAAC}:}
Beame et al.~\cite{BeameHRRS18} showed that {\sc Edge Estimation} problem can be solved using $\Oh(\eps^{-4}\log ^{14} n)$ many \bis queries. Note that a \bis query is a special case of \gpis query for $d=2$. Bhattacharya et al.~\cite{Bhatta-abs-1808-00691,BhattaISAAC} first considered the {\sc Triangle Estimation} problem in graphs using a \tis query oracle, a generalization of \bis query oracle. \remove{As a first step towards the generalization of {\sc Edge Estimation} using \bis towards \hest using \gpis, Bhattacharya et al.~\cite{Bhatta-abs-1808-00691,BhattaISAAC} considered {\sc Triangle Estimation} in graphs using a query oracle called \tis.} They showed that an $(1 \pm \eps)$-approximation to the number of triangles can be found by using $\Oh(\Delta^{2} \eps^{-4} \log ^{18} n)$ many \tis queries, where $\Delta$ denotes the maximum number of triangles sharing an edge. Note that the {\sc Triangle Estimation} problem in graphs using \tis queries is analogous to the \hest problem in $3$-uniform hypergraphs using a special \gpis query when $d=3$. In this paper, we build on the works of Beame et al.~\cite{BeameHRRS18} and Bhattacharya et al.~\cite{Bhatta-abs-1808-00691,BhattaISAAC} for \hest using \gpis queries. Our main contribution in this work is twofold. First, our work can be seen as a complete generalization of the works of Beame et al.\cite{BeameHRRS18} and Bhattacharya et al.~\cite{Bhatta-abs-1808-00691,BhattaISAAC} to $d$-uniform hypergraphs. Secondly, our query complexity bounds are independent of $\Delta$, this dependence on $\Delta$ was present in the work of Bhattacharya et al.~\cite{Bhatta-abs-1808-00691,BhattaISAAC}. \remove{Our results not only remove the dependency on $\Delta$ present in the work of Bhattacharya et al.~\cite{Bhatta-abs-1808-00691,BhattaISAAC}, but also generalizes the results of Beame et al.\cite{BeameHRRS18} and Bhattacharya et al.~\cite{Bhatta-abs-1808-00691,BhattaISAAC} to $d$-uniform hypergraph setting} As mentioned earlier, the extension from the {\sc Edge Estimation} problem using \bis queries to the {\sc Hyperedge Estimation} problem using \gpis queries is not obvious as edges in a graph can intersect in at most one vertex, but the intersection pattern of hyperedges in hypergraphs is complicated. The sparsification algorithm in the work of Beame et al.~\cite{BeameHRRS18} uses heavily the fact that two edges in a graph can intersect with at most one vertex. This may be the reason why Bhattacharya et al.~\cite{Bhatta-abs-1808-00691,BhattaISAAC} required a bound on the number of triangles sharing an edge. In this paper, we show that an ingenious way of coloring the hyperedges in a $d$-uniform hypergraph and an involved use of induction on $d$ generalize the results of both Beame et al.~\cite{BeameHRRS18} and Bhattacharya et al.~\cite{Bhatta-abs-1808-00691,BhattaISAAC}. The main technical and non-trivial contribution in this paper is the sparsification algorithm, which enables us to generalize the works of Beame et al.~\cite{BeameHRRS18} and Bhattacharya et al.~\cite{Bhatta-abs-1808-00691,BhattaISAAC}.

\paragraph*{Comparison with Dell et al.~\cite{DellLM19}:} Concurrently and independently, Dell et al.~\cite{DellLM19} obtained similar results for the hyperedge estimation problem. They showed that one can find an $(1 \pm \eps)$-approximation to the number of hyperedges in a $d$-uniform hypergraph by using $\Oh_d\left(\frac{\log ^{4d+7} n}{\eps^2} \log \frac{1}{\delta}\right)$ \emph{colorful independent set} queries with probability at least $1-\delta$. A colorful independent set query is same as that of a \gpis query oracle considered in this paper; \gpis{} query was introduced by Bishnu et al.~\cite{BishnuGKM018,Bishnu18}. Note that the algorithm of Dell et al.~\cite{DellLM19} for \hest uses comparable but fewer number of \gpis queries. However, our algorithm is different and conceptually much simpler than that of Dell et al.~\cite{DellLM19}. Most importantly, the main technical tool, our sparsification step is completely different from theirs.

\remove{
\subsection{{The role of the hash function in sparsification}}

\noindent We feel our work could generalize the framework of Beame et al.~\cite{BeameHRRS18} to hyperedge estimation mainly because of the new sparsification result aided by an involved use of induction. The sparsification result depends on the particular choice of hash function. As we move from edge to triangle and then to hyperedge in a $d$-uniform hypergraph, implicit structure blow up, and managing their interrelations also become difficult. Consider the following. Two edges can intersect in at most one vertex. two triangles can intersect in at most two vertices and two hyperedges in $d$-uniform hypergraph intersect in at most $d-1$ vertices. \remove{The existence of three triangles $(a,b,x)$, $(b,c,y)$ and $(a,c,z)$ in $G$ imply the existence of the triangle $(a,b,c)$ in $G$. The number of such implicit structures will blow up for hyperedges.}

To count the number of edges in $G$, Beame et al.~\cite{BeameHRRS18} sparsify the graph by coloring the vertices with $k$ colors and looking at the edges between certain pair of color classes. So, it boils down to counting the number of properly colored edges. This random coloring process can be encoded by the hash function $h:[k]\rightarrow [k]$. In our first effort in generalizing this work to count the number of triangles in a graph, we extended the above hash function to the following one -- $h:[k] \times [k] \rightarrow [k]$ in ~\cite{Bhatta-abs-1808-00691,BhattaISAAC}. Look at any edge $(a,b)$ in this graph under the above hash function. The edge is properly colored if the colors on the vertices of the edge come from specified color classes. We count all triangles on these properly colored edges, and scale it appropriately. Once an edge is not properly colored, we may miss it and hence, all triangles incident on that edge. This is why we need a bound on the number of triangles incident on an edge in~\cite{Bhatta-abs-1808-00691,BhattaISAAC}, otherwise the variance blows up and the concentration inequality gives poor result.

In this work, we consider a different hashing scheme with the hash function $h:[k]^d \rightarrow \{0,1\}$, in the $d$-uniform hypergraph setting, with a suitable probability of hash values becoming $1$. So, for triangles ($3$-uniform hypergraph), the particular hash function is $h:[k]^3 \rightarrow \{0,1\}$. So, there is no concept of proper and improper edge. All edges are taken into consideration and then all triangles, and we can work with any number of triangles incident on an edge. Moreover, there is an inductive nature to this particular choice of hash function. Consider fixing the color of a vertex of a triangle. Then the hash function $h : [k]\times[k]\rightarrow \{0,1\}$ behaves as a random coloring on the other two vertices of the triangle.
}

\section{Sparsification: Proof of Lemma~\ref{lem:sparse}} \label{sec:sparse}
\noindent
We start by reproducing the statement of the Lemma~\ref{lem:sparse} below. Sparsification ensures that the number of ordered hyperedges $m_o(A_1^{[a_1]},\ldots,A_s^{[a_s]})$ is approximately preserved across any pairwise disjoint subsets of vertices $A_1,\ldots,A_s$ and integers $a_1,\ldots,a_s$ such that $\sum_{i=1}^s a_i=d$.

\begin{lem}[Sparsification] \label{lem:sparse2}
Let $\cH$ be any $d$-uniform hypergraph and $k\geq 1$ be any positive integer. Let
	\begin{itemize}
		\item $h_d:[k]^d\rightarrow \{0,1\}$ be a hash function such that independently for any tuple ${\bf a}\in [k]^d$, $$\pr[h_d({\bf a})=1]=\frac{1}{k}$$.
		
		\item $A_1,\ldots, A_s$ be any pairwise disjoint subsets of $U(\cH)$, where $1\leq s\leq d$. Let us choose any $a_i \in [d]$ vertices from $A_i$ such that $\sum_{i=1}^s a_i=d$. \remove{Sparsification shows that the number of ordered hyperedges between $A_1^{[a_1]},\ldots,A_s^{[a_s]}$ is approximately preserved.
		Gopi: Do we need this in the lemma statement? If we change, we need to change in the earlier Section as well.}
		\item Vertices in $A=\bigcup_{i=1}^s A_i$ are \colored with $[k]$. $\chi(i,j)$'s denote the color classes for each $A_i$, that is, $\chi(i,j)=\{v\in A_i:~v~\mbox{is colored with color $j$}\}$, where $i \in [s]$ and $j \in [k]$. \remove{Vertices in $A=\cup A_i$ be \colored with $[k]$, where the color classes in $A_i$ is denoted as $\chi(i,j)=\{v\in A_i:~v~\mbox{is colored with color $j$}\}$, where $i \in [s]$ and $j \in [k]$.}
		\item A hyperedge $(x_1,\ldots,x_d)$ is said to be \emph{properly colored} if $h(c_1,\ldots,c_d)=1$, where $c_i$ is the color of $x_i$. Let $\cR_d$ denote the number of \emph{properly colored} hyperedges defined as follows. $$\cR_d=\sum\limits_{(c_1,\ldots,c_d) \in [k]^d} m_o\left(\chi({1,c_1}),\ldots,\chi({1,c_{a_1}}),\ldots \ldots, \chi({s,c_{d-a_s+1}}),\ldots,\chi({s,c_{d}})\right)\times h_d(c_1,\ldots,c_d).$$
		\remove{
		\item A hyperedge is said to be \emph{properly colored} if its {\sc COLOR} induces the hash function to evaluate to $1$. Let $\cR_d$ denote the number of \emph{properly colored} hyperedges defined as follows. $$\cR_d=\sum\limits_{(c_1,\ldots,c_d) \in [k]^d} m_o\left(\chi({1,c_1}),\ldots,\chi({1,c_{a_1}}),\ldots \ldots, \chi({s,c_{d-a_s+1}}),\ldots,\chi({s,c_{d}})\right)\times h_d(c_1,\ldots,c_d).$$
		}
	\end{itemize}   
Then, for a suitable constant $\theta > d$ and $p_d = \frac{d!}{n^{4\theta -2d}}$,
$$ \pr \left(\size{ \cR_d  - \frac{m_o(A_1^{[a_1]},\ldots,A_s^{[a_s]})}{k}} \geq 2^{2d}\theta^d   \sqrt{ d! m_o(A_1^{[a_1]},\ldots,A_s^{[a_s]}) \log^d n} \right) \leq p_d.$$
\end{lem}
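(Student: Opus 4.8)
The plan is to first pin down $\E[\cR_d]$ and then bound the deviation of $\cR_d$ from it by separating the two sources of randomness -- the random colouring of $A$ and the random hash $h_d$ -- with two tools: a Chernoff--Hoeffding (Bernstein-type) bound for $h_d$ with the colouring frozen, and the \emph{method of averaged bounded differences}, driven by an induction on $d$, for the colouring. The mean is immediate: write $\cR_d=\sum_{F_o}h_d(c(F_o))$, where $F_o$ ranges over the $m_o(\ssubset)$ ordered hyperedges being counted and $c(F_o)\in[k]^d$ is its colour pattern; the $d$ vertices of $F_o$ are distinct, so $c(F_o)$ is uniform on $[k]^d$, and since $h_d$ is independent of the colouring with $\pr[h_d(\mathbf{a})=1]=1/k$ for every $\mathbf{a}$, linearity gives $\E[\cR_d]=m_o(\ssubset)/k$. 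So it remains to show $\cR_d$ stays within $2^{2d}\theta^d\sqrt{d!\,m_o(\ssubset)\log^d n}$ of this value except with probability $p_d$.

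\textbf{Conditioning on the colouring.} Fix the colouring and let $N_{\mathbf{c}}$ be the number of ordered hyperedges with colour pattern $\mathbf{c}$, so $\sum_{\mathbf{c}}N_{\mathbf{c}}=m_o(\ssubset)$ and, conditionally, $\cR_d=\sum_{\mathbf{c}}N_{\mathbf{c}}\,h_d(\mathbf{c})$ is a weighted sum of independent $\mathrm{Bernoulli}(1/k)$ variables. A Bernstein inequality then bounds the conditional deviation by $O\!\left(\sqrt{\tfrac1k\bigl(\sum_{\mathbf{c}}N_{\mathbf{c}}^2\bigr)\log n}+\bigl(\max_{\mathbf{c}}N_{\mathbf{c}}\bigr)\log n\right)$, so it is enough to show that, with probability $\ge 1-O(p_d)$ over the colouring, $\max_{\mathbf{c}}N_{\mathbf{c}}$ and $\tfrac{1}{m_o(\ssubset)}\sum_{\mathbf{c}}N_{\mathbf{c}}^2$ are small enough (within the $2^{2d}\theta^d$, $d!$ budget). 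These are statements about the colouring alone, and controlling them is the real content of the lemma.

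\textbf{The induction and the colouring estimate.} I would prove the needed colouring estimate by induction on $d$. Freezing the colour $c_1$ of the vertex in the first slot leaves a uniformly random colouring on the remaining vertices, and restricting $h_d$ to $h_d(c_1,\cdot)\colon[k]^{d-1}\to\{0,1\}$ yields a hash with exactly the same defining property (the coordinates of $h_d$ are mutually independent); hence the contributions ``with first colour $c_1$'' are an instance of the same quantity one dimension down, for the residual $(d-1)$-uniform (multi-)hypergraph obtained by deleting the first slot, whose hyperedge count is $\mu_{c_1}$ (the number of ordered hyperedges of $\cH$ whose first vertex got colour $c_1$). The inductive hypothesis controls each of those, and one aggregates over $c_1\in[k]$ via Cauchy--Schwarz, $\sum_{c_1}\sqrt{\mu_{c_1}}\le\sqrt{k\sum_{c_1}\mu_{c_1}}=\sqrt{k\,m_o(\ssubset)}$. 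The factor $\sqrt{k}$, the extra $\sqrt{\log n}$ per level, and the $d!$ and $\theta^d$ are absorbed by the prefactor of the claim, and the failure probabilities of the (polynomially many, over all $d$ levels) sub-instances are union-bounded into $p_d$.

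\textbf{Main obstacle.} The technical heart is the argument bounding how lopsided the first-colour split can be, and more generally the martingale increments in the averaged-bounded-differences step: exposing the vertex colours one at a time, recolouring a single vertex $v$ can shift a residual count $\mu_{c_1}$ by up to $\deg_{\cH}(v)$, so a worst-case bounded-differences bound is hopeless; one must exploit that a single vertex is spread over the $k$ colour classes so that the \emph{average} effect of a recolouring is much smaller -- precisely the regime of the method of averaged bounded differences -- and check that the resulting sum of averaged squared increments is $\tOh(m_o(\ssubset))$ with probability $\ge 1-O(p_d)$. Threading this through all $d$ levels of the recursion, keeping the $2^{2d}\theta^d$, $d!$ and $\log^d n$ aligned and the union bound over sub-instances inside $p_d=d!/n^{4\theta-2d}$ (which fixes $\theta>d$), is the delicate part.
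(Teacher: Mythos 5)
Your set-up is clean and your computation of the mean is actually sharper than what you need: since every ordered hyperedge falls into exactly one colour pattern, $\sum_{\mathbf c}N_{\mathbf c}=m_o(\ssubset)$ holds deterministically, so conditioned on the colouring the mean of $\cR_d$ is exactly $m_o(\ssubset)/k$ and \emph{all} of the fluctuation comes from the at most $k^d$ independent hash bits. But this observation is precisely what sinks the rest of your plan. Your reduction asks for $\sum_{\mathbf c}N_{\mathbf c}^2$ to be $\tOh_d\left(m_o(\ssubset)\right)$ with probability $1-O(p_d)$ over the colouring, and no colouring can achieve that: by Cauchy--Schwarz, $\sum_{\mathbf c}N_{\mathbf c}^2\ge m_o(\ssubset)^2/k^d$ always, and $k$ is a fixed constant in this paper (the algorithm invokes the lemma with $k=4$). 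Hence the conditional variance of $\cR_d$ given the colouring, which is $\frac1k\left(1-\frac1k\right)\sum_{\mathbf c}N_{\mathbf c}^2$, is $\Omega\left(m_o(\ssubset)^2/k^{d+1}\right)$, and no Bernstein bound over the hash randomness can produce a deviation of order $\sqrt{m_o(\ssubset)}\,\mathrm{polylog}\,n$ once $m_o(\ssubset)$ exceeds a polylogarithmic threshold. The failure is concrete: if all hyperedges share the same first $d-1$ vertices, then after colouring, $\cR_d$ is essentially $\frac{m_o(\ssubset)}{k}$ times a $\mathrm{Bin}(k,1/k)$ variable, which vanishes with probability about $(1-1/k)^k$, a constant. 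So the "colouring estimate" in your third step is not a delicate point to be threaded through the induction; it is unattainable, and the Cauchy--Schwarz aggregation you sketch cannot rescue it. (If you want a statement provable along your lines, you must either let $k$ grow so that $m_o(\ssubset)^2/k^{d+1}$ sits below the permitted error, or replace the random hash by a deterministic pairing of colour classes as in Beame et al.~\cite{BeameHRRS18}, where, conditioned on the colours of shared vertices, the remaining indicators genuinely are independent.)

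For comparison, the paper argues along a different route: it inducts on $d$, exposing vertex colours one at a time, and uses the inductive hypothesis --- applied to the $(d-1)$-uniform link of the vertex being coloured with the restricted hash $h_d(c,\cdot)$ --- to bound, with high probability, the change in the count caused by recolouring a single vertex; hyperedges with an uncoloured vertex contribute $1/k$ in expectation under either colour (so they cancel), and the method of averaged bounded differences with a bad event then yields the tail, with $\sum_t c_t^2$ controlled via $\sum_t\size{\cF_o(t)}=d\cdot m_o(\ssubset)$. The two analyses part ways at the $d=1$ base case: the paper treats the indicators $X_F=h_1(\mathrm{col}(v_F))$ as independent and applies Hoeffding, whereas your conditioning makes transparent that two vertices receiving the same colour share one and the same hash bit, so $\cR_1$ is really a weighted sum of only $k$ independent bits and does not concentrate at the claimed scale for constant $k$. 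So while your proposal cannot be completed as written, the obstruction it isolates is genuine and is exactly the point that the paper's base-case independence claim passes over; it is worth flagging rather than burying.
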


Before getting into the formal proof of the lemma, let us explore the role of the hash function in sparsification. 

\subsection{{The role of the hash function in sparsification}}
\noindent 
We feel our work could generalize the framework of Beame et al.~\cite{BeameHRRS18} to hyperedge estimation mainly because of the new sparsification result aided by an involved use of induction. The sparsification result depends on the particular choice of hash function. As we move from edge to triangle and then to hyperedge in a $d$-uniform hypergraph, implicit structures blow up, and managing their interrelations also become difficult. Consider the following. Two edges can intersect in at most one vertex. two triangles can intersect in at most two vertices and two hyperedges in $d$-uniform hypergraph intersect in at most $d-1$ vertices. \remove{The existence of three triangles $(a,b,x)$, $(b,c,y)$ and $(a,c,z)$ in $G$ imply the existence of the triangle $(a,b,c)$ in $G$. The number of such implicit structures will blow up for hyperedges.}

To count the number of edges in $G$ using sparsification, Beame et al.~\cite{BeameHRRS18} \colored the vertices with $k$ colors and looked at the edges between certain pair of color classes. So, it boils down to counting the number of properly colored edges. This random coloring process can be encoded by the hash function $h:[k]\rightarrow [k]$. In our first effort in generalizing this work to count the number of triangles in a graph, we extended the above hash function to the following one -- $h:[k] \times [k] \rightarrow [k]$ in ~\cite{Bhatta-abs-1808-00691,BhattaISAAC}. Look at any edge $(a,b)$ in this graph under the above hash function. The edge is properly colored if the colors on the vertices of the edge come from specified color classes. We count all triangles on these properly colored edges, and scale it appropriately. Once an edge is not properly colored, we may miss it and hence, all triangles incident on that edge. This is why we need a bound on the number of triangles incident on an edge in~\cite{Bhatta-abs-1808-00691,BhattaISAAC}, otherwise the variance blows up and the concentration inequality gives poor result.

In this work, we consider a different hashing scheme with the hash function $h:[k]^d \rightarrow \{0,1\}$, in the $d$-uniform hypergraph setting, with a suitable probability of hash values becoming $1$. For triangles ($3$-uniform hypergraph), the particular hash function is thus $h:[k]^3 \rightarrow \{0,1\}$. So, there is no concept of proper and improper edges. All edges are taken into consideration and then all triangles, and we can work with any number of triangles incident on an edge. Moreover, there is an inductive nature to this particular choice of hash function. Consider fixing the color of a vertex of a triangle. Then the hash function $h : [k]\times[k]\rightarrow \{0,1\}$ behaves as a random coloring on the other two vertices of the triangle. 

\subsection{Proof of the lemma}
\begin{proof} We prove this lemma using induction on $d$.

\paragraph*{The base case:} For $d=1$, the hash function is $h_1:[k]\rightarrow \{0,1\}$. The vertices in $A_1$ are \colored with $[k]$. So, $\cR_1=\sum\limits_{c_1 \in [k]}m_o(\chi(1,c_1)) \times h_1(c_1)$. We have, $\E[\cR_1]=\frac{m_o(A_1)}{k}$ \remove{(Gopi: what does $m_o(A_1)$ mean? Is it just $\size{A_1}$? Pls discuss.)}. Note that the set of $1$-uniform hyperedges $\cF_o(A_1)$ is a subset of $A_1$. For each hyperedge $F \in \cF_o(A_1)$, let $X_F$ be the indicator random variable such that $X_F=1$ if and only if $h_1(x)=1$, where the only element in $F$ gets color $x$. Therefore, $\pr(X_F=1)=\frac{1}{k}$. Observe that the random variables $X_F$ are independent. Then, $\cR_1=\sum\limits_{F \in \cF_o(A_1)}X_F$. Now, applying Hoeffding's bound (see Lemma~\ref{lem:hoeff_inq} in Appendix~\ref{sec:prelim}), we get 
$$\pr\left(\size{\cR_1-\frac{m_o(A_1)}{k}} \geq  4{\theta} \cdot \sqrt{ \log n \cdot m_o(A_1)}\right) \leq \frac{2}{n^{16\theta^2}}\leq  p_1.$$

\paragraph*{The inductive case:} Let $A=\{1,\ldots,n'\}$, where $n' \leq n$. Let $Z_i \in [k]$ be the random variable that denotes the color assigned to the vertex $i \in [n'] $. Note that $\cR_d$ is a function of $Z_1,\ldots,Z_{n'}$, that is, $\cR_d=f(Z_1,\ldots,Z_{n'})$. Now, consider the following definition.

\begin{defi} \label{defi:proper-order} An ordered hyperedge $F_o \in \cF_o\left(\chi({1,c_1}),\ldots,\chi({1,c_{a_1}}),\ldots, \chi({s,c_{d-a_s+1}})\ldots \chi({s,c_d})\right)$ is said to be \emph{properly} colored if there exists $(c_1,\ldots,c_d) \in [k]^d$ such that $h_d(c_1,\ldots,c_d)=1$. \end{defi}

\noindent Observe that the probability that a hyperedge is properly colored is $\frac{1}{k}$, and $\cR_d$, as defined in the statement of Lemma~\ref{lem:sparse2}, represents the number of properly colored hyperedges. So, $\E[\cR_d]=m_o(A_1^{[a_1]},\ldots,A_s^{[a_s]})/k$. 

	Let us focus on the instance when vertices (in some arbitrary order) $1,\ldots,t-1$ have been colored and we are going to color the vertex $t$. Recall that $\cF_o(t)$ denotes the set of ordered hyperedges containing $t$ as one of the vertices. Let $\cF_o(t,\mu) \subseteq \cF_o(t)$ be the set of ordered hyperedges containing $t$ as the $\mu$-th vertex, where $\mu \in [d]$. Note the following observation that will be used later in the proof. 

\begin{obs} \label{obs:mu-edge} $\size{\cF_o(t,\mu)}=\frac{\size{\cF_o(t)}}{d},$ where $\mu \in [d]$. \end{obs}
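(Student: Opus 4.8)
The plan is to show that $\cF_o(t)$ breaks up into $d$ classes of equal size, one for each coordinate the vertex $t$ could occupy, and then read off the claim. First I would record the one place where $d$-uniformity is used: since every hyperedge of $\cH$ is a set of exactly $d$ \emph{distinct} vertices, in any ordering of such a hyperedge the vertex $t$ appears in exactly one of the $d$ coordinates. Consequently $\cF_o(t)=\bigcup_{\mu=1}^{d}\cF_o(t,\mu)$, and this union is \emph{disjoint}; hence $\size{\cF_o(t)}=\sum_{\mu=1}^{d}\size{\cF_o(t,\mu)}$.

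Next I would prove that all the parts $\cF_o(t,1),\ldots,\cF_o(t,d)$ have the same cardinality by exhibiting, for any $\mu,\nu\in[d]$, an explicit bijection $\phi_{\mu,\nu}\colon\cF_o(t,\mu)\to\cF_o(t,\nu)$. Given an ordered hyperedge $F_o=(x_1,\ldots,x_d)$ with $x_\mu=t$, define $\phi_{\mu,\nu}(F_o)$ to be the tuple obtained by transposing the entries in coordinates $\mu$ and $\nu$. This is a permutation of the same underlying $d$-element vertex set, so $\phi_{\mu,\nu}(F_o)\in\cF_o(\cH)$, and it has $t$ in coordinate $\nu$, so it lies in $\cF_o(t,\nu)$; thus $\phi_{\mu,\nu}$ is well defined. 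Since $\phi_{\nu,\mu}\circ\phi_{\mu,\nu}$ is the identity on $\cF_o(t,\mu)$, the map $\phi_{\mu,\nu}$ is a bijection, and therefore $\size{\cF_o(t,1)}=\cdots=\size{\cF_o(t,d)}$.

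Combining the two steps, $\size{\cF_o(t)}=\sum_{\mu=1}^{d}\size{\cF_o(t,\mu)}=d\cdot\size{\cF_o(t,\mu)}$ for every $\mu\in[d]$, which is exactly the assertion $\size{\cF_o(t,\mu)}=\size{\cF_o(t)}/d$. (Equivalently, one could argue by direct counting: each unordered hyperedge incident on $t$ gives rise to $(d-1)!$ ordered hyperedges with $t$ in position $\mu$ and to $d!$ ordered hyperedges in all, so $\size{\cF_o(t,\mu)}=(d-1)!\,\deg_{\cH}(t)$ and $\size{\cF_o(t)}=d!\,\deg_{\cH}(t)$.) I do not expect any genuine obstacle here; the only point that requires a word of care is the disjointness of the partition $\cF_o(t)=\bigcup_{\mu}\cF_o(t,\mu)$, which is precisely where the $d$-uniformity (distinctness of the vertices within a hyperedge) enters, and beyond that the statement is a routine symmetry fact.
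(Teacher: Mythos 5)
Your proof is correct and is exactly the argument the paper has in mind: the observation is stated without proof, and your partition of $\cF_o(t)$ by the coordinate occupied by $t$, together with the transposition bijection (equivalently, the direct count $\size{\cF_o(t,\mu)}=(d-1)!\,\deg_{\cH}(t)$ versus $\size{\cF_o(t)}=d!\,\deg_{\cH}(t)$), is the intended symmetry fact. No gaps.
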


A hyperedge $F \in \cF_o(t,\mu)$ is said to be of type $\lambda$ if $F$ has exactly $\lambda$ many vertices from $[t]$, where $\lambda \in [d]$. For $\mu \in [d]$, let $\cF_o^\lambda(t)$ and $\cF_o^\lambda (t,\mu)$ be the set of type $\lambda$ ordered hyperedges in $\cF_o(t)$ and $\cF_o(t,\mu)$, respectively. Given that the vertex $t$ is colored with color $c \in [k]$, let $N_c^\lambda(t)$ and $N_c^\lambda(t,\mu)$ be the random variables that denote the number of ordered hyperedges in $\cF_o^\lambda(t)$ and $\cF_o^\lambda(t,\mu)$ that are properly colored, respectively. 
 
Let $\E_{\cR_d}^t$ denote the difference in the conditional expectation of the number of ordered hyperedges that are properly colored such that the $t$-th vertex is differently colored by considering the hyperedges in each $\cF_o^\lambda (t,\mu), \lambda,\mu \in [d]$, separately.

\begin{eqnarray*}
	\E_{\cR_d}^t &=& \size{\E[\cR_d~|~Z_1,\ldots,Z_{t-1}, Z_t =\rho] - \E[\cR_d~|~Z_1,\ldots,Z_{t-1}, Z_t =\nu]}\\
	&=& \size{\sum\limits_{\mu=1}^d\left(N_{\rho}^{d}(t,\mu) - N_{\nu}^{d}(t,\mu)\right) + \sum\limits_{\lambda=1}^{d-1} \E\left[ N_{\rho}^{\lambda}(t) - N_{\nu}^{\lambda}(t) \right]}\\
	&\leq& \sum\limits_{\mu=1}^d \size{N_{\rho}^{d}(t,\mu) - N_{\nu}^{d}(t,\mu)} + \size{\sum\limits_{\lambda=1}^{d-1} \E\left[ N_{\rho}^{\lambda}(t) - N_{\nu}^{\lambda}(t) \right]}
\end{eqnarray*}

Now, consider the following claim. 
\begin{cl} \label{cl:inter_sparse}
	\begin{itemize}
		\item[(a)] $\pr\left(\size{ N_{\rho}^{d}(t,\mu) - N_{\nu}^{d}(t,\mu)} \leq 2^{2d-1}\theta^{d-1} \sqrt{{(d-1)!\cF_o(t,\mu)}\log^{d-1} n}\right) \geq 1-2p_{d-1}$, where $\mu \in [d]$ and $\theta > d$ is the constant mentioned in the statement of Lemma~\ref{lem:sparse2}. 
		\item[(b)] $\E\left[ N_{\rho}^{\lambda}(t) - N_{\nu}^{\lambda}(t) \right]=0,~\forall\lambda \in [d-1]$.
	\end{itemize}
\end{cl}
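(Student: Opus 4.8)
The plan is to prove the two parts of Claim~\ref{cl:inter_sparse} by reducing each to an application of the inductive hypothesis of Lemma~\ref{lem:sparse2} for $d-1$. The key observation is that once we fix the colors $Z_1,\ldots,Z_{t-1}$ and the color $c$ of vertex $t$, the quantity $N_c^d(t,\mu)$ counts properly colored ordered hyperedges in $\cF_o(t,\mu)$ whose other $d-1$ vertices all lie in $[t]$; but wait — we want the remaining $d-1$ vertices to be \emph{uncolored} so that the induction applies to a fresh random coloring. Let me re-read: a type-$d$ hyperedge in $\cF_o(t,\mu)$ has exactly $d$ vertices in $[t]$, i.e. \emph{all} its vertices are among the first $t$ vertices, with $t$ itself in position $\mu$. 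So the other $d-1$ vertices have \emph{already} been colored (they are in $[t-1]$), and fixing $Z_t=\rho$ versus $Z_t=\nu$ changes whether each such hyperedge is properly colored via the hash value $h_d$ with the $\mu$-th coordinate set to $\rho$ or $\nu$. Thus $N_\rho^d(t,\mu)$ is the number of hyperedges $F\in\cF_o^d(t,\mu)$ with $h_d(\text{colors of }F, \text{ with position }\mu = \rho)=1$. The map $(c_1,\ldots,c_{\mu-1},c_{\mu+1},\ldots,c_d)\mapsto h_d(c_1,\ldots,\rho,\ldots,c_d)$ is — for fixed $\rho$ — a hash function $h_{d-1}:[k]^{d-1}\to\{0,1\}$ with $\pr[h_{d-1}(\cdot)=1]=1/k$, because the inductive nature of the hash function (discussed in Section~\ref{sec:sparse}) means fixing one coordinate yields an independent $(d-1)$-ary hash with the same marginal. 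So $N_\rho^d(t,\mu)$ is itself an $\cR_{d-1}$-type random variable over the already-chosen random colors of the $d-1$ non-$t$ vertices of the relevant hyperedges, and similarly $N_\nu^d(t,\mu)$; both concentrate around $\size{\cF_o(t,\mu)}/k$ by the inductive hypothesis.

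For part (a), I would apply Lemma~\ref{lem:sparse2} with parameter $d-1$ twice: once to $N_\rho^d(t,\mu)$ and once to $N_\nu^d(t,\mu)$, each time taking the ground set to be the union of vertices appearing (in non-$t$ positions) in hyperedges of $\cF_o^d(t,\mu)$, suitably partitioned to respect the disjoint/equal structure required by the lemma, so that the relevant count of $(d-1)$-uniform ordered configurations equals $\size{\cF_o(t,\mu)}$. The inductive hypothesis gives
$$\pr\left(\size{N_\rho^d(t,\mu) - \tfrac{\size{\cF_o(t,\mu)}}{k}} \geq 2^{2(d-1)}\theta^{d-1}\sqrt{(d-1)!\,\size{\cF_o(t,\mu)}\log^{d-1}n}\right)\leq p_{d-1},$$
and the same for $\nu$. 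A union bound over the two events, followed by the triangle inequality $\size{N_\rho^d(t,\mu)-N_\nu^d(t,\mu)}\leq \size{N_\rho^d(t,\mu)-\size{\cF_o(t,\mu)}/k} + \size{N_\nu^d(t,\mu)-\size{\cF_o(t,\mu)}/k}$, yields the bound $2\cdot 2^{2(d-1)}\theta^{d-1}\sqrt{(d-1)!\,\size{\cF_o(t,\mu)}\log^{d-1}n} = 2^{2d-1}\theta^{d-1}\sqrt{(d-1)!\,\size{\cF_o(t,\mu)}\log^{d-1}n}$ with failure probability at most $2p_{d-1}$, which is exactly the claimed inequality.

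For part (b), the point is that type-$\lambda$ hyperedges with $\lambda\leq d-1$ have at least one vertex \emph{outside} $[t]$, hence still uncolored. For such a hyperedge $F\in\cF_o^\lambda(t)$, whatever the colors of its colored vertices (including $t$), the probability it ends up properly colored is taken over the colors of its $\geq 1$ uncolored vertices together with the hash function; since the hash function marginal is $1/k$ regardless, $\E[N_\rho^\lambda(t)]$ and $\E[N_\nu^\lambda(t)]$ both equal $\size{\cF_o^\lambda(t)}/k$ — the color of $t$ is irrelevant to this conditional expectation because at least one free coordinate of the hash remains, over which we average. Hence $\E[N_\rho^\lambda(t)-N_\nu^\lambda(t)]=0$ for every $\lambda\in[d-1]$, and summing over $\lambda$ the total difference is $0$.

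The main obstacle I anticipate is the bookkeeping in part (a): to legitimately invoke Lemma~\ref{lem:sparse2} for $d-1$ one must exhibit the set of non-$t$ vertices of $\cF_o^d(t,\mu)$ as a disjoint union $A_1',\ldots,A_{s'}'$ with multiplicities $a_1',\ldots,a_{s'}'$ summing to $d-1$, in such a way that the induced count of properly-colored $(d-1)$-uniform ordered configurations is precisely $N_\rho^d(t,\mu)$ and the "true" count is $\size{\cF_o^d(t,\mu)} = \size{\cF_o(t,\mu)}$ (using that all type-$d$ hyperedges through $t$ are exactly those in $\cF_o(t,\mu)$ since every vertex already lies in $[t]$, together with Observation~\ref{obs:mu-edge}). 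One must also carefully justify that fixing the $\mu$-th coordinate of $h_d$ to a constant yields a bona fide $h_{d-1}$ satisfying the hypothesis of the lemma — this is where the specific structure of the hash family is used, and it should be stated cleanly as a sub-observation before the reduction. Everything else is a union bound plus the triangle inequality.
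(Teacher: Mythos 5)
Your proposal follows essentially the same route as the paper: reduce part (a) to the inductive hypothesis by viewing the type-$d$ hyperedges through $t$ in position $\mu$ as a $(d-1)$-uniform hypergraph on the already-colored vertices of $[t-1]$, observe that fixing the $\mu$-th coordinate of $h_d$ to the color of $t$ yields a legitimate $h_{d-1}$ with marginal $1/k$, apply the induction to both $N^d_\rho(t,\mu)$ and $N^d_\nu(t,\mu)$, and finish by triangle inequality plus a union bound (giving the factor $2$ and failure probability $2p_{d-1}$); part (b) is argued exactly as in the paper. The one inaccuracy is your bookkeeping claim that the true count is $\size{\cF_o^d(t,\mu)}=\size{\cF_o(t,\mu)}$: this equality is false in general, since hyperedges through $t$ may have vertices outside $[t]$, and consequently the displayed concentration statement centered at $\size{\cF_o(t,\mu)}/k$ is not what the induction gives (nor is it true when $\cF_o^d(t,\mu)$ is much smaller than $\cF_o(t,\mu)$). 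The correct version, which is what the paper uses, centers both $N^d_\rho(t,\mu)$ and $N^d_\nu(t,\mu)$ at $\size{\cF_o^d(t,\mu)}/k$ and invokes only the inequality $\size{\cF_o^d(t,\mu)}\leq\size{\cF_o(t,\mu)}$ to enlarge the deviation term; since the common center cancels in the triangle inequality, your final bound and probability are unaffected, so this is a local fix rather than a structural flaw.
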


\begin{proof}[Proof of Claim~\ref{cl:inter_sparse}]
\begin{itemize}
	\item[(a)] For simplicity, we argue for $\mu =1$. However, the argument will be similar for any $\mu\in [d]$. Consider a $(d-1)$-uniform hypergraph $\cH'$ such that $U(\cH')=[t-1]$ and $\cF(\cH')=\{(x_1,\ldots,x_{d-1}):(t,x_1,\ldots,x_{d-1})\in \cF_o^d(t,1)\}$. Let $h_{d-1}:[k]^{d-1} \rightarrow \{0,1\}$ be a hash function such that $h_{d-1}(x_1,\ldots,x_{d-1})=h_d(t,x_1,\ldots,x_d)$. Observe that $\pr(h_{d-1}({\bf a})=1) =\frac{1}{k}$ for each tuple ${\bf a} \in [k]^{d-1}$. Consider the $(d-1)$-partite hypergraph $\cH'\left(B_1^{[a_1-1]}B_2^{[a_2]}\cdots B_s^{[a_s]}\right)$, where $B_i = A_i \cap [t-1]$. Observe that $m_o\left(B_1^{[a_1-1]}B_2^{[a_2]}\cdots B_s^{[a_s]}\right)=\size{\cF_o^d(t,1)}$. Recall that the vertices in $\cup_{i=1}^s A_i$ are \colored with $[k]$. Let $\chi'(i,j)=\{v \in B_i:\mbox{$v$ is colored with color $j$}\}$. Let
		$$\cR_{d-1}=\sum\limits_{(c_1,\ldots,c_{d-1}) \in [k]^{d-1}} m_o\left(\chi'({1,c_1}),\ldots,\chi'({1,c_{a_1-1}}),\ldots \ldots, \chi'({s,c_{d-a_s}})\ldots \chi'({s,c_{d-1}})\right)\times h(c_1,\ldots,c_{d-1})$$

Observe that the random variables $N_\rho^d(t,1)$ and $N_\nu^d(t,1)$ follow the distribution of random variable $\cR_{d-1}$. By the induction hypothesis, 
$$\pr\left(\size{\cR_{d-1}-\frac{m_o\left(B_1^{[a_1-1]}B_2^{[a_2]}\cdots B_s^{[a_s]}\right)}{k}} \geq  2^{2d-2}\theta^{d-1} \sqrt{{(d-1)! m_o\left(B_1^{[a_1-1]}B_2^{[a_2]}\cdots B_s^{[a_s]}\right)}\log^{d-1} n}\right) \leq p_{d-1}$$

Using $m_o\left(B_1^{[a_1-1]}B_2^{[a_2]}\cdots B_s^{[a_s]}\right)=\size{\cF_o^d(t,1)}$, we have
 \begin{equation}
  \label{eqn:ind}
	 \pr\left(\size{\cR_{d-1} - \frac{\size{\cF_o^d(t,1)}}{k}} \geq  2^{2d-2}\theta^{d-1} \sqrt{{(d-1)!\cF^d_o(t,1)}\log^{d-1} n}\right) \leq p_{d-1}
 \end{equation}

Using the above equation, the claim follows using the following. 

		$$\pr\left(\size{ N_{\rho}^{d}(t,1) - N_{\nu}^{d}(t,1)} \geq 2^{2d-1}\theta^{d-1} \sqrt{{(d-1)!\cF_o(t,1)}\log^{d-1} n}\right) \leq 2 p_{d-1}.$$ 
		
Let $L=2^{2d-1}\theta^{d-1} \sqrt{{(d-1)!\cF_o(t,1)}\log^{d-1} n}$. Now, 
\begin{eqnarray*}
	&&\pr\left(\size{ N_{\rho}^{d}(t,1) - N_{\nu}^{d}(t,1)} \geq 2^{2d-1}\theta^{d-1} \sqrt{{(d-1)!\cF_o(t,1)}\log^{d-1} n}\right) \\
	&\leq& \pr\left(\size{N_{\rho}^{d}(t,1) -\frac{\size{\cF_o^d(t,1)}}{k}} \geq \frac{L}{2} \right) + \pr\left(\size{N_{\nu}^{d}(t,1) -\frac{\size{\cF_o^d(t,1)}}{k}} \geq \frac{L}{2} \right)\\
	&=& 2 \cdot \pr\left(\size{N_{\rho}^{d}(t,1) -\frac{\size{\cF_o^d(t,1)}}{k}} \geq  2^{2d-2}\theta^{d-1} \sqrt{{(d-1)!\cF_o(t,1)}\log^{d-1} n}\right) \\
	&\leq& 2\cdot \pr\left(\size{N_{\rho}^{d}(t,1) -\frac{\size{\cF_o^d(t,1)}}{k}} \geq 2^{2d-2}\theta^{d-1} \sqrt{{(d-1)!\cF^d_o(t,1)}\log^{d-1} n}\right)~~~~(\because \cF_o^d(t,1) \leq \cF_o(t,1))\\
	&\leq& 2p_{d-1}~~~~(\mbox{By Equation~\ref{eqn:ind}})
\end{eqnarray*}

\item[(b)]First, consider the case when $t$ is colored with color $\rho$. For $F \in \cF_o^\lambda(t), \lambda \in [d-1]$, let $X_F$ be the indicator random variable such that $X_F=1$ if and only if $F$ is properly colored. As $F$ is of type $\lambda$, there exists at least one vertex in $F$ that is not colored yet, that is, $\pr(X_F=1)=\frac{1}{k}$. Observe that $N^\lambda_\rho(t)=\sum\limits_{F \in \cF_o^\lambda(t)}X_F$. Hence, $\E\left[N^\lambda_\rho(t)\right]=\size{\cF_o^\lambda(t)}/k$. Similarly, one can show that $\E\left[N^\lambda_\nu(t)\right]=\size{\cF_o^\lambda(t)}/k$. Hence, $\E\left[ N_{\rho}^{\lambda}(t) - N_{\nu}^{\lambda}(t) \right]=0$.
\end{itemize}

\end{proof}

Now, let us come back to the proof of Lemma~\ref{lem:sparse2}. By Claim~\ref{cl:inter_sparse} and Observation~\ref{obs:mu-edge}, we have the following with probability at least $1-2d\cdot p_{d-1}$.
\begin{eqnarray*}
\E^t_{\cR_d} &\leq& 2^{2d-1}\theta^{d-1} \cdot d \cdot \sqrt{(d-1)!\frac{\cF_o(t)}{d} \log^{d-1} n} =  2^{2d-1}\theta^{d-1} \sqrt{{d!\cF_o(t)} \log^{d-1} n}=c_t,
\end{eqnarray*}
where $c_t= 2^{2d-1}{\theta}^{d-1} \sqrt{{d!\cF_o(t)} \log^{d-1} n}$. 

Let $\cB$ be the event that there exists $t \in [n]$ such that $\E_{\cR^d}^t > c_t$. By the union bound over all $t \in [n]$, $\pr(\cB) \leq 2dnp_{d-1}={2dn}\frac{{(d-1)!}}{n^{4\theta-2(d-1)}}\leq \frac{2d!}{n^{4\theta-2d+1}}$. Using the method of \emph{averaged bounded difference}~\cite{DubhashiP09} (See Lemma~\ref{lem:dp} in Appendix~\ref{sec:prelim}), we have 
$$ \pr\left(\size{\cR_d - \E[\cR_d]} >\delta + m_o(A_1^{[a_1]},\ldots,A_s^{[a_s]})\pr(\cB) \right) \leq e^{-{\delta^2}/{\sum\limits_{t=1}^{n} c_t^2}} + \pr(\cB)$$

We set $\delta = 2 \sqrt{\theta \log n \cdot \sum\limits_{t \in [n]}c_t^2}=2^{2d}\theta^{d-1/2}  \sqrt{d!m_o(\ssubset)\log ^d n}$. Using $m_o(\ssubset) \leq n^d$ and $\pr(\cB) \leq \frac{2d!}{n^{4\theta -4d+1}}$ , we have 
\begin{eqnarray*}
	&& \pr\left(\size{\cR_d-\frac{m_o(A_1^{[a_1]},\ldots,A_s^{[a_s]})}{k}} > 2^{2d} \theta^{d-1/2} \sqrt{d! m_o(A_1^{[a_1]},\ldots,A_s^{[a_s]}) \log ^d n}+\frac{2d!n^d}{n^{4\theta -2d+1}}\right) \\
	&\leq& \frac{1}{n^{4\theta}}+\frac{2d!}{n^{4\theta-2d+1}}
\end{eqnarray*}
Assuming $n \gg d$, 
$ \pr\left(\size{\cR_d - \frac{m_o(A_1^{[a_1]},\ldots,A_s^{[a_s]})}{k}} \geq 2^{2d} \theta ^d\sqrt{d! m_o(A_1^{[a_1]},\ldots,A_s^{[a_s]}) \log ^d n}  \right) \leq \frac{d!}{n^{4\theta -2d}}=p_{d} $
\end{proof}

\section{Proof of Lemma for Exact Estimation} \label{sec:exact}

\noindent In this Section, we prove Lemma~\ref{lem:exact}. We start with the description of the algorithm.
We restate the lemma for easy reference.
\begin{lem} [Exact Estimation: Lemma~\ref{lem:exact} restated] \label{lem:exact1} There exists a deterministic algorithm $\cAe$ that takes as input a $d$-uniform hypergraph $\cH$, constants $a_1,\ldots,a_s \in [d]$ such that $\sum_{i=1}^s a_i =d$ where $s \in [d]$, pairwise disjoint subsets $A_1,\ldots,A_s$ of $U(\cH)$, and threshold parameter $\tau \in \N$, and decides whether $m_o(\ssubset) \leq \tau$ using $\cO_d(\tau \log n)$ \gonepis queries. Moreover, the algorithm finds the exact value of $m_o(\ssubset)$ when $m_o(\ssubset) \leq \tau$. \end{lem}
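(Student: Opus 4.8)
The plan is to \emph{enumerate}, one at a time, the unordered hyperedges of $\cH$ having exactly $a_i$ vertices inside $A_i$ for every $i\in[s]$ --- i.e.\ the hyperedges counted by $m(\ssubset)$ --- and to abort the moment more than $\tau':=\lfloor \tau/\prod_{i=1}^s a_i!\rfloor$ of them have been produced. By Fact~\ref{obs:1} we have $m_o(\ssubset)=m(\ssubset)\cdot\prod_{i=1}^s a_i!$, so $m_o(\ssubset)\le\tau$ iff $m(\ssubset)\le\tau'$; hence if the enumeration finishes having found $c\le\tau'$ hyperedges we output the exact value $m_o(\ssubset)=c\cdot\prod_{i=1}^s a_i!$, and otherwise we report ``$m_o(\ssubset)>\tau$''. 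Since $\tau'\le\tau$, it suffices to make the enumeration use $\Oh_d(\tau'\log n)$ \gonepis queries, and the whole procedure is deterministic.

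The enumeration is a pruned divide-and-conquer search. A node is a \emph{configuration}: a collection of pairwise disjoint \emph{buckets}, each obtained by repeatedly halving one of $A_1,\dots,A_s$ (by index), each carrying a positive multiplicity, the multiplicities summing to $d$; the root is $A_1,\dots,A_s$ with multiplicities $a_1,\dots,a_s$. At a node we first ask the \gonepis query whose $d$ input slots are the buckets, each listed with its multiplicity. This is a legal \gonepis input: buckets descended from the same $A_i$ are by construction equal or disjoint, buckets from different $A_i$'s are disjoint, and repetition of a slot is exactly what \gonepis permits. If the answer is {\sc No} we prune. If the answer is {\sc Yes} and every bucket is a singleton, then every multiplicity must be $1$ (a singleton with multiplicity $\ge 2$ forces two equal slots and hence a {\sc No}), the $d$ bucket-vertices are distinct, and they form the unique hyperedge with the required intersection pattern inside this configuration, which we record. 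Otherwise we pick a bucket $B$ with $|B|\ge 2$, split it into two balanced halves $B=B'\sqcup B''$, and recurse on each way of writing its multiplicity $m$ as $j+(m-j)$ with $0\le j\le m$, discarding any multiplicity-$0$ bucket; this gives at most $d+1$ children. Correctness is routine: a target hyperedge $F$ follows, at each split, the unique branch matching the split of $F$ between $B'$ and $B''$, so it is recorded at exactly one leaf, and each recorded leaf yields exactly one target hyperedge; thus the number of recorded leaves equals $m(\ssubset)$.

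For the query bound, note first that any root-to-leaf path has at most $d\lceil\log_2 n\rceil$ splits: a leaf fixes $d$ vertices, and for each of them the bucket that contains it is halved at most $\lceil\log_2 n\rceil$ times along the path, while every split halves a bucket containing at least one of these $d$ vertices. After pruning, every visited internal node is a {\sc Yes}-node and therefore has a {\sc Yes}-descendant leaf; so, running the search in DFS and stopping as soon as $\tau'+1$ hyperedges have been recorded, every visited internal node is an ancestor of one of at most $\tau'+1$ recorded leaves. Hence the number of visited internal nodes is $\Oh_d(\tau'\log n)$, the number of visited leaves (pruned or recorded) is at most $(d+1)$ times that, and since each visited node costs one \gonepis query, the total is $\Oh_d(\tau'\log n)=\Oh_d(\tau\log n)$. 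I expect the only real subtlety to be the bookkeeping that keeps the ``equal-or-disjoint'' invariant on the buckets --- so that every query is a syntactically valid \gonepis query --- together with the balanced-halving / one-vertex-chain argument that pins the depth at $\Oh_d(\log n)$ rather than $\Oh(n)$; once these are in place the counting and the threshold logic are immediate.
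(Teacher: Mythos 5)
Your proposal is correct and is essentially the paper's own proof: both grow a search tree by repeatedly bisecting the input vertex sets, spend one \gonepis{} query per node to prune branches containing no hyperedge of the prescribed intersection pattern, read off individual hyperedges at all-singleton leaves, and abort as soon as a budget of order $\tau\log n$ (equivalently, $\tau'+1$ recorded hyperedges) is exceeded. The only differences are bookkeeping: you count unordered hyperedges and convert via Fact~\ref{obs:1}, and you split one bucket per node with branching over multiplicity splits (at most $d+1$ children), whereas the paper halves all $d$ slots simultaneously ($2^d$ children, depth $\Oh(\log n)$ instead of your $\Oh(d\log n)$); both yield the claimed $\Oh_d(\tau\log n)$ query bound.
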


\begin{proof} The algorithm first determines the exact value of $m_o(\ssubset)$ using $\Oh_d(m_o(\ssubset) \log n)$ \gonepis queries. Then, we show how to modify it to obtain the desired bound. We initialize a tree ${\cal T}$ with $(\ssubset)$ as the root. The nodes of the tree are labeled with either $0$ or $1$. If $m_o(\ssubset)=0$, we label the root with $0$ and terminate. Otherwise, we label the root with $1$, and as long as there is a leaf node $(B_1^{[b_1]},\ldots,B_{t}^{[b_t]})$ labeled with $1$, we do the following. Here, note that $0 \leq b_i \leq d$ and $\sum\limits_{i=1}^t b_i = d$. Also, for $i \neq j$, either $B_i \cap B_j = \emptyset$ or $B_i=B_j$.

\begin{itemize}
	\item[(i)] If $m_o(\tsubset)=0$ or there exists an $i \in [t]$ such that $\size{B_i} < b_i$, then we label $(\tsubset)$ with $0$. Otherwise, we label $(\tsubset)$ with $1$.
	\item[(ii)] We partition each $B_i$ into two parts $B_{i1}$ and $B_{i2}$ such that $\size{B_{i1}}=\lceil \frac{B_{i}}{2} \rceil$ and $\size{B_{i2}}=\lfloor \frac{B_{i}}{2} \rfloor$. There may exist some $B_i$ for which $B_{i2}=\emptyset$. We add nodes of the form $(C_{11j},\ldots,C_{1b_1j}, \ldots, C_{t1j},\ldots,C_{tb_tj})$ as the children of $(\tsubset)$, for $j\in \{1,2\}$ where $C_{ib_i1}=\lceil \frac{C_{ib_i}}{2}\rceil$ and $C_{ib_i2}=\lfloor \frac{C_{ib_i}}{2}\rfloor$. Note that for each node $(\tsubset)$ with label $1$, we add $\prod\limits_{i=1}^{t} 2^{b_i} = 2^d$ nodes as children of it.
\end{itemize}

Let $\cT'$ be the tree after deleting all the leaf nodes in $\cT$. Observe that $m_o(\tsubset)$ is the number of leaf nodes in $\cT'$ and
\begin{itemize}
	\item the height of ${\cal T}$ is bounded by $\max\limits_{i \in [t]} \log \size{A_i} +1  \leq \log n +1 $. 
	\item the query complexity of the above procedure is bounded by the number of nodes in ${\cal T}$ as we make at most one query per node of $\cT$.
\end{itemize}

The number of nodes in $\cT'$ which equals the number of internal nodes of $\cT$ is bounded by $(\log n+1) m_o(\tsubset)$. Hence, the number of leaf nodes in $\cT$ is at most $2^d (\log n + 1)m_o(\ssubset)$. The total number of nodes in $\cT$ is at most $(2^d+1) (\log n+1)m_o(\ssubset) \leq 2^{d+2}m_o(\ssubset) \log n$. Overall, the number of \gonepis queries made by our algorithm is at most $ 2^{d+2}m_o(\ssubset) \log n$.
 
The algorithm $\cAe$ for Lemma~\ref{lem:exact1} proceeds similar to the one presented above by initializing a tree $\cT$ with $(\ssubset)$ as the root. The pseudo-code for $\cAe$ is presented in Algorithm~\ref{algo:exact}. If $m_o(\ssubset) \leq \tau$, then we can find the exact value of $m_o(\ssubset)$ using at most $2^{d+2} \tau \log n$ \gonepis queries and the number of nodes in $\cT$ is bounded by $ 2^{d+2} \tau \log n$. So, if the number of nodes in $\cT$ is more than $2^{d+2}  m_o(\ssubset) \log n$ at any instance during the execution of the algorithm, we report $m_o(\ssubset) >\tau$ and terminate. Hence, our algorithm makes $\Oh_d(\tau \log n)$ many \gonepis queries, decides whether $m_o(\ssubset) \leq \tau$, and determines the exact value of $m_o(\ssubset)$ when $m_o(\ssubset) \leq \tau$. 
\end{proof}

\begin{algorithm}[h]

\SetAlgoLined

\caption{{$\cAe$($\dsubset,\tau$)}}
\label{algo:exact}
\KwIn{Subsets $A_1,\ldots, A_d$ of vertex set $U(\cH)$ of a $d$-uniform hypergraph $\cH$ and a threshold $\tau$.}
\KwOut{Check if $m_o(\ssubset) \leq \tau$. If Yes, return the exact value of $m_o(\ssubset)$.}

Initialize a tree ${\cal T}$ with $(\ssubset)$ as the root and label the root with $1$.\\
\If{the number of nodes in $\cT$ is more than $2^{d+2} \tau \log n$ }
{
Report that $m_o(\ssubset) > \tau$ as the output.
}
\While{there is a leaf node $m_o(\tsubset)=0$ in the tree with label $1$}
{
{Note that $0 \leq b_i \leq d$ and $\sum\limits_{i=1}^t b_i = d$. Also, for $i \neq j$, either $B_i \cap B_j = \emptyset$ or $B_i=B_j$.}\\
\If{ $m_o(\tsubset)=0$ or there exists an $i \in [t]$ such that $\size{B_i} < b_i$}
{ 
Label $(\tsubset)$ with $0$.
}
\Else
{
Label $(\tsubset)$ with $1$.\\
Partition each $B_i$ into two parts $B_{i1}$ and $B_{i2}$ such that $\size{B_{i1}}=\lceil \frac{B_{i}}{2} \rceil$ and $\size{B_{i2}}=\lfloor \frac{B_{i}}{2} \rfloor$. \\
Add nodes of the form $(C_{11j},\ldots,C_{1b_1j}, \ldots, C_{t1j},\ldots,C_{tb_tj})$ as children of $(\tsubset)$, for $j\in \{1,2\}$, where $C_{ib_i1}=\lceil \frac{C_{ib_i}}{2}\rceil$ and $C_{ib_i2}=\lfloor \frac{C_{ib_i}}{2}\rfloor$.
}
}
Generate $\cT'$ as the tree after deleting all the leaf nodes in $\cT$.\\
Report the number of leafs in $\cT'$ as $m_o(\ssubset)$ as the output.
\end{algorithm}

\section{Proof of Lemma for Coarse Estimation} \label{sec:coarse}

\noindent We now prove Lemma~\ref{lem:coarse_main}. The algorithm corresponding to Lemma~\ref{lem:coarse_main} is Algorithm~\ref{algo:coarse}. Algorithm~\ref{algo:verify} is a subroutine in Algorithm~\ref{algo:coarse}. Algorithm~\ref{algo:verify} determines whether a given estimate $\hat{R}$ is correct up to $\Oh_d(\log ^{2d-3} n)$ factor. The Lemma~\ref{lem:coarse1} and~\ref{lem:coarse2} are intermediate results needed to prove Lemma~\ref{lem:coarse_main}.

\begin{lem} \label{lem:coarse1} If $\hat{\cR} \geq 20d^{2d-3}4^d m_o(\dsubset) \log^{2d-3} n$, then 
	$$ \pr(\mbox{\verest($\dsubset,\hat{\cR}$) accepts}) \leq \frac{1}{20 \cdot 2^d}$$
\end{lem}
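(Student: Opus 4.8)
The plan is to prove this soundness statement by a first-moment (Markov) argument, carried out as an induction on $d$ if the algorithm turns out to be recursive. Recall that \verest, on input $(\dsubset,\hat{\cR})$, is a sampling test driven by $\hat{\cR}$: over polylogarithmically many rounds it produces a collection of subsampled / color-restricted configurations of the sets $A_1,\ldots,A_d$, issues one \gtwopis query per configuration, and accepts only if sufficiently many of these queries answer {\sc Yes}. The one structural fact I would use is that for a query on a configuration to answer {\sc Yes}, at least one ordered hyperedge of $\cF_o(\dsubset)$ must survive into that configuration; hence the event that \verest accepts is contained in the event that the run produces at least one (configuration, surviving hyperedge) incidence.

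First I would bound the expected number of such incidences. By linearity of expectation it equals $m_o(\dsubset)$ times the number of configurations created times the survival probability of one fixed ordered hyperedge in one fixed configuration --- a quantity fixed by the subsampling schedule, which \verest chooses as a function of $\hat{\cR}$. Reading this schedule off the description of the algorithm, the product is at most $c_d\cdot m_o(\dsubset)\log^{2d-3}n/\hat{\cR}$ for a constant $c_d\le d^{2d-3}4^d$; verifying exactly this is where the real work lies, i.e.\ matching $c_d$ and the exponent $2d-3$ of $\log n$ to the sampling rates actually used. Substituting the hypothesis $\hat{\cR}\ge 20d^{2d-3}4^d m_o(\dsubset)\log^{2d-3} n$ makes this expectation at most $\frac{1}{20\cdot 2^d}$, so Markov's inequality bounds the probability of at least one incidence by $\frac{1}{20\cdot 2^d}$, and therefore $\pr(\mbox{\verest accepts})\le \frac{1}{20\cdot 2^d}$. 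If \verest is in fact recursive --- fixing one coordinate (or one hash value) at the top level and calling a $(d-1)$-dimensional verify, which would explain why the exponent grows by $2$ and the constant by roughly $4d^2$ per step --- I would instead induct on $d$: the base case $d=2$ is essentially the overestimate lemma of Beame \etal{} for \bis, and the inductive step composes a top-level first-moment bound with the inductive hypothesis, whose failure probability $\frac{1}{20\cdot 2^{d-1}}$ and additional $\log$ factors are absorbed into the stated slack.

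The main obstacle is thus arithmetic rather than conceptual: pinning down the subsampling rates of \verest tightly enough to certify the constant $20d^{2d-3}4^d$ and the exponent $2d-3$, and being careful which event Markov is applied to when the acceptance rule demands many {\sc Yes}-answers rather than just one (the bound only improves in that case). It bears emphasis that only an \emph{upper} bound on the probability that some hyperedge survives is needed here, so the intricate intersection pattern of $d$-uniform hyperedges plays no role --- no independence and no second moment are required --- which is precisely why this overestimate direction is the easy half, in contrast to the complementary underestimate bound (Lemma~\ref{lem:coarse2}), where lower-bound and concentration arguments become essential.
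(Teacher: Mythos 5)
Your first-moment plan is exactly the paper's proof: the survival probability of a fixed ordered hyperedge in the configuration indexed by ${\bf j}$ telescopes to at most $d^{d-2}\log^{d-2}n/\hat{\cR}$, and Markov plus a union bound over the $(d\log n+1)^{d-1}$ configurations (equivalently, a single Markov bound on the total incidence count, as you propose) gives the claim; moreover \verest accepts as soon as one query answers {\sc Yes}, so your containment of the acceptance event in the ``at least one surviving hyperedge'' event is the right one and no recursion on $d$ is needed. The only arithmetic to tighten is your placeholder $c_d\le d^{2d-3}4^d$, which by itself would only yield $\frac{1}{20}$: the true constant is $d^{2d-3}$ up to a $\left(1+\frac{1}{d\log n}\right)^{d-1}$ factor, and this slack is what absorbs the extra $2^d$ and delivers the stated bound $\frac{1}{20\cdot 2^d}$.
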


\begin{proof} Consider the set of ordered hyperedges $\cF_o(\dsubset)$ in $\cH(\dsubset)$. Let us construct the sets $B_{i,{\bf j}}$ for $1\leq i\leq d$ as given in Algorithm \ref{algo:verify}. For an ordered hyperedge $F_o \in \cF_o(\ssubset)$ and ${\bf j} \in \left[(d\log n)^*\right]^{d-1}$ \footnote{Recall that $[n]^*$ denotes the set $\{0,\ldots,n\}$}, let $X^{{\bf j}}_{F_o}$ denote the indicator random variable such that $X^{\bf j}_{F_o}=1$ if and only if $F_o \in \cF_o(B_{1,{\bf j}}, \ldots, B_{d,{\bf j}})$ and $X_{\bf j}=\sum\limits_{F_o \in \cF_o(\dsubset)} X^{\bf j}_{F_o}$. Note that $m_o(B_{1,{\bf j}}, \ldots, B_{d,{\bf j}}) = X_{\bf j}$. We have, 

$$\pr\left( X^{\bf j}_{F_o} =1\right) = \prod\limits_{i=1}^d (p(i,{\bf j})) \leq \frac{2^{j_1}}{\hat{\cR}} \cdot \frac{2^{j_2}}{2^{j_1}} d\log n \cdot \cdot \cdot \cdot \cdot \frac{2^{j_{d-1}}}{2^{j_{d-2}}} d\log n\cdot \frac{1}{2^{j_{d-1}}} = \frac{d^{d-2}\log ^{d-2} n}{\hat{\cR}}$$ 

\noindent Then, $\E\left[X_{\bf j}\right] \leq \frac{m_o(\dsubset)}{\hat{\cR}} d^{d-2}\log^{d-2} n$, and since $X_{\bf j} \geq 0$, $$\pr \left( X _{\bf j}\neq 0\right)=\pr(X_{\bf j} \geq 1) \leq \E \left[X_{\bf j} \right] \leq \frac{m_o(\dsubset)}{\hat{\cR}} d^{d-2}\log ^{d-2} n$$
Now, using the fact that $\hat{\cR} \geq 20d^{2d-3} \cdot 4^{d} \cdot m_o(\dsubset) \log^{2d-3} n$, we have 
$$ \pr\left(X_{\bf j} \neq 0 \right) \leq \frac{1}{20d^{d-1} \cdot 4^{d} \cdot \log^{d-1}  n}. $$

Observe that \verest accepts if and only if there exists ${\bf j} \in [(d\log n)^*]$ such that $X_{\bf j} \neq 0$. Using the union bound, we get
\begin{eqnarray*}
	\pr(\mbox{\verest$(\dsubset,\hat{\cR})$ accepts}) &\leq&  \sum\limits_{{\bf j} \in [(d\log n)^*]^{d-1}} \pr(X_{\bf j} \neq 0)\\ 
	&\leq& \frac{(d\log n +1)^{d-1}}{20 \cdot 4^d\cdot (d\log n)^{d-1}} \leq \frac{1}{20\cdot 2^d}.
\end{eqnarray*}

\end{proof}

\begin{algorithm}

\SetAlgoLined

\caption{\verest($\dsubset,\hat{\cR}$)}
\label{algo:verify}
\KwIn{$A_1,\ldots, A_d$ be subsets of vertex set $U(\cH)$ of a $d$-uniform hypergraph $\cH$ and an estimate $\hat{\cR}$}
\KwOut{{\sc Accept} or {\sc Reject} $\hat{\cR}$ as per stated in Lemma~\ref{lem:coarse1} and~\ref{lem:coarse2}.}

\For{$j_1= d\log n$ to $0$}
{
	\For{$j_2=d\log n$ to $0$}
	{
		\ldots \\
		\ldots\\
		\For{$j_{d-1}=d\log n$ to $0$}
		{
			Let ${\bf j}=(j_1,\ldots,j_{d-1}) \in [(d\log n)^*]^{d-1}$\\
			Let $p(1,{\bf{  j}}) =\min\{\frac{2^{j_1}}{\hat{\cR}},1\}$\\
			Let $p(i,{\bf j}) = \min\{2^{j_i - j_{i-1}} \cdot d\log n,1\}$, where $2 \leq i \leq d-1$\\
			Let $p(d,{\bf j}) = \min\{{2^{-j_{d-1}}},1\}$\\
			Construct $B_{i,{\bf j}} \subseteq A_i$ for each $i\in [d]$. Initialize $B_{i,{\bf j}}$ with $\phi$ for all $i$.\\
			Sample each element of $A_i$ independently with probability $p(i,{\bf j})$ and add it to $B_{i,{\bf j}}$.\\
			\If{$\left( m(B_{1,{\bf j}},\ldots, B_{d,{\bf j}}) \neq 0 \right)$}
				{\sc Accept}
		}
	}
}

{\sc Reject}
\end{algorithm}

\begin{lem} \label{lem:coarse2} If $\hat{\cR} \leq \frac{m_o(\dsubset)} {4d \log n}$, $\pr( \mbox{\verest ($\dsubset,\hat{\cR}$) accepts}) \geq \frac{1}{2^d}$. \end{lem}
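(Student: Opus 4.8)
The plan is to show that under the hypothesis $\hat{\cR} \le \frac{m_o(\dsubset)}{4d\log n}$, there is at least one choice of the index tuple ${\bf j}$ in the nested loop of \verest for which the sampled sets $B_{1,{\bf j}},\dots,B_{d,{\bf j}}$ are highly likely to retain an ordered hyperedge, so that the \gtwopis-style test $m(B_{1,{\bf j}},\dots,B_{d,{\bf j}})\neq 0$ accepts. Since \verest accepts if \emph{any} iteration accepts, it suffices to exhibit one good ${\bf j}$ and lower-bound the acceptance probability for that single iteration. I would pick ${\bf j}$ by a balanced/dyadic choice: set $2^{j_1}$ to be the dyadic value closest to (and at least) $\hat{\cR}$ — so $p(1,{\bf j})=\min\{2^{j_1}/\hat{\cR},1\}=\Theta(1)$ — and then choose $j_2,\dots,j_{d-1}$ so that each ratio $2^{j_i-j_{i-1}}$ is the dyadic value closest to $\frac{1}{d\log n}$, making each $p(i,{\bf j})=\min\{2^{j_i-j_{i-1}}d\log n,1\}=\Theta(1)$ for $2\le i\le d-1$, and finally $p(d,{\bf j})=2^{-j_{d-1}}$. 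With these choices, telescoping the product $\prod_{i=1}^d p(i,{\bf j})$ gives $\Theta\!\left(\frac{1}{\hat{\cR}(d\log n)^{d-2}}\right)$, and by the hypothesis this product is $\Omega\!\left(\frac{1}{m_o(\dsubset)}\cdot\frac{4d\log n}{(d\log n)^{d-2}}\right)$ — i.e.\ the expected number $\E[X_{\bf j}]$ of surviving ordered hyperedges is a constant (at least something like $1$, up to the rounding losses in the dyadic choices, which I would absorb into the constants). One must check that such a valid ${\bf j}$ lies in the range $[(d\log n)^*]^{d-1}$, which holds because $\hat\cR\le m_o(\dsubset)\le n^d$ forces $j_1\le d\log n$, and the required $j_i$ stay nonnegative and bounded by $d\log n$.

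The next step is to turn ``$\E[X_{\bf j}]=\Omega(1)$'' into ``$\pr(X_{\bf j}\neq 0)\ge \frac{1}{2^d}$''. A second-moment / Paley–Zygmund argument is the natural tool: $\pr(X_{\bf j}>0)\ge \frac{(\E[X_{\bf j}])^2}{\E[X_{\bf j}^2]}$. The main work is bounding $\E[X_{\bf j}^2]=\sum_{F_o,F_o'}\pr(X_{F_o}^{\bf j}=X_{F_o'}^{\bf j}=1)$. For a fixed pair of ordered hyperedges $F_o,F_o'$ sharing exactly $\ell$ vertices (with $0\le \ell\le d-1$, since they are distinct), the joint survival probability is the product over coordinates, where shared vertices are ``paid for'' once; carefully, since the sampling in different $A_i$ is independent and within each $A_i$ it is independent across elements, one gets $\pr(X_{F_o}^{\bf j}=X_{F_o'}^{\bf j}=1)\le \big(\prod_i p(i,{\bf j})\big)^2 \big/ (\text{product of the } p \text{'s at the shared coordinates})$. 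Because each $p(i,{\bf j})$ is $\Theta(1/\,\mathrm{poly}\log n)$ or $\Theta(1)$, the ``discount'' factor from $\ell$ shared vertices is at most $\mathrm{poly}\log n$ per shared vertex, hence at most $(\log n)^{O(d)}$ overall. Summing over the at most $m_o(\dsubset)\cdot (\text{number of } F_o' \text{ sharing } \ell \text{ vertices with } F_o)$ pairs — and here is where one would want to be a touch careful, since in the worst case many hyperedges could pile up on a shared sub-tuple — I expect the bound $\E[X_{\bf j}^2]\le \E[X_{\bf j}] + O\big((\E[X_{\bf j}])^2\cdot \mathrm{polylog}\, n\big)$ to be \emph{too weak}; instead the right move is to note that the hypothesis gives room to a factor $4d\log n$, and one should choose the dyadic targets slightly sub-optimally so that $\E[X_{\bf j}]$ is bounded \emph{both above and below} by constants, then appeal directly to a union-bound-style lower bound $\pr(X_{\bf j}\ge 1)\ge \E[X_{\bf j}] - \frac12\E[X_{\bf j}](\E[X_{\bf j}]-1)$ when $\E[X_{\bf j}]$ is small, or inclusion–exclusion truncated at the first term.

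The main obstacle, as flagged above, is controlling the correlations among the indicator variables $X^{\bf j}_{F_o}$: unlike the easy direction (Lemma~\ref{lem:coarse1}), where only the first moment and Markov's inequality are used, here we need an anti-concentration / survival statement, and distinct ordered hyperedges can share up to $d-1$ vertices, so the events are positively correlated. My plan to handle it: (i) choose ${\bf j}$ so that $\mu:=\E[X_{\bf j}]\in[\tfrac12,2]$ (possible because successive dyadic choices move $\mu$ by bounded multiplicative factors and the hypothesis provides a $\Theta(d\log n)$-sized window to land in); (ii) then use $\pr(X_{\bf j}\ge 1) \ge \mu - \binom{\text{pairs}}{}$-type correction via the Bonferroni/inclusion–exclusion inequality $\pr(X_{\bf j}\ge 1)\ge \sum_{F_o}\pr(X^{\bf j}_{F_o}=1) - \sum_{F_o<F_o'}\pr(X^{\bf j}_{F_o}=X^{\bf j}_{F_o'}=1)$; (iii) bound the pairwise sum by stratifying over the overlap size $\ell$ and using that the per-coordinate survival probabilities are all $\le 1$, together with the crude but sufficient counting bound that the number of ordered hyperedges through any fixed $\ell$-subset of coordinates-and-vertices is at most $m_o(\dsubset)$, times the polylog discount; (iv) conclude that the correction term is $o(1)$ or at worst $\le \mu - \tfrac{1}{2^d}$, giving the claimed $\pr(\verest \text{ accepts})\ge \pr(X_{\bf j}\ge 1) \ge \frac{1}{2^d}$. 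If the crude counting in (iii) turns out to be insufficient, the fallback is to restrict attention to hyperedges whose shared sub-tuples are themselves ``light'' (a bucketing argument), which is exactly the kind of argument the sparsification lemma is built to avoid needing — so I would first try to push the direct computation through, expecting the generous $4d\log n$ slack in the hypothesis to be precisely what makes it work.
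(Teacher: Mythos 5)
There is a genuine gap: your core strategy---pick a single tuple ${\bf j}$ for which $\mu=\E[X_{\bf j}]$ is tuned to a constant and then lower-bound $\pr(X_{\bf j}\ge 1)$ by Paley--Zygmund or a two-term Bonferroni bound---does not work, because at the iteration where $\mu=\Theta(1)$ the positive correlations can genuinely dominate, so that $\pr(X_{\bf j}\ge 1)=o(1)$ there. Already for $d=2$, take $\cF_o(\dsubset)$ to be a complete bipartite graph between $\sqrt{m}$ vertices of $A_1$ and $\sqrt{m}$ vertices of $A_2$, where $m=m_o(\dsubset)$, and take $\hat{\cR}=m/(8\log n)$. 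Since $p(1,{\bf j})\,p(2,{\bf j})=1/\hat{\cR}\gg 1/m$ whenever no truncation at $1$ occurs (the untruncated product is invariant in ${\bf j}$, equal to $(d\log n)^{d-2}/\hat{\cR}$ in general---incidentally your telescoped value is inverted), the only way to land $\mu\approx m\,p(1)p(2)$ in $[1/2,2]$ is $p(1)=1$ and $p(2)\approx 1/m$; but then that iteration accepts only if one of the $\sqrt{m}$ right-side vertices is sampled, which happens with probability about $1/\sqrt{m}$, and correspondingly the pairwise term in your Bonferroni bound is $\Theta(\sqrt{m})$, not $o(1)$. No ``slightly sub-optimal dyadic targets'' repair this. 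The iteration that actually accepts with constant probability here is $2^{j_1}\approx\sqrt{m}$, where $\mu=\Theta(\log n)$ is far from constant: the correct ${\bf j}$ is dictated by the degree profile of the hypergraph, not by normalizing the first moment. Relatedly, your claim that the correlation ``discount'' per shared vertex is at most polylogarithmic is false: $1/p(d,{\bf j})=2^{j_{d-1}}$ and $1/p(1,{\bf j})=\hat{\cR}/2^{j_1}$ can be polynomial in $n$.

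The paper avoids second moments entirely. It buckets vertices by conditional degree: $A_1(q_1)$ is the set of first vertices carrying between $2^{q_1}$ and $2^{q_1+1}-1$ ordered hyperedges, and for each fixed prefix $u_1,\ldots,u_{i-1}$, the set $A_i((Q_{i-1},U_{i-1}),q_i)$ is the analogous bucket at level $i$. Pigeonholing over the $d\log n+1$ buckets at each level (Claim~\ref{clm:verify}) yields $q_1,\ldots,q_{d-1}$ with $\size{A_1(q_1)}>\frac{m_o(\dsubset)}{2^{q_1+1}(d\log n+1)}$ and $\size{A_i((Q_{i-1},U_{i-1}),q_i)}>\frac{2^{q_{i-1}}}{2^{q_i+1}(d\log n+1)}$, and the proof then analyzes exactly the loop iteration ${\bf j}={\bf q}$: at level $1$ the sampling rate $2^{q_1}/\hat{\cR}$ times the bucket size is at least $1$ thanks to $\hat{\cR}\le m_o(\dsubset)/(4d\log n)$, and at each later level the rate $2^{q_i-q_{i-1}}d\log n$ (resp.\ $2^{-q_{d-1}}$ at level $d$) times the conditional bucket size (resp.\ the at least $2^{q_{d-1}}$ completions) is at least $1$, so each chained event ``the sample hits the heavy bucket for the current prefix'' fails with conditional probability at most $e^{-1}$, giving $\pr(\mbox{accept})\ge(1-e^{-1})^d\ge 2^{-d}$. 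This bucketing-and-chaining step, which you mention only as a fallback, is not optional: it is what makes the choice of ${\bf j}$ adaptive to the degree structure and removes any need to control variance.
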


\begin{proof} First, we define some quantities and prove Claim~\ref{clm:verify}. Then we will prove Lemma~\ref{lem:coarse2}. For $q_1 \in [(d \log n)^*]$, let $A_1(q_1) \subseteq A_1$ be the set of vertices in $A_1$ such that for each $u_1 \in A_1(q_1)$, the number of hyperedges in $\cF_o(\dsubset)$, containing $u_1$ as the first vertex, lies between $2^{q_1}$ and  $ 2^{q_1+1}-1$. For $2 \leq i \leq d-1$, and $q_j \in [(d \log n)^*]~\forall j \in [i-1]$, consider $u_1 \in A_1(q_1), u_2 \in A_2((q_1,u_1),q_2)\ldots,$ $ u_{i-1} \in A_{i-1}((q_1,u_1),\ldots,(q_{i-2}u_{i-2}),q_{i-1}) $. Let $A_i((q_1,u_1),\ldots, (q_{i-1},u_{i-1}),q_i)$ be the set of vertices in $A_i$ such that for each $u_i \in A_i((q_1,u_1),\ldots, (q_{i-1},u_{i-1}),q_i) $, the number of ordered hyperedges in $\cF_o(\dsubset)$, containing $u_j$ as the $j$-th vertex for all $j \in [i]$, lies between $2^{q_i}$ and  $ 2^{q_i+1}-1$. We need the following result to proceed further. For ease of presentation, we use $(Q_i,U_i)$ to denote $(q_1,u_1),\ldots, (q_{i-1},u_{i-1})$ for $ 2 \leq i \leq d-1$. Now, we prove the following claim. It will be required to prove the lemma.
\begin{cl} \label{clm:verify}
\begin{itemize}
	\item[(i)] There exists $q_1 \in [(d \log n)^*]$ such that $\size{A_1(q_1)} > \frac{m_o(\dsubset)}{2^{q_1+1}(d\log n +1)}$.
	\item[(ii)] Let $2 \leq i \leq d-1$ and $q_j \in [(d \log n)^*]~\forall j \in [i-1]$. Let $u_1 \in A_1(q_1)$, $u_{j} \in A_{j}((Q_{j-1},U_{j-1}),q_{j})$ $\forall j \neq 1$ and $j<i$. There exists $q_i \in [(d \log n)^*]$ such that $\size{ A_i((Q_i,U_i),q_i)} > \frac{2^{q_{i-1}}}{2^{q_i +1}(d\log n +1)}$.
\end{itemize}
\end{cl}

\begin{proof}
\begin{itemize}
	\item[(i)] Observe that $m_o(\dsubset)=\sum\limits_{q_1=0}^{d\log n } m_o(A_1(q_1),A_2,\ldots,A_d)$. So, 
there exists $q_1 \in [(d \log n)^*]$ such that $m_o(A_1(q_1),A_2,\ldots,A_d) \geq \frac{m_o(\dsubset)}{d\log n + 1}$. From the definition of $A_1(q_1)$, $ m_o(A_1(q_1),A_2,\ldots,A_d) < \size{A_1(q_1)} \cdot 2^{q_1 +1}$. Hence, there exists $q_1 \in [(d \log n)^*]$ such that 
$$ 
	\size{A_1(q_1)} > \frac{m_o(A_1(q_1),A_2,\ldots,A_d)}{2^{q_1+1}} 
	\geq  \frac{m_o(\dsubset)}{2^{q_1+1} (d\log n+1)}.
$$

	\item[(ii)] Note that $m_o(\{u_1\},\ldots, \{u_{i-1}\},A_i,\ldots,A_d)=\sum_{q_i=0}^{d\log n } m_o(\{u_1\},\ldots, \{u_{i-1}\},A_i((Q_{i-1},U_{i-1}),q_i),\ldots,A_d)$. So, there exists $q_i \in [(d \log n)^*]$ such that 

$$ m_o(\{u_1\},\ldots, \{u_{i-1}\},A_i((Q_{i-1},U_{i-1}),q_i),\ldots,A_d) \geq \frac{m_o(\{u_1\},\ldots, \{u_{i-1}\},A_i,\ldots,A_d)}{d\log n +1}$$ 

From the definition of $A_i((Q_{i-1},U_{i-1}),q_i)$, we have $$m_o(\{u_1\},\ldots, \{u_{i-1}\},A_i((Q_{i-1},U_{i-1}),q_i),\ldots,A_d)  < \size{A_i((Q_{i-1},U_i),q_i)} \cdot 2^{q_i +1}$$ Hence, there exists $q_i \in [(d \log n)^*]$ such that 

\begin{eqnarray*}
	\size{A_i((Q_{i-1},U_i),q_i)} &>& \frac{m_o(\{u_1\},\ldots, \{u_{i-1}\},A_i((Q_{i-1},U_{i-1}),q_i),\ldots,A_d\})}{2^{q_i +1}}\\
	&\geq& \frac{m_o(\{u_1\},\ldots, \{u_{i-1}\},A_i,\ldots,A_d\})}{2^{q_i +1}(d\log n +1)} \geq \frac{2^{q_{i-1}}}{2^{q_i +1}(d\log n +1)}
\end{eqnarray*}
\end{itemize}
\end{proof}

We will be done with the proof of Lemma~\ref{lem:coarse2} by showing the following. \verest accepts with probability at least $1/5$ when the loop variables $j_1,\ldots,j_{d-1}$ respectively attain values $q_1,\ldots,q_{d-1}$ such that $\size{A_1(q_1)} > \frac{m_o(\dsubset)}{2^{q_1+1}(d\log n +1)}$ and $\size{ A_i((Q_i,U_i),q_i)} > \frac{2^{q_{i-1}}}{2^{q_i +1}(d\log n +1)}~ \forall i \in [d-1] \setminus \{1\}$. The existence of such $j_i$'s is evident from Claim~\ref{clm:verify}. Let ${\bf q}=(q_1,\ldots,q_{d-1})$. Recall that $B_{i,{\bf q}} \subseteq A_i$ is the sample obtained when the loop variables $j_1,\ldots,j_{d-1}$ attain values $q_1,\ldots,q_{d-1}$, respectively. Let $\cE_i, i \in [d-1],$ be the events defined as follows.

\begin{itemize}
	\item $\cE_1~:~A_1(q_1) \cap B_{1,{\bf q}} \neq \emptyset$.
	\item $\cE_i~:~A_j((Q_{j-1},U_{j-1}),q_j) \cap B_{j,{\bf q}} \neq \emptyset$, where $2\leq i \leq d-1$.
\end{itemize} 
Observe that 
$$\pr(\overline{\cE_1}) \leq \left( 1- \frac{2^{q_1}}{\hat{\cR}}\right)^{\size{A_1(q_1)}} \leq \exp {\left(-\frac{2^{q_1}}{\hat{\cR}}\size{A_1(q_1)}\right)} \leq \exp{\left(-\frac{2^{q_1}}{\hat{\cR}}\cdot \frac{m_o(\dsubset)}{2^{q_1+1}(d \log n +1)}\right)} \leq \exp{(-1)}$$  

The last inequality uses the fact that $\hat{\cR} \leq \frac{m_o(\dsubset)} {4d \log n}$, from the condition of the lemma. Assume that $\cE_1$ occurs and $u_1 \in A_1(q_1) \cap B_{1,{\bf q}} $. We will bound the probability that $A_2(Q_1,U_1),q_2) \cap A_{2,{\bf q}} = \emptyset$, that is $\overline{\cE_2}$. Note that, by Claim~\ref{clm:verify} (ii), $ \size{A_2(Q_1,U_1),q_2)} \geq \frac{2^{q_{1}}}{2^{q_2 +1}(d\log n +1)}$. So,
$$ \pr\left(\overline{\cE_2}~|~\cE_1\right) \leq \left(1-\frac{2^{q_2}}{2^{q_1}} \log n \right)^{\size{A_2(Q_1,U_1),q_2)}} \leq \exp{(-1)}$$

Assume that $\cE_1,\ldots,\cE_{i-1}$ holds, where $3 \leq i \in [d-1]$. Let $u_1 \in A_1(q_1)$ and $u_{i-1} \in A_{i-1}((Q_{i-2},U_{i-2}),q_{i-1})$. We will bound the probability that $A_i((Q_{i-1},U_{i-1}),q_i) \cap B_{i,{\bf q}} = \emptyset$, that is $\overline{\cE_i}$. Note that $\size{A_i((Q_{i-1},U_{i-1}),q_i)} \geq \frac{2^{q_{i-1}}}{2^{q_{i} +1}(d\log n +1)}$. So, for $3 \leq i \in [d-1]$,

$$ \pr\left( \overline{\cE_i}~|~\cE_1,\ldots,\cE_{i-1}\right) \leq \left(1-\frac{2^{q_i}}{2^{q_{i-1}}} \log n \right)^{\size{A_i(Q_{i-1},U_{i-1}),q_i)}} \leq  \exp{(-1)}$$

Assume that $\cE_1,\ldots,\cE_{d-1}$ holds. Let $u_1 \in A_1(q_1)$ and $u_{i-1} \in A_{i-1}((Q_{i-2},U_{i-2}),q_{i-1})$ for all $i \in [d] \setminus \{1\}$. Let $S \subseteq A_d$ be the set of $d$-th vertex of the ordered hyperedges in $\cF_o(\dsubset)$ having $u_j$ as the $j$-th vertex for all $j \in [d-1]$. Note that $\size{S} \geq 2^{q_{d-1}}$. Let $\cE_d$ be the event that represents the fact $S \cap B_{d,{\bf q}} \neq \emptyset$. So,

$$ \pr(\overline{\cE_d}~|~\cE_1,\ldots,\cE_{d-1}) \leq \left(1- \frac{1}{2^{q_{d-1}}} \right)^{q_{d-1}} \leq \exp{(-1)}$$ 

Observe that \verest accepts if $m(A_{B,{\bf q}},\ldots,B_{d,{\bf q}}) \neq 0$. Also, $m(B_{1,{\bf q}},\ldots,B_{d,{\bf q}}) \neq 0$ if $\bigcap\limits_{i=1}^d \cE_i$ occurs. Hence,

\begin{eqnarray*}
	\pr(\mbox{\verest$(\dsubset, \hat{\cR})$ accepts}) &\geq& \pr\left( \bigcap\limits_{i=1}^d \cE_i \right)\\
	&=& \pr(\cE_1)\prod\limits_{i=2}^d \pr(\cE_i~|~\bigcap\limits_{j=1}^{i-1}\cE_j)\\
	&>& (1-\exp{(-1)})^d \geq \frac{1}{2^d}
\end{eqnarray*}

\end{proof}

\begin{algorithm}[H]
\caption{\cest($\dsubset$)}
\label{algo:coarse}
\KwIn{$d$ subsets $\dsubset \subseteq U(\cH)$.}
\KwOut{An estimate $\hat{E}$ for $m_o(\dsubset)$ as per stated in Lemma~\ref{lem:coarse_main1}.}
\For{$~\hat{\cR}= n^d,n^d/2,\ldots, 1$}
{
	Repeat \verest $(\dsubset,\hat{\cR})$ for $\Gamma=d~4^d~2000~\log n $ times.\\
	If more than $\frac{\Gamma}{10~2^d}$ runs of \verest Accepts, then output ${\hat{E}}=\frac{\hat{\cR}}{d^{d-2}\cdot 2^d}$.
}
\end{algorithm}

Now, we will prove Lemma~\ref{lem:coarse_main}. We restate the lemma for easy reference.
\begin{lem}[Coarse estimation~:~Lemma~\ref{lem:coarse_main} restated] \label{lem:coarse_main1} There exists an algorithm $\cAc$ that takes as input $d$ many subsets $A_1,\ldots, A_d$ of the vertex set $U(\cH)$ of a $d$-uniform hypergraph $\cH$. The algorithm $\cAc$ returns $\hat{E}$ as an estimate for $m_o(\dsubset)$ such that 
$$ \frac{m_o(\dsubset)}{8d^{d-1}2^d \log^{d-1} n} \leq \hat{E} \leq 20d^{d-1}2^d \cdot m_o(\dsubset) \log^{d-1} n$$ with probability $1-n^{-8d}$. Moreover, the number of \gtwopis queries made by the algorithm is $\Oh_d(\log ^{d+1} n)$. \end{lem}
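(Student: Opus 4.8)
The plan is to build $\cAc$ as the outer loop $\cest$ (Algorithm~\ref{algo:coarse}) that tries geometrically decreasing guesses $\hat{\cR} = n^d, n^d/2, \ldots, 1$, and for each guess runs the tester $\verest$ some $\Gamma = \Theta_d(\log n)$ times and takes a majority-style vote with threshold $\frac{\Gamma}{10\cdot 2^d}$. The two intermediate lemmas already give us the sandwich we need: by Lemma~\ref{lem:coarse1}, if $\hat{\cR}$ is \emph{too large} (namely $\hat{\cR} \geq 20 d^{2d-3} 4^d\, m_o(\dsubset)\log^{2d-3} n$), then each $\verest$ call accepts with probability at most $\frac{1}{20\cdot 2^d}$; by Lemma~\ref{lem:coarse2}, if $\hat{\cR}$ is \emph{too small} (namely $\hat{\cR} \leq \frac{m_o(\dsubset)}{4d\log n}$), then each call accepts with probability at least $\frac{1}{2^d}$. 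First I would invoke a Chernoff bound on the $\Gamma$ independent runs: with $\Gamma = d\,4^d\,2000\log n$, the empirical acceptance fraction concentrates around its mean to within, say, a multiplicative $1/2$ with probability $1 - n^{-\Omega(d)}$. Hence, with high probability, for every $\hat{\cR}$ in the "too large" regime the vote \emph{fails} (acceptance fraction below $\frac{1}{10\cdot2^d}$), and for every $\hat{\cR}$ in the "too small" regime the vote \emph{succeeds}. A union bound over the $O_d(\log n)$ candidate values of $\hat{\cR}$ keeps the total failure probability at $n^{-8d}$.

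Next I would extract the estimate. Since the loop runs $\hat{\cR}$ from large to small, the first value $\hat{\cR}^\ast$ at which the vote succeeds satisfies: $\hat{\cR}^\ast$ is \emph{not} in the too-large regime (otherwise the vote would have failed there w.h.p.), so $\hat{\cR}^\ast < 20 d^{2d-3} 4^d\, m_o(\dsubset)\log^{2d-3} n$; and the \emph{previous} value $2\hat{\cR}^\ast$ did not trigger a success, so $2\hat{\cR}^\ast$ is not in the too-small regime, giving $2\hat{\cR}^\ast > \frac{m_o(\dsubset)}{4d\log n}$, i.e. $\hat{\cR}^\ast > \frac{m_o(\dsubset)}{8d\log n}$. (One must also check the boundary cases — the very first guess $n^d$ and the last guess $1$ — using $m_o(\dsubset) \le n^d$ and $m_o(\dsubset)\ge 1$ when nonzero, and handle $m_o(\dsubset)=0$ separately, which $\verest$ detects since it will never accept.) The algorithm outputs $\hat{E} = \frac{\hat{\cR}^\ast}{d^{d-2}2^d}$; plugging the two bounds on $\hat{\cR}^\ast$ into this and absorbing the $d^{d-2}2^d$ factor yields
$$ \frac{m_o(\dsubset)}{8d^{d-1}2^d\log n} \;\le\; \hat{E} \;\le\; \frac{20 d^{2d-3}4^d\, m_o(\dsubset)\log^{2d-3}n}{d^{d-2}2^d}, $$
and after loosening constants this fits inside the claimed window $\frac{m_o(\dsubset)}{8d^{d-1}2^d\log^{d-1}n} \le \hat{E} \le 20 d^{d-1}2^d\, m_o(\dsubset)\log^{d-1}n$ (the left bound only improves when $d\ge 1$ since $\log^{d-1}n \ge \log n$ fails for $d=1$ — so a small amount of care with the $d=1,2$ cases and the exact exponent bookkeeping is needed, matching $2d-3 \le d-1$ for $d\le 2$ and re-deriving constants for larger $d$).

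For the query complexity: one call to $\verest$ makes one $\gtwopis$ query per iterate ${\bf j}\in[(d\log n)^\ast]^{d-1}$, i.e. $O_d(\log^{d-1}n)$ queries; it is repeated $\Gamma = O_d(\log n)$ times per guess $\hat{\cR}$, and there are $O_d(\log n)$ guesses, giving $O_d(\log^{d-1}n \cdot \log n \cdot \log n) = O_d(\log^{d+1}n)$ $\gtwopis$ queries total, as claimed. The main obstacle I anticipate is \emph{not} the probabilistic argument — that is a routine Chernoff-plus-union-bound — but rather the bookkeeping in stitching Lemmas~\ref{lem:coarse1} and~\ref{lem:coarse2} together: the "too large" and "too small" thresholds differ by a factor of roughly $d^{2d-3}4^d\log^{2d-3}n$, so there is a middle band of $\hat{\cR}$ values whose behavior is uncontrolled, and one must argue that the \emph{first} successful guess necessarily lands just below this band (never strictly inside or above it w.h.p.), and then verify that the resulting two-sided bound, once the scaling factor $\frac{1}{d^{d-2}2^d}$ is applied, actually collapses to the advertised $\log^{d-1}n$-factor approximation rather than something weaker — which forces the specific choice of exponents and constants in the statements of the two intermediate lemmas.
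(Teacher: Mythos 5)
Your proposal is correct and follows essentially the same route as the paper: the same outer loop \cest over geometric guesses $\hat{\cR}$, the same amplification of \verest via Chernoff--Hoeffding plus a union bound over the $\Oh_d(\log n)$ guesses, the same extraction of the first accepted guess with the factor-$2$ grid slack, and the same query count $\Oh_d(\log^{d-1} n)\cdot\Oh_d(\log n)\cdot\Oh_d(\log n)=\Oh_d(\log^{d+1} n)$. The bookkeeping worry you flag is real but is inherited from the paper itself: since Lemma~\ref{lem:coarse1} only rules out $\hat{\cR}\geq 20d^{2d-3}4^d m_o(\dsubset)\log^{2d-3}n$, dividing by $d^{d-2}2^d$ gives an upper bound of order $\log^{2d-3}n$ for $d\geq 3$, and the paper asserts the $\log^{d-1}n$ window without bridging this gap, so your caution there matches (rather than falls short of) the published argument.
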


\begin{proof} Note that an execution of \cest for a particular $\hat{\cR}$, repeats \verest for $\Gamma =d \cdot 4^d \cdot 2000 \log n$ times and gives output $\hat{\cR}$ if more than $\frac{\Gamma}{10 \cdot 2^d}$ many \verest accepts. For a particular $\hat{\cR}$, let $X_i$ be the indicator random variable such that $X_i=1$ if and only if the  $i$-th execution of \verest accepts. Also take $X=\sum_{i=1}^\Gamma X_i$. \cest gives output $\hat{\cR}$ if $X > \frac{\Gamma}{10 \cdot 2^d}$.

Consider the execution of \cest for a particular $\hat{\cR}$. If $\hat{\cR}  \geq 20d^{2d-3}4^d\cdot m_o(\dsubset)\cdot$ $\log ^{2d-3} n$, we first show that \cest does not accept with high probability. Recall Lemma~\ref{lem:coarse1}. If $\hat{\cR} \geq 20d^{2d-3}4^d \cdot m_o(\dsubset)\log ^{2d-3} n$, $\pr(X_i =1) \leq \frac{1}{20 \cdot 2^d}$ and hence $\E[X] \leq \frac{\Gamma}{20\cdot 2^d}$. By using Chernoff-Hoeffding's inequality~(See Lemma~\ref{lem:cher_bound2}~(i) in Section~\ref{sec:prelim}), 

$$ \pr \left(X > \frac{\Gamma}{10 \cdot 2^d} \right) =\pr\left( X > \frac{\Gamma}{20 \cdot 2^d} + \frac{\Gamma}{20 \cdot 2^d}\right) \leq \frac{1}{n^{10d}}$$ 

Using the union bound for all $\hat{\cR}$, the probability that \cest outputs some $\hat{E}=\frac{\hat{\cR}}{d^{d-2}\cdot 2^d}$ such that $\hat{\cR} \geq 20d^{2d-3}4^d \cdot m_o(\dsubset)\log^{2d-3}n$, is at most $\frac{d \log n}{n^{10}}$. Now consider the instance when the for loop in \cest executes for a $\hat{\cR}$ such that $\hat{\cR} \leq \frac{m_o(\dsubset)}{ 4d \log  n}$. In this situation, $\pr(X_i=1) \geq \frac{1}{2^d}$. So, $\E[X] \geq \frac{\Gamma}{2^d}$. By using Chernoff-Hoeffding's inequality~(See Lemma~\ref{lem:cher_bound2}~(ii) in Section~\ref{sec:prelim}), 

$$ \pr\left(X \leq \frac{\Gamma}{10 \cdot 2^d } \right) \leq \pr\left(X < \frac{\Gamma}{2^d} -\frac{4}{5} \cdot \frac{\Gamma}{ 2^{d}} \right) 	\leq \frac{1}{{n^{100d}}}$$  

By using the union bound for all $\hat{\cR}$, the probability that \cest outputs some $\hat{E}=\frac{\hat{\cR}}{d^{d-2}\cdot 2^d}$ such that $\hat{\cR} \leq \frac{m_o(\dsubset)}{ 4d \log  n}$, is at most $\frac{d \log n}{n^{100d}}$. Observe that, the probability that \cest outputs some $\hat{E}=\frac{\hat{\cR}}{d^{d-2}\cdot 2^d}$ such that $\hat{\cR}\geq d^{2d-3}4^d m_o(\dsubset)\log ^{2d-3} n$ or $\hat{\cR} \leq \frac{m_o(\dsubset)}{4d \log n}$, is at most $\frac{d\log n}{n^{10d}} +\frac{d\log n}{n^{100d}} \leq \frac{1}{n^{8d}}$. Putting everything together, \cest gives some $\hat{E}=\frac{\hat{\cR}}{d^{d-2} \cdot 2^d}$ as the output with probability at least $1-\frac{1}{n^{8d}}$ satisfying 

$$ \frac{m_o(\dsubset)}{8d^{d-1}2^d \log^{d-1} n} \leq \hat{E}=\frac{\hat{\cR}}{d^{d-2} \cdot 2^d} \leq 20d^{d-1}2^d \cdot m_o(\dsubset) \log^{d-1} n$$ 

From the description of \verest and \cest, the query complexity of \verest is $\Oh(\log ^{d-1} n)$ and \cest calls \verest $\Oh_d(\log n)$ times for each choice of $\hat{R}$. Hence, \cest makes $\Oh_d(\log ^{d+1} n)$ many \gpis queries. 
\end{proof}

\section{Algorithm} \label{sec:algo}

\noindent In this Section, we describe our $(1 \pm \eps)$-approximation algorithm for the hyperedge estimation problem in hypergraph $\cH$. When $\eps \leq \lbeps$, we compute the exact value of $m_o(\cH)$ by querying $m_o(\{a_1\},\ldots,\{a_d\})$ for all distinct $a_1,\ldots,a_d\in U(\cH)$, and this requires only polylogarithmic number of queries. When $\eps > \lbeps$, we do the following. We build a data structure $\cal D$ that maintains the following two quantities.
\begin{itemize}
	\item[(i)] An accumulator $\Psi$ that builds an estimate for the number of hyperedges. We initialize $\Psi=0$.
	\item[(ii)] A set of tuples $(A_{11},\ldots,A_{1d},w_1),\ldots,$ $(A_{\zeta 1},\ldots,A_{\zeta d},w_{\zeta})$ for some $\zeta>0$, where tuple $(A_{i1},\ldots,A_{id})$ corresponds to the $d$-partite sub-hypergraph $\cH(A_{i1},\ldots,A_{id})$ and $w_i$ is the weight associated to $\cH(A_{i1},\ldots,A_{id})$. 
\end{itemize}

The data structure $\cal D$ is initialized with $\Psi=0$, and only one tuple $(U^{[d]},1)$. The Algorithm performs the following steps.

\begin{itemize}
	\item[(1)] Checks whether there are any tuples left in $\cal D$. If some tuples are present in $\cal D$, it goes to Step 2. Else, it outputs $\Psi$ as the estimate for the number of hyperedges.

	\item[(2)]{\bf (Exact Counting)} Fixes the threshold $\tau=\frac{k^2 4^{2d}\theta^{2d} 16d^2 d! \log ^{d+2} n}{\eps^2}$, for $k=4$\footnote{We take $k=4$ for these calculations. The argument works for any $k$.}. For each tuple $(A_1,\ldots,A_d,w)$ in $\cal D$, it decides whether $m_o(\dsubset) \leq \tau$ using Lemma~\ref{lem:exact}. If $m_o(\dsubset) \leq \tau$, it adds $w m_o(\dsubset)$ to $\Psi$ and removes $(A_1,\ldots,A_d,w)$ from $\cal D$. If there are no tuples left in $\cal D$, it goes back to Step $1$. If the number of tuples is at most $\cN=\kappa_d \cdot \frac{\log ^{4d} n}{\eps^2}$, then it goes to Step $3$, else it goes to Step $4$. Note that $\kappa_d$ is a constant to be fixed later. By Lemma~\ref{lem:exact}, for each tuple this step requires $\cO_d(\tau \log n)=\cO_d\left( \frac{\log^{d+3}n}{\eps^2}\right)$ \gonepis queries.

	\item[(3)] {\bf (Sparsification)} For any tuple $(A_1^{[a_1]},\ldots,A_s^{[a_s]},w)$ in this step, we will have $m_o(A_1^{[a_1]},\ldots,A_s^{[a_s]})>\tau$. Recall that $A_i$ and $A_j$ are pairwise disjoint subsets for each $1\leq i <j \leq s$. It takes a hash function $h_d:[k]^{d}\rightarrow \{0,1\}$ such that $\pr\left(h_d({\bf a}=1)\right)=1/k$ independently for each tuple ${\bf a} \in [k]^d$. The vertices in $A=\bigcup\limits_{i=1}^s A_i$ are colored independently and uniformly at random with colors in $[k]$. It constructs sets $\chi(i,j)=\{v \in A_i:v~\mbox{is colored with color $j$}\}$ for $i \in [s]$ and $j \in [k]$. It adds each tuple in $\left(\chi({1,c_1}),\ldots,\chi({1,c_{a_1}}),\ldots, \chi({s,c_{d-a_s+1}}),\ldots,\chi({s,c_{d}})\right)$ with weight $4w$ to $\cal D$ for which $h_d(c_1,\ldots,c_d)=1$. It removes the tuple $(A_1^{[a_1]},\ldots,A_s^{[a_s]},w)$ from $\cal D$. After processing all the tuples in this step, it goes to Step $2$. Note that no query is required in this step. The constant $4$ is obtained by putting $k=4$ in Lemma~\ref{lem:sparse}.

	\item[(4)] ({\bf Coarse Estimation}) We have $r>\cN=\frac{\kappa_d\log^{4d} n}{\eps^2}$ tuples $\{(A_{i1},\ldots,A_{id},w_i):i \in [r]\}$ stored in the data structure $\cal D$. For each such tuple $(A_{i1},\ldots,A_{id},w_i)$, it finds an estimate $\hat{E_i}$ such that $\frac{m_o(A_1,\ldots,A_d)}{8 2^{d}d^{d-1} \log^{d-1} n} \leq \hat{E_i} \leq 20 2^{d}d^{d-1} \log^{d-1} n~m_o(A_1,\ldots,A_d)$. Using Lemma~\ref{lem:coarse_main}, this can be done with $\Oh_d(\log^{d+1} n)$ \gtwopis queries per tuple. It takes a sample from the set of tuples such that the sample maintains the required estimate \emph{approximately} using Lemma~\ref{lem:importance-app}. It uses the algorithm for Lemma~\ref{lem:importance-app} with $\lambda =\frac{\eps}{4d \log n}$, $\alpha =  20 2^{d}d^{d-1} \log^{d-1} n$ and $\delta = \frac{1}{n^{6d}}$ to find a new set $\{(A'_{i1},\ldots,A'_{id},w'_i):i \in [r']\}$ of tuples satisfying the following. We will have $\size{S-\sum_{i=1}^{r'} w'_i m_o(A'_{i1},\ldots,A'_{id})} \leq \lambda S$ with probability $1-\frac{1}{n^{6d}}$, where $S=\sum_{i=1}^r w_i m_o(A_{i1},\ldots,A_{id})$. Here, $r'\leq \kappa_d \cdot \frac{\log ^{4d} n}{\eps^2}$, where $\kappa_d$ is the constant mentioned in Step $2$. It removes the set of $r$ tuples $r > \cN$ from $\cal D$ and adds the set of $r'$ tuples. Since the algorithm of Lemma~\ref{lem:importance-app} does not require any queries, the number of \gtwopis queries in this step in each iteration is $\Oh_d(\log ^{d+1} n)$ per tuple. 

\end{itemize}

\section{Proof of Correctness} \label{sec:correct}
\noindent
We start with the following observation for the proof of correctness.

\begin{obs} \label{obs:num_tuples} There are at most $4^d\cdot \cN =4^d \kappa_d \cdot \frac{\log ^{4d} n}{\eps^2}$ tuples in the data structure $\cal D$ during the execution of the algorithm. \end{obs}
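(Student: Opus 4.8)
The plan is to prove the bound by exhibiting an invariant that is maintained throughout the run of the algorithm, namely that the number of tuples stored in $\cal D$ is always at most $4^d\cdot\cN$. I would argue this by induction on the sequence of steps the algorithm executes, starting from the fact that $\cal D$ is initialized with the single tuple $(U^{[d]},1)$, so that $\size{\cal D}=1\le 4^d\cdot\cN$, and then checking the effect of each of Steps $1$–$4$ on the number of tuples.

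The key observations are the following. Step $1$ does not touch $\cal D$. Step $2$ (Exact Counting) only ever \emph{deletes} tuples from $\cal D$ — a tuple with $m_o(\dsubset)\le\tau$ is absorbed into $\Psi$ and removed — so it cannot increase the count. Step $3$ (Sparsification) is, by the branching logic at the end of Step $2$, entered only when $\cal D$ currently holds at most $\cN$ tuples; it then replaces each stored tuple $(\ssubset,w)$ by one new tuple for every $(c_1,\ldots,c_d)\in[k]^d$ with $h_d(c_1,\ldots,c_d)=1$, and since $k=4$ there are at most $k^d=4^d$ such $d$-tuples. Thus each of the at most $\cN$ original tuples is replaced by at most $4^d$ tuples, so the count after Step $3$ is at most $4^d\cdot\cN$; moreover, since an as-yet-unprocessed original tuple contributes $1\le 4^d$ to the count, this bound also holds at every intermediate point of the loop inside Step $3$. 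Finally, Step $4$ (Coarse Estimation) is entered only when $r>\cN$, and by Lemma~\ref{lem:importance-app} (invoked with $\lambda=\eps/(4d\log n)$, $\alpha=20\cdot 2^d d^{d-1}\log^{d-1}n$, $\delta=n^{-6d}$) it replaces the $r$ tuples with a sample of $r'\le\kappa_d\cdot\frac{\log^{4d}n}{\eps^2}=\cN\le 4^d\cdot\cN$ tuples.

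Putting these together, the only step that can increase the number of tuples is Step $3$, and it does so starting from at most $\cN$ tuples and ending with at most $4^d\cdot\cN$; Steps $1$, $2$, and $4$ leave the count unchanged or decrease it (Step $4$ in fact caps it at $\cN$), so none of them can push it above $4^d\cdot\cN$ once that bound already holds. Hence the invariant $\size{\cal D}\le 4^d\cdot\cN=4^d\kappa_d\cdot\frac{\log^{4d}n}{\eps^2}$ is preserved throughout, which is exactly the claim. The only point that needs a little care is the bookkeeping inside Step $3$ — checking that the partial count during its loop over tuples never exceeds $4^d\cdot\cN$ — together with the observation that the test ``number of tuples $\le\cN$'' in Step $2$ is precisely what guarantees Step $3$ is reached only with at most $\cN$ tuples in hand; neither is a genuine obstacle.
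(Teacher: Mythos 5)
Your proof is correct and follows essentially the same reasoning as the paper: sparsification, the only step that increases the count, replaces each tuple by at most $k^d=4^d$ new tuples and is invoked only when $\cal D$ holds at most $\cN$ tuples, while the other steps (including the importance-sampling step, which caps the count at $\cN$) never push it higher. Your version merely spells out the step-by-step invariant that the paper leaves implicit.
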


\begin{proof} The number of tuples in $\cal D$ can increase by a factor of $4^d$ in the sparsification step. Note that we apply the sparsification step only when there are at most $\cN=\kappa_2 \cdot \frac{\log ^{4d} n}{\eps^2}$ many tuples in $\cal D$. Hence, the number of tuples in $\cal D$ is at most $4^d \cdot \cN$. \end{proof}

Now we prove Lemma~\ref{lem:prob1}. We restate the lemma for easy reference.
\begin{lem}[Lemma~\ref{lem:prob1} restated] \label{lem:prob} If $\eps \geq \lbeps$, our algorithm produces $(1\pm \eps)$-approximation to $m_o(\cH)$ with probability at least $1-\frac{1}{n^{4d}}$ and makes $\Oh\left( \frac{\log ^{5d+4} n}{\eps^4}\right)$ queries, where each query is either a \gonepis query or a \gtwopis query. \end{lem}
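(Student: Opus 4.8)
The plan is to attach to the data structure $\cD$ a single real-valued potential and to show that it remains a $(1\pm\eps)$-multiplicative approximation of $m_o(\cH)$ throughout the execution. Concretely, if at some moment $\cD$ holds the accumulator $\Psi$ together with tuples $(A_{i1},\ldots,A_{id},w_i)$, put $\Phi(\cD)=\Psi+\sum_i w_i\, m_o(A_{i1},\ldots,A_{id})$. Initially $\cD$ contains only $(U^{[d]},1)$ with $\Psi=0$, so $\Phi(\cD)=m_o(\cH)$; when the algorithm halts $\cD$ has no tuples and $\Phi(\cD)=\Psi$. The exact-counting step (Step~2) merely moves a quantity $w\, m_o(\dsubset)$ out of a tuple and into $\Psi$, hence leaves $\Phi$ unchanged, and it introduces \emph{no} error at all, since by Lemma~\ref{lem:exact} the algorithm $\cAe$ returns the exact value of $m_o(\dsubset)$ whenever $m_o(\dsubset)\le\tau$. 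So everything reduces to controlling how $\Phi$ is perturbed by sparsification (Step~3) and by coarse estimation followed by importance sampling (Step~4).

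First I would analyse one sparsification of a single tuple $(A_1^{[a_1]},\ldots,A_s^{[a_s]},w)$; by the logic of Step~2 such a tuple has $m_o(A_1^{[a_1]},\ldots,A_s^{[a_s]})>\tau$. Its contribution $w\, m_o$ to $\Phi$ is replaced by $4w\,\cR_d$, and Lemma~\ref{lem:sparse} with $k=4$ gives $\bigl|4w\,\cR_d-w\, m_o\bigr|\le 4w\cdot 2^{2d}\theta^d\sqrt{d!\, m_o\log^d n}$ except with probability $p_d$. Substituting $\tau=\frac{k^2 4^{2d}\theta^{2d}\,16 d^2 d!\log^{d+2}n}{\eps^2}$ (with $k=4$) and using $m_o>\tau$ makes the right-hand side at most $\frac{\eps}{4d\log n}\cdot w\, m_o$; summing over the at most $\cN$ tuples sparsified in an iteration, $\Phi$ changes by an additive amount at most $\frac{\eps}{4d\log n}\sum_i w_i\, m_o(A_{i1},\ldots,A_{id})\le\frac{\eps}{4d\log n}\,\Phi$. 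For Step~4 I would feed Lemma~\ref{lem:importance-app} the coarse estimates supplied by Lemma~\ref{lem:coarse_main} (so $\alpha=20\cdot 2^d d^{d-1}\log^{d-1}n$) with $\lambda=\frac{\eps}{4d\log n}$, $\delta=n^{-6d}$ and $M=2n^d$: it outputs at most $\cN$ new tuples with $|S'-S|\le\lambda S\le\lambda\Phi$, where $S$ denotes the tuple part of $\Phi$, so again $\Phi$ changes by a factor in $[1-\frac{\eps}{4d\log n},\,1+\frac{\eps}{4d\log n}]$. To legitimately invoke these two lemmas one needs the auxiliary invariant, proved simultaneously by induction on the iteration count, that $\Phi$ never leaves $\bigl[\tfrac12 m_o(\cH),\,2m_o(\cH)\bigr]$ — this guarantees $S\le\Phi\le 2n^d=M$ and that the coarse estimates are fed valid inputs.

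Next I would bound the number of iterations. Each sparsification replaces a tuple of value $m_o$ by the tuples $\chi(1,c_1),\ldots,\chi(s,c_d)$, each of expected value $m_o/4^d$, and because the parent value exceeds the polylogarithmic threshold $\tau$, a concentration estimate analogous to the one proving Lemma~\ref{lem:sparse} shows that with high probability every child has value at most $m_o/2$; combined with the facts that Step~4 never increases any tuple's value and that, by Observation~\ref{obs:num_tuples} and the flow of the algorithm, at least one of every two consecutive iterations is a sparsification, the largest tuple value drops below $\tau$ after $T=\Oh_d(\log n)$ iterations, at which point $\cD$ empties. Composing the per-iteration factors,
\[
\Bigl(1-\tfrac{\eps}{4d\log n}\Bigr)^{T} m_o(\cH)\ \le\ \Psi\ \le\ \Bigl(1+\tfrac{\eps}{4d\log n}\Bigr)^{T} m_o(\cH),
\]
and since $T\cdot\frac{\eps}{4d\log n}=\Oh(\eps)$ with a small absolute constant (absorbed by the choice of the constants in $\tau$ and $\lambda$), $\Psi$ is a $(1\pm\eps)$-approximation of $m_o(\cH)$, and the invariant follows a fortiori. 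For the failure probability I would union-bound over the $T=\Oh_d(\log n)$ iterations and the at most $4^d\cN=\Oh_d(\log^{4d}n/\eps^2)$ tuples processed per iteration, over the bad events of Lemma~\ref{lem:sparse} (probability $p_d=d!/n^{4\theta-2d}$ each), Lemma~\ref{lem:coarse_main} ($n^{-8d}$ each) and Lemma~\ref{lem:importance-app} ($n^{-6d}$ each); using $\eps\ge\lbeps$ (so $\eps^{-2}\le n^{d/2}$) and taking $\theta$ a sufficiently large constant multiple of $d$, the total is at most $n^{-4d}$.

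Finally, for the query complexity: sparsification and importance sampling use no queries; Lemma~\ref{lem:coarse_main} costs $\Oh_d(\log^{d+1}n)$ \gtwopis queries per tuple, and Lemma~\ref{lem:exact} costs $\Oh_d(\tau\log n)=\Oh_d(\log^{d+3}n/\eps^2)$ \gonepis queries per tuple; multiplying by the $\Oh_d(\log^{4d}n/\eps^2)$ tuples per iteration and the $\Oh_d(\log n)$ iterations, the exact-counting term dominates and the total is $\Oh_d(\log^{5d+4}n/\eps^4)=\Oh(\log^{5d+4}n/\eps^4)$ for constant $d$, as claimed. The hard part will be the second paragraph: the thresholds $\tau$, $\cN$ and the importance-sampling parameters must be tuned so that a single sparsification, or a single coarse-estimation-plus-resampling step, perturbs $\Phi$ by only a factor $1\pm\eps/\Theta(d\log n)$ \emph{and} keeps the number of tuples — hence the per-iteration query budget — polylogarithmic; these two demands pull in opposite directions, and the argument has to reconcile them against the $\Oh_d(\log n)$ bound on the iteration count.
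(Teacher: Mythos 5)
Your proposal is correct and follows essentially the same route as the paper: your potential $\Phi(\cD)$ is exactly the paper's estimator $\est_i$, the per-iteration $(1\pm\frac{\eps}{4d\log n})$ perturbation from sparsification (via Lemma~\ref{lem:sparse} with $m_o>\tau$) and from coarse estimation plus importance sampling (Lemmas~\ref{lem:coarse_main} and~\ref{lem:importance-app}) mirrors Observation~\ref{lem:iter1}, and your halving argument with at least one sparsification in every two iterations is Observation~\ref{lem:iter2}, leading to the same $\Oh_d(\log n)$ iteration bound, union bound over failures, and query count. The only cosmetic differences are that you track the maximum tuple value and an explicit invariant $\Phi\in[\tfrac12 m_o(\cH),2m_o(\cH)]$ where the paper tracks the total active mass $\act_i$ and composes multiplicative errors directly.
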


To prove the above lemma, we need the following definition~\ref{defi:est} along with Observations~\ref{lem:iter1} and~\ref{lem:iter2}.

\begin{defi} \label{defi:est} Let $\mbox{{\sc Tuple}}_i$ be the set of tuples in the data structure $\cal D$ at the end of the $i$-th iteration. We have $\tuple_i^{\leq \tau}=\{(A_1,\ldots,A_d,w):m_o(\dsubset) \leq \tau\}$ and $\tuple_i^{>\tau}=\tuple_{i} \setminus \tuple_i^{\leq \tau}$. Let $\Psi_i$ denote the value of $\Psi$ after the $i$-th iteration where $i \in \N$. The estimate for $m_o(\cH)=m_o(U^{[d]})$ after the $i$-th iteration is given as $\est_{i}=\Psi_i + \sum\limits_{(A_1,\ldots,A_d,w) \in \tuple_i} w \cdot m_o(\dsubset)$. The number of active hyperedges after the $i$-th iteration is given as $\act_{i}=\sum\limits_{(A_1,\ldots,A_d,w) \in \tuple_i} m_o(\dsubset)$. \end{defi}

Note that if there are some tuples left in $\cal D$ at the end of the $i$-th iteration, we do not know the value of $\est_i$ and $\act_i$. However, we know $\Psi_i$. Observe that $\Psi_{0}=0$ and $\est_{0}=\act_{0}=m_o(\cH)$.

\begin{obs} \label{lem:iter1} Let there be only one tuple in the data structure $\cal D$ after the $i$-th iteration for any non-negative integer $i$. Then, $\est_{{i+1}}$ is a $(1+\lambda)$-approximation to $\est_i$, where $\lambda=\frac{\eps}{4d \log n}$, with probability at least $1-\frac{1}{n^{5d}}$. \end{obs}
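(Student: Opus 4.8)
The plan is to trace what happens to the unique tuple $(A_1^{[a_1]},\ldots,A_s^{[a_s]},w)$ sitting in $\cD$ at the end of iteration $i$ during the course of iteration $i+1$, splitting into two cases according to the outcome of the exact-counting step. Since $\cD$ holds only one tuple, $1\le\cN$, so iteration $i+1$ can only run Step~2 (exact counting) and, possibly, Step~3 (sparsification); Step~4 is never reached, which is why the sole source of randomness to control is the sparsification of Lemma~\ref{lem:sparse}.

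\emph{Case 1: $m_o(A_1^{[a_1]},\ldots,A_s^{[a_s]})\le\tau$.} Step~2 removes the tuple and sets $\Psi_{i+1}=\Psi_i+w\,m_o(A_1^{[a_1]},\ldots,A_s^{[a_s]})$, leaving $\cD$ empty. By Definition~\ref{defi:est}, $\est_{i+1}=\Psi_{i+1}=\Psi_i+w\,m_o(A_1^{[a_1]},\ldots,A_s^{[a_s]})=\est_i$, so the claim holds deterministically.

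\emph{Case 2: $m_o(A_1^{[a_1]},\ldots,A_s^{[a_s]})>\tau$.} Step~2 keeps the tuple and Step~3 replaces it by the tuples $\bigl(\chi(1,c_1),\ldots,\chi(s,c_d),\,4w\bigr)$ over those $(c_1,\ldots,c_d)\in[k]^d$ (with $k=4$) for which $h_d(c_1,\ldots,c_d)=1$, without changing $\Psi$. With $\cR_d$ as in Lemma~\ref{lem:sparse}, the contribution of the new tuples to the estimate is exactly $4w\,\cR_d$, so $\est_{i+1}=\Psi_i+4w\,\cR_d$ while $\est_i=\Psi_i+w\,m_o(A_1^{[a_1]},\ldots,A_s^{[a_s]})$, whence
\[
|\est_{i+1}-\est_i| = w\cdot\bigl|\,4\cR_d-m_o(A_1^{[a_1]},\ldots,A_s^{[a_s]})\,\bigr|.
\]
Applying Lemma~\ref{lem:sparse} with $k=4$, with probability at least $1-p_d$ the right-hand side is at most $w\cdot 2^{2d+2}\theta^d\sqrt{d!\,m_o(A_1^{[a_1]},\ldots,A_s^{[a_s]})\log^d n}$. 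I would then verify that this is at most $\lambda\,w\,m_o(A_1^{[a_1]},\ldots,A_s^{[a_s]})\le\lambda\,\est_i$, using $\est_i\ge w\,m_o(A_1^{[a_1]},\ldots,A_s^{[a_s]})$ since $\Psi_i\ge0$: after squaring, this is equivalent to $m_o(A_1^{[a_1]},\ldots,A_s^{[a_s]})\ge 2^{4d+4}\theta^{2d}d!\log^d n/\lambda^2$, and plugging in $\lambda=\frac{\eps}{4d\log n}$ this lower bound equals exactly $\tau$ (with $k=4$ in the definition $\tau=\frac{k^2 4^{2d}\theta^{2d}16 d^2 d!\log^{d+2}n}{\eps^2}$). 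Since $m_o(A_1^{[a_1]},\ldots,A_s^{[a_s]})>\tau$ by assumption, the bound follows, and $\est_{i+1}$ is a $(1+\lambda)$-approximation of $\est_i$ on this probability-$(1-p_d)$ event.

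It remains to turn $p_d$ into the stated failure probability: recalling that $\theta>d$ is a free constant from Lemma~\ref{lem:sparse}, choosing it large enough (e.g.\ $\theta\ge 2d$) makes $p_d=\frac{d!}{n^{4\theta-2d}}\le\frac{1}{n^{5d}}$ for $n\gg d$, which finishes both cases. The only delicate point in the whole argument is the constant bookkeeping in Case~2 — namely that the concentration radius of Lemma~\ref{lem:sparse} at $k=4$, once scaled by the weight and compared to $\lambda\,\est_i$, collapses precisely onto the threshold $\tau$ hard-wired into Step~2 of the algorithm; everything else (no Step~4 randomness to handle, the $\Psi_i$ term only helping, the general block shape $A_1^{[a_1]},\ldots,A_s^{[a_s]}$ already covered by Lemma~\ref{lem:sparse}) is routine.
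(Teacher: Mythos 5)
Your proof is correct and follows essentially the same route as the paper's: the exact-counting case is immediate, and in the sparsification case you invoke Lemma~\ref{lem:sparse} with $k=4$, observe that the new tuples contribute exactly $4w\cR_d$, and check that the concentration radius scaled by $w$ is at most $\lambda\, w\, m_o(A_1^{[a_1]},\ldots,A_s^{[a_s]})\le\lambda\,\est_i$ precisely because $m_o>\tau$ with $\lambda=\frac{\eps}{4d\log n}$, while $\theta=2d$ gives the $1-n^{-5d}$ success probability -- the same bookkeeping as in the paper. The only (cosmetic) difference is that the paper's proof also treats the coarse-estimation/importance-sampling branch (Step~4), which, as you correctly note, cannot arise when $\cD$ holds a single tuple.
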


\begin{proof} From Definition~\ref{defi:est},
\begin{eqnarray*}
	\est_{i} &=& \Psi_i + \sum\limits_{(A_1,\ldots,A_d,w) \in \tuple_i} w\cdot m_o(\dsubset) \\
		 &=& \Psi_i + \sum\limits_{(A_1,\ldots,A_d,w) \in \tuple_i^{\leq \tau}} w\cdot m_o(\dsubset) + \sum\limits_{(A_1,\ldots,A_d,w) \in \tuple_i^{>\tau}} w\cdot m_o(\dsubset) 
\end{eqnarray*}

In Step $2$ of the algorithm, for each tuple $(\dsubset,w) \in \tuple_i^{\leq \tau}$, we determine the exact value $m_o(\dsubset)$, add $w \cdot m_o(\dsubset)$ to current $\Psi$ and remove the tuple from $\cal D$. Observe that
\begin{equation}
	\Psi_{i+1} -\Psi_i = \sum\limits_{(\dsubset,w)\in \tuple_i^{\leq \tau}}w\cdot m_o(\dsubset) \label{eqn:2}
\end{equation}

If $\tuple_i^{>\tau}$ is empty, we go to Step $1$ to report the output. Observe that in that case $\est_{i+1}=\est_i$, and we are done. If $\tuple_i^{>\tau}$ is non-empty, then we go to either Step $3$ or Step $4$ depending on whether the number of tuples in $\cal D$ is at most $\cN$ or more than $\cN$, respectively, where $\cN=\kappa_d\frac{\log ^{4d}n}{\eps^2}$.

	{\bf Case 1: (Go to Step $3$)} 
Note that for each tuple $(\dsubset,w)$ in $\cal D$ we have $\tuple_{i}^{\geq \tau}$. We apply the sparsification step (Step $3$) for each tuple. For each tuple $(\dsubset,w)$, we add a set of tuples $\cZ$ by removing $(\dsubset,w)$ from $\cal D$. By Lemma~\ref{lem:sparse}, we have the following with probability $1-\frac{1}{n^{4\theta -2d}}$.
	$$\size{k\sum\limits_{(B_1,\ldots,B_d,4w) \in \cZ}m_o(B_1,\ldots,B_d) - m_o(\dsubset)} \leq k 2^{2d}\theta^d   \sqrt{ d! m_o(\dsubset) \log^d n}.$$ 

	Now using $m_o(\dsubset) \geq \tau=\frac{k^2\cdot 4^{2d}\theta^{2d} \cdot 16d^2 \cdot d! \log ^{d+2} n}{\eps^2}~\mbox{and}~k=4$ and taking $\theta=2d$, we have Equation \ref{eqn:3} with probability $1-\frac{1}{n^{6d}}$.

\begin{equation}
	\size{\sum\limits_{(B_1,\ldots,B_d,4w) \in \cZ}4w \cdot m_o(B_1,\ldots,B_d) - w\cdot m_o(\dsubset)} \leq  \frac{\eps}{4d \log n}   \cdot wm_o(\dsubset) \label{eqn:3}
\end{equation}

	Since we are in Step $3$, there are at most $\cN=\kappa_d \frac{\log ^{4d} n}{\eps^2}$ many tuples in $\tuple_i^{>\tau}$. As $\eps > \left(\frac{\log^{5d+5}}{n^d}\right)^{1/4}$, the probability that Equation \ref{eqn:3} holds for each tuple in $\tuple_i^{>\tau}$ is at least $1-\frac{1}{n^{5d}}$.

By Definition~\ref{defi:est},
\begin{eqnarray*}
	\est_{i+1} &=& \Psi_{i+1} + \sum\limits_{(B_1,\ldots,B_d,w') \in \tuple_{i+1}} w' \cdot m_o(B_1,\ldots,B_d) 
\end{eqnarray*}

	Using Equations \ref{eqn:2} and \ref{eqn:3}, we can show that $\est_{i+1}$ is an $(1+\lambda)$-approximation to $\est_i$, where $\lambda=\frac{\eps}{4d \log n}$, and the probability of success is $1-\frac{1}{n^{5d}}$.

	{\bf Case $2$: (Go to Step $4$)} 
Here, we apply coarse estimation algorithm for each tuple $(\dsubset,w)$ present in $\cal D$ to find $\hat{E}$ such that $\frac{m_o(\dsubset)}{\alpha} \leq \hat{E} \leq \alpha m_o(\dsubset)$ as described in Step-4. By Lemma~\ref{lem:coarse_main}, the probability of success of finding the required coarse estimation for a particular tuple, is at least $1-\frac{1}{n^{8d}}.$ By Observation~\ref{obs:num_tuples}, we have at most $4^d \cN =\frac{\kappa_d 4^d \log ^{4d} n}{\eps^2}$ many tuples at any instance of the algorithm. Hence, as $\eps > \left( \frac{\log ^{5d+5} n}{n^d}\right)^{1/4}$, the probability that we have the desired coarse estimation for all tuples present in $\cal D$, is at least $1-\frac{1}{n^{6d}}$.
 We have $r > \cN=\kappa_d \frac{\log ^{4d} n}{\eps^2}$ many tuples in $\cal D$. Under the conditional space that we have the desired coarse estimation for all tuples present in $\cal D$, we apply the algorithm {\sc Alg} corresponding to Lemma~\ref{lem:importance-app}. In doing so, we get $r' \leq \cN$ many tuples, as described in the Step-4, with probability $1-\frac{1}{n^{6d}}$. Observe that $\tuple_{i+1}$ is the set of $r'$ tuples returned by {\sc Alg} satisfying
 \begin{equation}\label{eqn:4}
 \size{ \sum\limits_{(B_1,\ldots,B_d,w')} w'\cdot m_o(B_1,\ldots,B_d) - S} 
  \leq \lambda S,
 \end{equation}
where $\lambda = \frac{\eps}{4d \log n}$ and $S=\sum\limits_{(\dsubset,w) \in \tuple_i^{>\tau}} w \cdot m_o(\dsubset)$. Now, by Definition~\ref{defi:est},
  \begin{eqnarray*}
  \est_{i+1} &=& \Psi_{i+1} + \sum\limits_{(B_1,\ldots,B_d,w') \in \tuple_{i+1}} w' \cdot m_o(B_1,\ldots,B_d) 
  \end{eqnarray*}
 Using Equations~\ref{eqn:2} and~\ref{eqn:4}, we can show that $\est_{i+1}$ is an $(1+\lambda)$-approximation to $\est_i$ and the probability of success  is $1-\left(\frac{1}{n^{6d}}+\frac{1}{n^{6d}}\right)\geq 1-\frac{1}{n^{6d-1}}$.
\end{proof}

\begin{obs}\label{lem:iter2} Let there be at least one tuple $(A_1,\ldots,A_d,w)$ in $\cal D$ after the $i$-th iteration such that $m_o(\dsubset) > \tau$ for any integer $i>0$. Then, $\act_{{i+2}} \leq \frac{\act_{i}}{2}$ with probability at least $1-\frac{2}{n^{5d}}$. \end{obs}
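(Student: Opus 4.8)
The plan is to track the quantity $\act$ across iterations $i+1$ and $i+2$ and to show that it cannot grow except during one sparsification step, where it is cut by more than half. The first ingredient I would establish is a monotonicity fact: the exact‑counting step (Step~$2$) only deletes tuples from $\cD$, while the coarse‑estimation/importance‑sampling step (Step~$4$) leaves the tuples untouched during coarse estimation and then replaces them by a sub‑collection of themselves with re‑adjusted weights during importance sampling; hence neither step can increase $\act=\sum_{(\dsubset,w)\in\cD}m_o(\dsubset)$. The second ingredient is that a sparsification step (Step~$3$), which by construction is reached only when every remaining tuple $(\ssubset,w)\in\cD$ satisfies $m_o(\ssubset)>\tau$, shrinks $\act$ by a factor strictly bigger than $2$ with high probability. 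Granting these two facts, the whole statement reduces to showing that at least one of iterations $i+1,i+2$ performs a sparsification step.

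For the ``at least one sparsification'' dichotomy I would argue as follows. By hypothesis some tuple with $m_o>\tau$ lies in $\cD$ at the end of iteration $i$; this tuple survives Step~$2$ of iteration $i+1$, so $\cD$ is non‑empty after that step and the algorithm proceeds to Step~$3$ or Step~$4$. If it proceeds to Step~$3$ we are done. Otherwise it performs Step~$4$, after which $\cD$ holds at most $\cN$ tuples by Lemma~\ref{lem:importance-app} with the parameters fixed in Step~$4$; hence at the start of iteration $i+2$ there are at most $\cN$ tuples, Step~$2$ does not increase this count, and so iteration $i+2$ enters Step~$3$. (In the degenerate case where $\cD$ empties out before any sparsification, $\act_{i+2}=0\le\act_i/2$ outright, so we may assume this does not happen.)

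Next I would quantify the sparsification step. For a tuple $(\ssubset,w)$ present at such a step we have $m:=m_o(\ssubset)>\tau$, and sparsification replaces it by the properly coloured sub‑hypergraphs whose $m_o$‑values sum to $\cR_d$. Plugging $k=4$ and $\theta=2d$ into Lemma~\ref{lem:sparse}, with probability at least $1-p_d$ one has $\cR_d\le \tfrac{m}{4}+2^{2d}\theta^d\sqrt{d!\,m\log^d n}$; since $m>\tau=\frac{k^2 4^{2d}\theta^{2d}16 d^2 d!\log^{d+2}n}{\eps^2}$ the error term is at most $\tfrac{\eps}{16 d\log n}\,m\le m/16$, so $\cR_d<m/3$. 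Summing over the at most $\cN$ tuples present at this step, $\act$ right after the step is at most a third of its value right before, and by the monotonicity fact that earlier value is at most $\act_i$; hence $\act_{i+2}\le\act_i/3\le\act_i/2$, and again by monotonicity $\act$ does not grow afterwards. For the probability bound I would take a union bound over the at most two sparsification steps occurring in iterations $i+1,i+2$ and over their $\le\cN$ tuples each, each failing with probability $\le p_d=d!/n^{4\theta-2d}=d!/n^{6d}$; because $\eps\ge\lbeps$, the total is at most $2\cN p_d=\frac{2\kappa_d d!\log^{4d}n}{\eps^2 n^{6d}}\le \frac{2}{n^{5d}}$ for $n$ large.

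The step I expect to be the genuine obstacle — as opposed to bookkeeping — is the monotonicity of $\act$ under importance sampling: it must be argued from the \emph{description} of the sampling routine (its output tuples are drawn from the current ones, so the multiset of sub‑hypergraphs can only lose members), not from the conclusion of Lemma~\ref{lem:importance-app}, which only bounds the weighted sum $\sum w_i m_o$ and is useless for $\act$ here because the weights can be as large as $4^{\Theta(\log n)}$. Everything else — the dichotomy, the constant chasing in the sparsification estimate, and the final union bound — is routine once Lemmas~\ref{lem:sparse} and~\ref{lem:importance-app} are in hand.
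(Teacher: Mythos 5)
Your proposal is correct and follows essentially the same route as the paper's proof: at least one of iterations $i+1,i+2$ performs sparsification (or the algorithm terminates with $\act_{i+2}=0$), sparsification of a tuple with $m_o>\tau$ shrinks its contribution to $\act$ by more than half via Lemma~\ref{lem:sparse} with $k=4$, $\theta=2d$, the other steps do not increase $\act$, and a union bound over the at most $\cN$ tuples gives the $1-\frac{2}{n^{5d}}$ guarantee. Your only deviation is presentational: you justify the non-increase of $\act$ under coarse estimation plus importance sampling deterministically, from the fact that the sampled tuples form a sub-collection of the existing ones, whereas the paper handles this step probabilistically by analogy with Observation~\ref{lem:iter1}; this is a slightly more careful rendering of the same argument.
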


\begin{proof}[Proof of Observation~\ref{lem:iter2}]
As there exists one tuple in $\tuple_i^{>\tau}$, our algorithm will not terminate in Step-2. It will determine the exact values of
$m_o(\dsubset)$ for each $(\dsubset,w)\in \tuple_i^{\leq \tau}$, and then will go to either Step-3 or Step-4 depending on the cardinality of $\tuple_i^{\leq \tau}$. By adapting the same approach as that in the proof of Observation~\ref{lem:iter1}, we can show that
\begin{itemize}
\item[(i)] In the $(i+1)$-th iteration, if our algorithm goes to Step-3, then $\act_{i+1} \leq \frac{\act_{i}}{2}$ with probability $1-\frac{1}{n^{5d}}$; and
\item[(ii)] In the $(i+1)$-th iteration, if our algorithm goes to Step-4, then $\act_{i+1} \leq {\act_{i}}$ with probability $1-\frac{1}{n^{6d-1}}$.
\end{itemize}
From the description of the algorithm, it is clear that we apply sparsification either in iteration $(i+1)$ or $(i+2)$. That is, either we do sparsification in both the iterations, or we do sparsification in one iteration and coarse estimation in the other iteration, or we do sparsification in $(i+1)$-th iteration and termination of the algorithm after executing Step-2 in $(i+2)$-th iteration. Observe that in the last case, that is, if we terminate in $(i+2)$-th iteration, then $\act_{i+2}=0 \leq \frac{\act_{i}}{2}$. In other two cases, by (i) and (ii), we have $\act_{i+2}\leq \frac{\act_i}{2}$ with probability at least $1-\frac{2}{n^{5d}}$.
\end{proof}

Now, we are ready to prove Lemma~\ref{lem:prob}.

\begin{proof}[Proof of Lemma~\ref{lem:prob}] Let $i^*$ be the largest integer such that there exists at least one tuple $(A_1,\ldots,A_d,w)$ in the data structure $\cal D$ in the $i^*$-th iteration such that $m_o(\dsubset) > \tau$. That is $\act_{i^*} > \tau$. For ease of analysis, let us define the two following events:
\begin{description}
\item[$\cE_1$:] $i^* \leq 2d \log n$. 
\item[$\cE_2$:] $\est_{i^*}$ is an $(1 \pm \eps)$-approximation to $m_o(\cH)$.
\end{description}
Using the fact $\act_0=m_o(\cH) \leq n^d$ along with the Observation~\ref{lem:iter2}, we have $i^*\leq 2d\log n$ with probability at least $1-{2d \log n}\frac{2}{n^{5d}}$. That is $\pr(\cE_1) \geq 1-\frac{4d \log n}{n^{5d}}$.

Now let us condition on the event $\cE_1$. By the definition of $i^*$, we do the following in the $(i^*+1)$-th iteration. In Step 2, for each tuple $(\dsubset,w)$ present in $\cal D$, we determine $m_0(\dsubset)$ exactly, add it to $\Psi$ and remove $(\dsubset,w)$ from $\cal D$. Observe that $\act_{i^*+1}=0$, that is, $\est_{i^*+1}= \Psi_{i^* +1}=\est_{i^*}$. Since there are no tuples left in $\cal D$, we go to Step $1$. At the start of the $(i^*+2)$-th iteration, we report $\Psi_{i^* +1}=\est_{i^*}$ as the output. Using Observation~\ref{lem:iter1}, $\est_{i^*}$ is an ${(1 \pm \lambda)}^{i^*}$-approximation to $\est_0$ with probability at least $1-\frac{2d \log n}{n^{5d}}$. As $\est_0=m_o(\cH)$, $\lambda=\frac{\eps}{4d \log n}$, and $\cE_1$ has occurred, we have $\est_{i^*}$ is an $(1 \pm \eps)$-approximation to $m_o(\cH)$ with probability at least $1-\frac{2d\log n}{n^{3d+1}}$. That is $\pr(\cE_2~|~\cE_1) \geq 1-\frac{2d \log n}{n^{5d}}$.

Now, we analyze the query complexity of the algorithm on the conditional space that the events $\cE_1$ and $\cE_2$ have occurred. By the description of the algorithm, we make $\Oh_d\left(\frac{\log^{d+3}n}{\eps^2}\right)$ many \gonepis queries per tuple in Step $2$, and $\Oh_d(\log ^{d+1}n)$ many \gtwopis queries per tuple in Step $4$. Using Observation~\ref{obs:num_tuples}, there can be $\Oh_d\left( \frac{\log ^{4d} n}{\eps^2}\right)$ many tuples present in any iteration. Recall that the number of iterations is $i^*+2$, that is, $\Oh_d(\log n)$. Since $i^* \leq 2d \log n$, the query complexity of our algorithm is $\Oh_d \left(\log n \cdot \frac{\log ^{4d} n}{\eps^2} \cdot \left( \frac{\log^{d+3}n}{\eps^2} +\log^{d+1} n\right) \right)=\Oh_d\left( \frac{\log ^{5d+4} n}{\eps^4}\right)$, where each query is either a \gonepis or a \gtwopis query.

Now we compute the probability of success of our algorithm. Observe that
\begin{center}
$\pr(\mbox{{\sc Success}}) \geq \pr(\cE_1 \cap \cE_2)
= \pr(\cE_1) \cdot \pr(\cE_2~|~\cE_1) 
\geq \left(1-\frac{4d \log n}{n^{5d}}\right) \cdot \left(1-\frac{2d \log n}{n^{5d}}\right) 
 \geq 1-\frac{1}{n^{4d}}
$
\end{center}

\end{proof}



\bibliographystyle{alpha}
\bibliography{reference}
\newpage

\appendix
\section{Some probability results} \label{sec:prelim}

\begin{pro} \label{pro:exp} Let $X$ be a random variable. Then $\E[X] \leq \sqrt{\E[X^2]}$. \end{pro}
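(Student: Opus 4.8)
The plan is to deduce this from the non-negativity of the variance, which is the quickest route given the notation $\V[\cdot]$ already available in the paper. First I would recall that for any random variable $X$ with finite second moment, $\V[X] = \E[X^2] - (\E[X])^2$, and that $\V[X] = \E[(X-\E[X])^2] \geq 0$ since it is the expectation of a non-negative random variable. Rearranging gives $(\E[X])^2 \leq \E[X^2]$.

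From there the conclusion is immediate: $\E[X] \leq |\E[X]| = \sqrt{(\E[X])^2} \leq \sqrt{\E[X^2]}$, where the last step uses monotonicity of the square root on $[0,\infty)$ together with the inequality just established. One should note in passing that $\E[X^2] \geq 0$ so the square root is well-defined, and that the statement is vacuous/trivial if the second moment is infinite (in which case the right-hand side is $+\infty$).

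An alternative I would mention is to invoke Jensen's inequality directly for the convex function $\phi(t) = t^2$, giving $\phi(\E[X]) \leq \E[\phi(X)]$, i.e. $(\E[X])^2 \leq \E[X^2]$, and then take square roots as above; this is really the same argument since Jensen for $t^2$ is exactly the non-negativity of the variance. There is no genuine obstacle here — the only thing to be slightly careful about is not claiming $\E[X] = \sqrt{(\E[X])^2}$ (which fails when $\E[X] < 0$) but rather $\E[X] \leq |\E[X]| = \sqrt{(\E[X])^2}$, so that the bound holds regardless of the sign of $\E[X]$.
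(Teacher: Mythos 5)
Your argument is correct and complete: non-negativity of the variance (equivalently Jensen for $t^2$) gives $(\E[X])^2 \leq \E[X^2]$, and your care with the sign via $\E[X] \leq |\E[X]| = \sqrt{(\E[X])^2}$ is exactly the right way to finish. The paper states this proposition without proof, and your route is the standard one it implicitly relies on, so there is nothing further to reconcile.
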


\begin{lem}[Theorem 7.1 in ~\cite{DubhashiP09}] \label{lem:dp} Let $f$ be a function of $n$ random variables $X_1,\ldots,X_n$ such that 
\begin{itemize}
	\item[(i)] Each $X_i$ takes values from a set $A_i$,
	\item[(ii)] $\E[f]$ is bounded, i.e., $0 \leq \E[f] \leq M$,
	\item[(iii)] $\cB$ be any event satisfying the following for each $i \in [n]$.
$$ \size{\E[f ~|~X_1,\dots,X_{i-1},X_{i}=a_i,\cB^c] - \E[f ~|~X_1,\dots,X_{i-1},X_{i}=a'_i,\cB^c] }\leq c_i$$
\end{itemize}  

Then for any $\delta \geq 0$, 
$$\pr\left(\size{f - \E[f]} > \delta + M\pr(\cB) \right) \leq \exp{\left(-{\delta^2}/{\sum\limits_{i=1}^{n} c_i^2}\right)} + \pr(\cB).$$
\end{lem}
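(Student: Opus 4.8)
The plan is to deduce this from a Doob–martingale (Azuma–Hoeffding) argument carried out under the conditional law given the good event $\cB^c$, and then to transfer back to the unconditional law at a cost of an additive $M\pr(\cB)$ shift in the deviation and an extra $\pr(\cB)$ in the failure probability. Write $\mu(\cdot)=\pr(\cdot\mid \cB^c)$. Since $\cB\in\sigma(X_1,\dots,X_n)$ and $f=f(X_1,\dots,X_n)$, on $\cB^c$ the value of $f$ is determined once $X_1,\dots,X_n$ are revealed. Define $Z_i=\E_\mu[f\mid X_1,\dots,X_i]=\E[f\mid X_1,\dots,X_i,\cB^c]$ for $i=0,\dots,n$; then $Z_0=\E[f\mid\cB^c]$, $Z_n=f$ on $\cB^c$, and $(Z_i)$ is a martingale with respect to the natural filtration under $\mu$.

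The key step is the bounded-increment estimate. Fix $i$ and condition on $X_1,\dots,X_{i-1}$ and on $\cB^c$; then $Z_i=g(X_i)$ and $Z_{i-1}$ is the $\mu$-average of $g(a_i):=\E[f\mid X_1,\dots,X_{i-1},X_i=a_i,\cB^c]$ over the values $a_i$. Hypothesis (iii) says all these $g(a_i)$ lie in an interval of length at most $c_i$, so the increment $Z_i-Z_{i-1}$ has range at most $c_i$; in particular $\size{Z_i-Z_{i-1}}\le c_i$. This is the only place (iii) is used, and — importantly — Azuma's inequality needs only this increment bound and the martingale property, not independence of the $X_i$, so the fact that conditioning on $\cB^c$ can destroy independence does no harm. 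Applying Azuma–Hoeffding to $Z_n-Z_0$ and to $Z_0-Z_n$ under $\mu$ yields a bound of the form
$$\pr\left(\size{f-\E[f\mid\cB^c]}>\delta \mid \cB^c\right)\le \exp\left(-\delta^2 / \sum_{i=1}^{n} c_i^2\right),$$
where matching the exact constant in the exponent is just a matter of adopting the book's normalization of the inequality.

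It remains to pass to the unconditional statement. By the law of total expectation $\E[f]=\pr(\cB^c)\,\E[f\mid\cB^c]+\pr(\cB)\,\E[f\mid\cB]$, hence $\E[f]-\E[f\mid\cB^c]=\pr(\cB)\big(\E[f\mid\cB]-\E[f\mid\cB^c]\big)$; under the boundedness in (ii), used (as in every application of this lemma in the paper) in the pointwise form $0\le f\le M$, both conditional expectations lie in $[0,M]$, so $\size{\E[f]-\E[f\mid\cB^c]}\le M\pr(\cB)$. Consequently, on $\cB^c$ the event $\size{f-\E[f]}>\delta+M\pr(\cB)$ implies $\size{f-\E[f\mid\cB^c]}>\delta$, and splitting according to $\cB$ versus $\cB^c$ (and bounding $\pr(\cB^c)\le 1$),
$$\pr\left(\size{f-\E[f]}>\delta+M\pr(\cB)\right)\le \pr(\cB)+\pr\left(\size{f-\E[f\mid\cB^c]}>\delta\mid\cB^c\right)\le \pr(\cB)+\exp\left(-\delta^2 / \sum_{i=1}^{n} c_i^2\right),$$
which is the claim.

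The main obstacle is the bookkeeping around the exceptional event $\cB$ rather than any delicate analysis: one must check that $\cB\in\sigma(X_1,\dots,X_n)$ so that $Z_n=f$ on $\cB^c$; that (iii), phrased via conditional expectations, is exactly what bounds the $\mu$-martingale increments (not the unconditional ones); and that the two-sided constant of Azuma is absorbed correctly to land precisely $\exp(-\delta^2/\sum_{i=1}^n c_i^2)$. A secondary nuisance is the gap between (ii) as stated ($\E[f]$ bounded) and what the shift estimate actually needs ($f$ bounded); in this paper $f$ is always a nonnegative count bounded by $n^d$, so choosing $M=n^d$ makes the stronger form hold automatically.
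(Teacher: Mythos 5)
The paper never proves this lemma: it is imported as a black box (cited as Theorem 7.1 of Dubhashi--Panconesi in Appendix~A), so there is no internal proof to compare against, and your proposal must stand on its own. It does: your argument is essentially the standard derivation of that textbook result. The Doob martingale $Z_i=\E[f\mid X_1,\dots,X_i,\cB^c]$ under the conditional law is indeed a martingale, hypothesis (iii) does force each increment to lie in an interval of length at most $c_i$ (and note $Z_n=f$ holds almost surely under $\pr\left(\cdot\mid\cB^c\right)$ simply because $f$ is $\sigma(X_1,\dots,X_n)$-measurable, so your worry about whether $\cB$ is determined by the $X_i$ is unnecessary), Azuma--Hoeffding needs only these two facts and not independence, and the transfer step via $\size{\E[f]-\E[f\mid\cB^c]}\le M\pr(\cB)$ together with splitting on $\cB$ is exactly right. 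The two caveats you flag are genuine but are defects of the statement as transcribed rather than of your proof: (a) the transfer step really does need $0\le f\le M$ pointwise --- condition (ii) as written, bounding only $\E[f]$, is too weak since $\E[f\mid\cB]$ could be arbitrarily large --- and in every use in this paper $f$ is a hyperedge count with $0\le f\le n^d$, so this is harmless; (b) two-sided Azuma with conditional increment ranges $c_i$ gives $2\exp\left(-2\delta^2/\sum_{i=1}^n c_i^2\right)$, which implies the displayed bound only when $\delta^2\ge (\ln 2)\sum_{i=1}^n c_i^2$; for smaller $\delta$ the bound exactly as printed (no leading factor $2$) can in fact fail (e.g.\ $n=1$, $f=X_1$ uniform on $\{0,1\}$, $\cB=\emptyset$, $c_1=1$, $\delta=0.4$), so absorbing the discrepancy into ``the book's normalization'' is the correct resolution and all applications in the paper are in the large-deviation regime where it is immaterial.
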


\begin{lem}[\cite{DubhashiP09}(Hoeffding's inequality)]
\label{lem:hoeff_inq}
Let $X_1,\ldots,X_n$ be $n$ independent random variables such that $X_i \in [a_i,b_i]$. Then for $X=\sum\limits_{i=1}^n X_i$, the following is true for any $\delta >0$.
$$\pr \left( \size{X - \E[X]} \geq \delta \right) \leq 2 \cdot \exp{\left(-{2\delta ^2}/{\sum\limits_{i=1}^n (b_i - a_i)^2}\right)}.$$
\end{lem}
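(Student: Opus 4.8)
The plan is to prove this by the standard exponential-moment (Chernoff--Cram\'er) method, using Hoeffding's lemma to control the moment generating function of each bounded summand. Since the statement is quoted from Dubhashi and Panconesi~\cite{DubhashiP09}, in the paper it is simply cited; the sketch below is how one would fill in that citation.

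First I would bound the upper tail $\pr\left(X-\E[X]\geq\delta\right)$. Fix $t>0$; applying Markov's inequality to the nonnegative random variable $\exp(t(X-\E[X]))$ gives
$$\pr\left(X-\E[X]\geq \delta\right)\leq e^{-t\delta}\,\E\left[\exp\left(t(X-\E[X])\right)\right].$$
By independence of $X_1,\ldots,X_n$ the expectation factorizes as $\prod_{i=1}^n \E\left[\exp\left(t(X_i-\E[X_i])\right)\right]$. Next I would invoke Hoeffding's lemma: for a random variable $Y$ with $\E[Y]=0$ and $Y\in[a,b]$ almost surely, $\E[e^{tY}]\leq \exp\left(t^2(b-a)^2/8\right)$ for every real $t$. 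Applying this to each centered summand $Y_i=X_i-\E[X_i]$, which takes values in an interval of width $b_i-a_i$, yields
$$\E\left[\exp\left(t(X-\E[X])\right)\right]\leq \exp\left(\frac{t^2}{8}\sum_{i=1}^n (b_i-a_i)^2\right),$$
hence $\pr\left(X-\E[X]\geq\delta\right)\leq \exp\left(-t\delta+\frac{t^2}{8}\sum_{i=1}^n (b_i-a_i)^2\right)$. Optimizing over $t>0$, the minimizer is $t=4\delta/\sum_{i=1}^n(b_i-a_i)^2$, which gives the one-sided bound $\exp\left(-2\delta^2/\sum_{i=1}^n(b_i-a_i)^2\right)$. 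Running the same argument with each $X_i$ replaced by $-X_i$ (equivalently each interval $[a_i,b_i]$ by $[-b_i,-a_i]$, which has the same width) bounds $\pr\left(X-\E[X]\leq -\delta\right)$ by the same quantity, and a union bound over the two tails introduces the factor $2$ in the statement.

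The step I expect to be the main obstacle is Hoeffding's lemma itself. One clean route: set $\psi(t)=\log\E[e^{tY}]$, note $\psi(0)=0$ and $\psi'(0)=\E[Y]=0$, and observe that $\psi''(t)$ equals the variance of $Y$ under the exponentially tilted law with density proportional to $e^{tY}$; since $Y$ is supported in $[a,b]$, Popoviciu's inequality bounds this tilted variance by $(b-a)^2/4$, so a second-order Taylor expansion gives $\psi(t)\leq t^2(b-a)^2/8$. An alternative fully elementary argument bounds $e^{tY}$ above by the affine chord through $(a,e^{ta})$ and $(b,e^{tb})$ using convexity of $x\mapsto e^{tx}$, takes expectations, and then verifies by calculus that the resulting explicit function of $t$ is at most $e^{t^2(b-a)^2/8}$. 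Either way, once Hoeffding's lemma is in hand the rest of the argument above is routine.
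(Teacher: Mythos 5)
Your proof is correct: the paper does not prove this lemma at all but simply cites it from Dubhashi and Panconesi~\cite{DubhashiP09}, and your write-up is the standard exponential-moment argument (Markov on $e^{t(X-\E[X])}$, factorization by independence, Hoeffding's lemma, optimization at $t=4\delta/\sum_i(b_i-a_i)^2$, and a union bound over the two tails) that fills in that citation faithfully. Both routes you sketch for Hoeffding's lemma (the tilted-variance bound via Popoviciu and the convexity/chord argument) are sound, so nothing further is needed.
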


\begin{lem}[\cite{DubhashiP09}(Chernoff-Hoeffding bound)]
\label{lem:cher_bound1}
Let $X_1, \ldots, X_n$ be independent random variables such that $X_i \in [0,1]$. For $X=\sum\limits_{i=1}^n X_i$ and $\mu=\E[X]$, the followings hold for any $0\leq \delta \leq 1$.
$$ \pr(\size{X-\mu} \geq \delta\mu) \leq 2\exp{\left(-\mu \delta^2/3\right)}$$

\end{lem}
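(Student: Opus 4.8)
The statement is the standard multiplicative Chernoff--Hoeffding bound, and the plan is the classical exponential-moment (Chernoff) argument; note that it does not follow from Hoeffding's inequality (Lemma~\ref{lem:hoeff_inq}) above, whose additive form yields only $2\exp(-2\delta^2\mu^2/n)$ in this setting, which is weaker than what is claimed when $\mu \ll n$. Write $p_i = \E[X_i]$, so $\mu = \sum_{i=1}^n p_i$. I would prove the upper tail $\pr(X \geq (1+\delta)\mu) \leq \exp(-\mu\delta^2/3)$ and the lower tail $\pr(X \leq (1-\delta)\mu) \leq \exp(-\mu\delta^2/3)$ separately, then combine them by a union bound to obtain the factor~$2$.

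For the upper tail, for any $t > 0$ Markov's inequality applied to $e^{tX}$ gives $\pr(X \geq (1+\delta)\mu) \leq e^{-t(1+\delta)\mu}\,\E[e^{tX}]$, and by independence $\E[e^{tX}] = \prod_{i=1}^n \E[e^{tX_i}]$. Since $X_i \in [0,1]$ and $x \mapsto e^{tx}$ is convex, $e^{tX_i} \leq 1 + (e^t-1)X_i$ pointwise, so $\E[e^{tX_i}] \leq 1 + (e^t-1)p_i \leq \exp((e^t-1)p_i)$ using $1+y \leq e^y$; multiplying over $i$ gives $\E[e^{tX}] \leq \exp((e^t-1)\mu)$. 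Hence $\pr(X \geq (1+\delta)\mu) \leq \exp\!\big(\mu(e^t-1) - t(1+\delta)\mu\big)$, and choosing $t = \ln(1+\delta)$ to minimise the exponent yields $\pr(X \geq (1+\delta)\mu) \leq \left(\frac{e^\delta}{(1+\delta)^{1+\delta}}\right)^{\mu}$. It then remains to verify the scalar inequality $\frac{e^\delta}{(1+\delta)^{1+\delta}} \leq e^{-\delta^2/3}$ for $0 \leq \delta \leq 1$, equivalently $(1+\delta)\ln(1+\delta) \geq \delta + \frac{\delta^2}{3}$; this follows from the alternating, magnitude-decreasing expansion $(1+\delta)\ln(1+\delta) = \delta + \sum_{k \geq 2} \frac{(-1)^k \delta^k}{k(k-1)} \geq \delta + \frac{\delta^2}{2} - \frac{\delta^3}{6}$ together with $\frac{\delta^2}{2} - \frac{\delta^3}{6} \geq \frac{\delta^2}{3}$ when $\delta \leq 1$.

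For the lower tail I would run the symmetric argument with $t = -s$, $s > 0$: the same convexity bound gives $\E[e^{-sX}] \leq \exp((e^{-s}-1)\mu)$, so $\pr(X \leq (1-\delta)\mu) = \pr(e^{-sX} \geq e^{-s(1-\delta)\mu}) \leq \exp\!\big((e^{-s}-1)\mu + s(1-\delta)\mu\big)$, and optimising at $s = -\ln(1-\delta) > 0$ gives $\pr(X \leq (1-\delta)\mu) \leq \left(\frac{e^{-\delta}}{(1-\delta)^{1-\delta}}\right)^{\mu}$. The companion scalar estimate is $\frac{e^{-\delta}}{(1-\delta)^{1-\delta}} \leq e^{-\delta^2/2} \leq e^{-\delta^2/3}$ for $0 \leq \delta < 1$, which follows from $-\delta - (1-\delta)\ln(1-\delta) = -\sum_{k \geq 2}\frac{\delta^k}{k(k-1)} \leq -\frac{\delta^2}{2}$. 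Adding the two tail probabilities gives $\pr(\size{X-\mu} \geq \delta\mu) \leq 2\exp(-\mu\delta^2/3)$, as claimed. The only genuinely fiddly part is these two scalar inequalities comparing the ratio forms with $e^{-\delta^2/3}$; everything else is the routine Chernoff bootstrap and needs nothing beyond Markov's inequality, independence, and convexity of $\exp$.
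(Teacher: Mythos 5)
Your proof is correct: the paper itself gives no proof of this lemma, importing it verbatim from \cite{DubhashiP09}, and your argument is exactly the standard exponential-moment (Chernoff) derivation used there --- Markov on $e^{tX}$, the convexity bound $e^{tx}\leq 1+(e^t-1)x$ on $[0,1]$, optimisation at $t=\ln(1\pm\delta)$, and the two scalar inequalities $(1+\delta)\ln(1+\delta)\geq\delta+\delta^2/3$ and $-\delta-(1-\delta)\ln(1-\delta)\leq-\delta^2/2$, all of which you verify correctly, including the correct observation that Hoeffding's additive bound would not suffice here. The only cosmetic point is the boundary case $\delta=1$ of the lower tail, where $(1-\delta)^{1-\delta}$ should be read as the limit (equal to $1$) or handled by letting $\delta'\uparrow 1$; this does not affect the validity of the argument.
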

\begin{lem}[\cite{DubhashiP09}(Chernoff-Hoeffding bound)]
\label{lem:cher_bound2}
Let $X_1, \ldots, X_n$ be independent random variables such that $X_i \in [0,1]$. For $X=\sum\limits_{i=1}^n X_i$ and $\mu_l \leq \E[X] \leq \mu_h$, the followings hold for any $\delta >0$.
\begin{itemize}
\item[(i)] $\pr \left( X > \mu_h + \delta \right) \leq \exp{\left(-2\delta^2/n\right)}$.
\item[(ii)] $\pr \left( X < \mu_l - \delta \right) \leq \exp{\left(-2\delta^2 / n\right)}$.
\end{itemize}

\end{lem}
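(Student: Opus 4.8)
The plan is to obtain both parts as a one-sided specialization of Hoeffding's inequality (Lemma~\ref{lem:hoeff_inq}), sharpened to remove the factor of $2$ from the two-sided statement, and then applied in the case $a_i = 0$, $b_i = 1$.

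First I would establish the one-sided bound
$$\pr\left(X - \E[X] \geq \delta\right) \leq \exp\left(-\frac{2\delta^2}{\sum_{i=1}^n (b_i - a_i)^2}\right)$$
for independent $X_i \in [a_i, b_i]$, via the standard exponential-moment (Chernoff) method: for any $s > 0$, Markov's inequality applied to $e^{s(X-\E X)}$ gives $\pr(X - \E X \geq \delta) \leq e^{-s\delta}\prod_{i=1}^n \E[e^{s(X_i - \E X_i)}]$, and Hoeffding's lemma bounds each factor by $e^{s^2(b_i-a_i)^2/8}$; optimizing over $s$ (taking $s = 4\delta/\sum_i(b_i-a_i)^2$) yields the displayed inequality. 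Running the same argument with $X_i$ replaced by $-X_i$ gives the matching lower-tail bound $\pr(X - \E[X] \leq -\delta) \leq \exp\!\left(-2\delta^2/\sum_i(b_i-a_i)^2\right)$.

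Next I specialize to $a_i = 0$ and $b_i = 1$, so $\sum_{i=1}^n (b_i - a_i)^2 = n$, obtaining $\pr(X - \E[X] \geq \delta) \leq e^{-2\delta^2/n}$ and $\pr(X - \E[X] \leq -\delta) \leq e^{-2\delta^2/n}$. Finally, for part (i), since $\E[X] \leq \mu_h$ the event $\{X > \mu_h + \delta\}$ is contained in $\{X - \E[X] > \delta\}$, so $\pr(X > \mu_h + \delta) \leq \pr(X - \E[X] > \delta) \leq e^{-2\delta^2/n}$; symmetrically, for part (ii), since $\E[X] \geq \mu_l$ the event $\{X < \mu_l - \delta\}$ is contained in $\{X - \E[X] < -\delta\}$, giving $\pr(X < \mu_l - \delta) \leq e^{-2\delta^2/n}$.

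The only mildly technical ingredient is Hoeffding's lemma (the bound $\E[e^{s(X_i - \E X_i)}] \leq e^{s^2(b_i - a_i)^2/8}$ for a mean-zero variable supported on an interval of length $b_i-a_i$), which is entirely standard; the rest is routine bookkeeping. As an alternative requiring no new work at all, one could simply invoke the two-sided statement of Lemma~\ref{lem:hoeff_inq} and carry the extra multiplicative factor of $2$ through to wherever Lemma~\ref{lem:cher_bound2} is used (it only affects constants and hence the hidden $\Oh_d(\cdot)$ factors), but stating the one-sided refinement directly is cleaner.
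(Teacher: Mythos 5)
Your argument is correct: the one-sided Hoeffding bound obtained by the Chernoff exponential-moment method with Hoeffding's lemma, specialized to $[0,1]$-valued summands and combined with $\mu_l \leq \E[X] \leq \mu_h$, gives exactly the two stated inequalities. The paper offers no proof of this lemma --- it is quoted verbatim from \cite{DubhashiP09} --- and your derivation is precisely the standard textbook argument behind that citation, so there is nothing to reconcile.
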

\remove{
\begin{defi}
\label{defi:negative}
A set of indicator random variables $X_1,\dots,X_n$ are said to be negatively corelated if $\forall I \subseteq [n]$,
$$
 \pr(X_i=1~\forall i \in I) \leq \prod\limits_{i \in I}\pr(X_i=1)~\mbox{and}~  \pr(X_i=0~\forall i \in I) \leq \prod\limits_{i \in I}\pr(X_i=0).
$$
\end{defi}
\begin{lem}[\cite{Doerr18}(Chernoff-Hoeffding bound for negatively corelated random variables)]
\label{lem:cher_bound_3}
Let $X_1,\ldots,X_n$ be  negatively corelated indicator random variables,. Then for $X=\sum\limits_{i=1}^n X_i$, the following is true for any $\delta > 0$.
$$ \pr(\size{X-\mu} \geq \delta) \leq 2e^{-2 \delta^2/n}$$
\end{lem}

\begin{pro}
\label{prop:condition}
Let $A$ and $B$ be two events in some probability space. If $\pr(B)>0$ and $B_1,\ldots,B_n$ is a partition of $B$, then $\pr(A~|~B) \leq \sum\limits_{i=1}^n \pr(A~|~B_i)$.
\end{pro}

\remove{
\begin{lem}[Theorem 3.2 in ~\cite{DubhashiP09}]
\label{lem:depend:high_exact_statement}
Let $X_1,\ldots,X_n$ be random variables such that $a_i \leq X_i \leq b_i$ and $X=\sum\limits_{i=1}^n X_i$. Let $\cD$ be the \emph{dependent} graph, where $V(\cD)=\{X_1,\ldots,X_n\}$ and $ E(\cD)= \{(X_i,X_j): \mbox{$X_i$ and $X_j$ are dependent}\}$. Then for any $\delta >0$, 
$$ \pr(\size{X-\E[X]} \geq \delta) \leq  2e^{-2\delta^2 / \chi^*(\cD)\sum\limits_{i=1}^{n}(b_i-a_i)^2},$$
where $\chi^*(\cD)$ denotes the \emph{fractional chromatic number} of $\cD$.

\end{lem}

The following lemma directly follows from Lemma~\ref{lem:depend:high_exact_statement}.
\begin{lem}
\label{lem:depend:high_prob}
Let $X_1,\ldots,X_n$ be indicator random variables such that there are at most $d$ many $X_j$'s on which an $X_i$ depends and  $X=\sum\limits_{i=1}^n X_i$. Then for any $\delta > 0$, $$\pr(\size{X-\E[X]} \geq \delta) \leq 2e^{-2\delta^2 / (d+1)n}.$$
\end{lem}}
}
\begin{lem}~\cite{BeameHRRS18}
\label{lem:importance1}
Let $(D_1,w_1,e_1),\ldots, (D_r,w_r,e_r)$ are the given structures and each $D_i$ has an associated weight
${c}(D_i)$ satisfying  
\begin{itemize}
\item[(i)] $w_i,e_i \geq 1, \forall i \in [r]$;
\item[(ii)] $\frac{e_i}{\rho} \leq c(D_i) \leq e_i \rho$ for some $\rho >0$ and all $i \in [r]$; and
\item[(iii)] $\sum\limits_{i=1}^r {w_i\cdot c(D_i)} \leq M$.
\end{itemize}
Note that the exact values $c(D_i)$'s are not known to us. Then there exists an algorithm that finds 
 $(D'_1,w'_1,e'_1),\ldots, (D'_s,w'_s,e'_s)$ such that all of the above three conditions hold and 
 $\size{\sum\limits_{i=1}^t {w'_i\cdot c(D'_i)} - \sum\limits_{i=1}^r {w_i\cdot c(D_i)}} \leq \lambda S$ with probability $1-\delta$; where  $S=\sum\limits_{i=1}^r {w_i\cdot c(D_i)}$ and $\lambda, \delta >0$. The time complexity of the algorithm is $\cO(r)$ and 
  $s=\cO\left( \frac{\rho^4 \log M \left(\log \log M + \log \frac{1}{\delta}\right)}{\lambda^2}\right).$
\end{lem}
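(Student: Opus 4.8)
The approach is \emph{importance sampling} driven by the given estimates $e_i$, together with a geometric bucketing step that is needed so that the rescaled weights stay at least $1$ without inflating the variance; this is essentially the scheme of Beame et al.~\cite{BeameHRRS18}, which I sketch here. First I would dispose of the trivial case $r \le s$, in which we simply return the input unchanged; so assume $r > s$. Because $e_i$ is a $\rho$-approximation of the hidden weight $c(D_i)$ (that is, $c(D_i)/\rho \le e_i \le \rho\, c(D_i)$), the \emph{known} product $w_i e_i$ is a $\rho$-approximation of the hidden $w_i c(D_i)$, and conditions (i)--(iii) force $1 \le w_i e_i \le \rho M$ for every $i$.

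Next, bucket the indices geometrically: let $B_\ell = \{ i \in [r] : 2^\ell \le w_i e_i < 2^{\ell+1}\}$ for $\ell = 0, \ldots, L-1$, where $L = \Oh(\log(\rho M)) = \Oh(\log M)$. Choose $s_0 = \Theta\!\big(\rho^4\lambda^{-2}(\log L + \log(1/\delta))\big)$, so that $s := L s_0 = \Oh\!\big(\rho^4 \lambda^{-2}\log M(\log\log M + \log(1/\delta))\big)$, the claimed size. For every bucket with $|B_\ell| \le s_0$, keep all of its structures with their original weights (this part contributes no error at all). For every bucket with $|B_\ell| > s_0$, draw $s_0$ indices $i_1,\ldots,i_{s_0}$ i.i.d.\ uniformly from $B_\ell$ and output the triples $(D_{i_j}, w'_{i_j}, e_{i_j})$ with inflated weight $w'_{i_j} = \tfrac{|B_\ell|}{s_0} w_{i_j} \ge w_{i_j} \ge 1$. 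The estimates $e_i$ are never touched, so conditions (i) and (ii) are preserved verbatim, and (iii) is preserved up to the multiplicative slack $1+\lambda$ established below (strictly, one restates (iii) with $M$ replaced by $(1+\lambda)M$). The procedure runs in $\Oh(r)$ time.

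For the concentration analysis, fix a subsampled bucket $B_\ell$, put $S_\ell = \sum_{i\in B_\ell} w_i c(D_i)$, and let $\hat S_\ell = \sum_{j=1}^{s_0} w'_{i_j} c(D_{i_j})$ be its contribution to the output. Two facts drive everything: (a) $\hat S_\ell$ is a sum of $s_0$ i.i.d.\ terms each with mean $S_\ell/s_0$, so $\E[\hat S_\ell] = S_\ell$; and (b) since $w_i e_i$ lies in a single dyadic interval on $B_\ell$, every summand $\tfrac{|B_\ell|}{s_0}w_{i_j}c(D_{i_j})$ lies in an interval of width $\Oh(|B_\ell|2^\ell\rho/s_0)$, while $S_\ell \ge |B_\ell| 2^\ell/\rho$. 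Feeding these into Hoeffding's inequality (Lemma~\ref{lem:hoeff_inq}) gives $\pr\big(|\hat S_\ell - S_\ell| \ge \lambda S_\ell\big) \le 2\exp\big(-\Omega(\lambda^2 s_0/\rho^4)\big) \le \delta/L$ by the choice of $s_0$. A union bound over the at most $L$ subsampled buckets, plus the exactness of the fully-kept buckets, then yields $\big|\sum_i w'_i c(D'_i) - S\big| \le \sum_\ell |\hat S_\ell - S_\ell| \le \lambda\sum_\ell S_\ell = \lambda S$ with probability at least $1-\delta$, which is the claim.

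The step I expect to require the most care is the reconciliation of the two competing constraints on the output: the weights $w'_i$ must remain $\ge 1$, which rules out the textbook importance-sampling weight $W/(s\,e_{i_j})$ (it can be arbitrarily small when some $e_{i_j}$ is large), yet the estimator must still have small relative variance. Geometric bucketing on $w_i e_i$ is precisely what makes both hold at once, and the cost it incurs — a union bound over $\Oh(\log M)$ buckets, each of which must succeed with probability $1-\delta/L$ — is exactly the source of the extra $\log M\,(\log\log M + \log(1/\delta))$ factor in $s$ beyond the $\Oh(\rho^4 \log(1/\delta)/\lambda^2)$ that a single unconstrained importance sample would already give.
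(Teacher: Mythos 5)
Your reconstruction is correct and follows essentially the same dyadic-bucketing importance-sampling argument as the source of this lemma, Beame et al.~\cite{BeameHRRS18}; the paper itself states the lemma only by citation and gives no proof to compare against. The two small caveats you already flag yourself (condition (iii) is preserved only with $M$ relaxed to $(1+\lambda)M$, and the number of buckets is really $\Oh(\log(\rho M))$ rather than $\Oh(\log M)$) are harmless in the regime where the lemma is invoked.
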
  
\remove{
\section{Lower bound justification for stronger query oracles}
\label{sec:append-tis-lowerbound}
\paragraph*{Triangle existence query:} A triplet of vertices $u,v,w \in V(G)$ is given as input and the oracle answers whether there exists a triangle with $u,v,w$.

\begin{obs}
\label{obs:lb}
Any multiplicative approximation algorithm that estimates the number of triangles in a graph $G$ such that 
$\Delta \leq d$, requires
\begin{itemize}
\item[(a)] $\Omega\left( \frac{n}{t(G)}\right)$ queries if $d \leq 2$,
\item[(b)] $\Omega\left( \frac{n}{t(G)^{1/3}}\right)$ queries if $1 \leq t(G) \leq {d \choose 3}$ and $  3 \leq  d \leq  n $,
\item [(c)] $\Omega\left( \frac{d^2n}{t(G)} \right)$ queries if $ t(G) > {d \choose 3}  $ and $3 \leq d \leq   n $;
\end{itemize}
 where the allowable queries are edge existence, triangle existence, degree and neighbor query.
\end{obs}
\begin{proof}
\begin{figure}[h!]
  \centering
  \includegraphics[width=1.0\linewidth]{lowerbound}
  \caption{Lower bound construction for Observation~\ref{obs:lb}}
\label{fig:lb}
\end{figure}
The proof idea is motivated by~\cite{EdenLRS15}. For every $n$ and every $d$ as above, let $G_1$ be a graph on $n$ nodes having no edges and $\cG_2$ be a family of graphs on $n$ nodes. Any two graphs in $\cG_2$ differ 
only in labeling of the vertices. Note that $t(G_1)=0$ and we take $\cG_2$ such that $t(G)=\theta(t)$ for each $G \in \cG_2$ and for some $t \in \N$. Our strategy is to show that we can not distinguish whether the input is $G_1$ or some graph in $\cG_2$ unless we make \emph{sufficient} number of queries. We will design $\cG_2$ differently for each one of the cases below. 
\begin{itemize}
\item[(a)] Assume that $\lfloor \frac{t}{d} \rfloor (d+2)  < n$. Otherwise, the lower bound is trivial. Take $\cG_2$ to be a family
of graphs satisfying the following. In $\cG_2$, each graph $G$ consists of~(see Figure~\ref{fig:lb} (a))
\begin{itemize}
\item  $\lfloor \frac{t}{d}\rfloor$ many vertex disjoint components $H_1,\ldots,H_{\lfloor \frac{t}{d}\rfloor}$ such that each $H_i$ has $d+2$ vertices and $d$ many triangles sharing an edge,
\item an independent set of $n-\lfloor \frac{t}{d} \rfloor (d+2) $ vertices.
\end{itemize} 

Note that the number of vertices participating in any triangle in any $G \in \cG_2$ is at most $\lfloor \frac{t}{d} \rfloor (d+2)$. Unless we hit 
such a vertex, we can not distinguish whether the input is $G_1$ or some graph in $\cG_2$. The probability of hitting such a 
vertex in a graph selected uniformly from $\cG_2$ is at most $\frac{\lfloor \frac{t}{d} \rfloor (d+2)}{n}$. Hence, the number of 
queries required to distinguish between two input cases is at least $\frac{n}{\lfloor \frac{t}{d} \rfloor (d+2)} = \Omega\left(
\frac{n}{t}\right)=\Omega\left(
\frac{n}{t(G)}\right)$.

\item[(b)] Take $\cG_2$ to be the class of graphs where each $G \in \cG_2$ contains a clique of size 
$\lfloor t^{1/3}\rfloor$ and an independent set of size $n - \lfloor t^{1/3}\rfloor$~(see Figure~\ref{fig:lb} (b)). Observe that $G$ satisfies $t(G) =\theta(t)$ and $\Delta \leq d$ as $t(G) \leq {d \choose 3}$. Using a similar argument as in proof of (a), $\Omega\left( \frac{n}{t^{1/3}}\right)$ queries are required to decide whether the input graph is $G_1$ or some graph in $\cG_2$.
\item[(c)] Assume that ${\lfloor \frac{t}{{{d} \choose {3}}} \rfloor d} < n$. Otherwise, the claimed bound trivially holds. Take $\cG_2$ to be a class of graph where each graph $G \in \cG_2$ consists of~(see Figure~\ref{fig:lb} (c))
\begin{itemize}
\item $\lfloor \frac{t}{{{d} \choose {3}}} \rfloor$ many vertex disjoint cliques each of size $d$,
\item an independent set of size $n-\lfloor \frac{t}{{{d} \choose {3}}} \rfloor d$.
\end{itemize}
Using a similar argument as in proof of (a), one can show that the number of queries required to decide whether the input graph is $G_1$ or some graph in $\cG_2$ is at least $\frac{n}{\lfloor \frac{t}{{{d} \choose {3}}} \rfloor d}=\Omega\left( \frac{d^2n}{t}\right)=\Omega\left( \frac{d^2n}{t(G)}\right)$.
\end{itemize}
\end{proof}

}

\section{Oracle definitions} \label{sec:oracle-def}

\begin{defi}
{Independent set oracle (\bis)~\cite{BeameHRRS18}:} Given a subset $X$ of the vertex set of a graph $G(V,E)$ as input, the oracle returns {\sc Yes} if and only if $m(X)\neq 0$, where $m(X)$ denotes the number of edges in $G$ having both the vertices in $X$.
\end{defi}

\begin{defi}
{Bipartite independent set oracle (\bis)~\cite{BeameHRRS18}:} Given two disjoint subsets $A,B$ of the vertex set of a graph $G(V,E)$ as input, the oracle returns {\sc Yes} if and only if $m(A,B) \neq 0$, where $m(A,B)$ denotes the number of edges having exactly one vertex in both $A$ and $B$.
\end{defi}

\begin{defi}
{Tripartite independent set oracle (\tis)~\cite{}:} Given three pairwise disjoint subsets $A,B,C$ of the vertex set of a graph $G(V,E)$ as input, the oracle returns {\sc Yes} if and only if $t(A,B,C)$, where $t(A,B,C)$ denotes the number of triangles in $G$ having exactly one vertex in each  of $A$, $B$ and $C$.
\end{defi}

\begin{defi}
{Generalized $d$-partite independent set oracle (\gpis)~\cite{BishnuGKM018}:} Given $d$ pairwise disjoint subsets of vertices $\dsubset \subseteq U(\cH)$ of a hypergraph $\cH$ as input, \gpis{} query oracle answers {\sc Yes} if and only if $m(\dsubset) \neq 0$, where $m(\dsubset)$ denotes the number of hyperedges in $\cH$ having exactly one vertex in each $A_i$, $\forall i \in \{1,2,\ldots,d\}$. \end{defi}

\begin{defi}
{\gonepis oracle:} Given $s$ pairwise disjoint subsets of vertices $A_1,\ldots,A_s \subseteq U(\cH)$ of a hypergraph $\cH$ and $a_1,\ldots,a_s \in [d]$ such that $\sum_{i=1}^{s}a_i =d$, \gonepis query oracle on input $A_1^{[a_1]},A_2^{[a_2]},\cdots,A_s^{[a_s]}$ answers {\sc Yes} if and only if $m(\ssubset) \neq 0$.
\end{defi}

\begin{defi}
 {\gtwopis oracle:} Given any $d$ subsets of vertices $\dsubset \subseteq U(\cH)$ of a hypergraph $\cH$, \gtwopis query oracle on input $A_1,\ldots,A_d$ answers  {\sc Yes}  if and only if $m(\dsubset) \neq 0$.
\end{defi}

\section{\gpis query oracle and its polylogarithmic equivalents} \label{append:gpis}
\noindent Notice that the \gpis query oracle takes as input $d$ pairwise disjoint subsets of vertices. We
now define two other query oracles \gpis{}$ _1$ and \gpis{}$_2$ that are not as restrictive as \gpis in terms of admitting disjoint sets of vertices. We show shortly that both these oracles can be simulated by making polylogarithmic queries to \gpis with high probability. \gpis{}$ _1$ and \gpis{}$_2$ oracles will be used for ease of exposition. 
\begin{description}

	\item[($\mbox{{\sc GPIS}}_1$)] Given $s$ pairwise disjoint subsets of vertices $A_1,\ldots,A_s \subseteq U(\cH)$ of a hypergraph $\cH$ and $a_1,\ldots,a_s \in [d]$ such that $\sum\limits_{i=1}^{s}a_i =d$, \gonepis query oracle on input $A_1^{[a_1]},A_2^{[a_2]},\cdots,A_s^{[a_s]}$ answers {\sc Yes} if and only if $m(\ssubset) \neq 0$.

\item[($\mbox{{\sc GPIS}}_2$)] Given any $d$ subsets of vertices $\dsubset \subseteq U(\cH)$ of a hypergraph $\cH$, \gtwopis query oracle on input $A_1,\ldots,A_d$ answers  {\sc Yes}  if and only if $m(\dsubset) \neq 0$.
\end{description}
If from \gpis, we rule out the pairwise disjoint subset condition on $A_1, \ldots, A_d$, we get \gtwopis. In \gonepis, we allow multiple repetitions of the same set. 

From the above definitions, it is clear that a \gpis query can be simulated by a \gonepis or \gtwopis query. Through the following observations, we show how a \gonepis or a \gtwopis query can be simulated by polylogarithmic many \gpis queries.

\begin{obs}
\label{obs:append_queryoracles}
\begin{itemize}
\item[(i)] A \gonepis query can be simulated by using polylogarithmic \gpis queries with high probability.
\item[(ii)] A \gtwopis query can be simulated using $2^{\Oh(d^2)}$ \gonepis queries.
\item[(iii)] A \gtwopis query can be simulated using polylogarithmic \gpis queries with high probability.
\end{itemize}
\end{obs}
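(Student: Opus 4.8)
The plan is to prove the three parts in order, since part (iii) follows at once from (i) and (ii). \textbf{Part (i).} The only obstruction to answering a \gonepis query on $\ssubset$ (with $\sum_{i=1}^s a_i = d$) by a single \gpis query is that \gpis insists on $d$ pairwise \emph{disjoint} sets, whereas here each $A_i$ is repeated $a_i$ times. First I would randomly refine: partition each $A_i$ independently and uniformly into $a_i$ parts $B_i^1,\ldots,B_i^{a_i}$, and query \gpis on the $d$ pairwise disjoint sets $B_1^1,\ldots,B_1^{a_1},\ldots,B_s^1,\ldots,B_s^{a_s}$. Because $\cF(B_1^1,\ldots,B_s^{a_s})\subseteq\cF(\ssubset)$, a {\sc No} answer from \gonepis forces {\sc No} from this \gpis query as well, so the error is one-sided. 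When \gonepis answers {\sc Yes}, fix any witnessing hyperedge $F$; the probability that the $a_i$ vertices of $F$ lying in $A_i$ land in $a_i$ distinct parts, simultaneously for every $i$, is $\prod_{i=1}^s a_i!/a_i^{a_i}\geq \prod_{i=1}^s a_i^{-a_i}\geq d^{-d}$, a positive constant since $d=\Oh(1)$. Repeating the random refinement $\Theta(d^d\log n)=\Oh_d(\log n)$ times and answering {\sc Yes} whenever any of the \gpis queries does drives the (one-sided) failure probability below $n^{-\Omega(1)}$.

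\textbf{Part (ii).} Given arbitrary $A_1,\ldots,A_d$, I would partition each $A_i$ by the intersection pattern of its vertices with the other $A_j$'s: two vertices of $A_i$ go to the same block iff they lie in exactly the same subcollection of $\{A_j:j\neq i\}$. Call this partition $\cP_i$; it has at most $2^{d-1}$ blocks, and for any $B_i\in\cP_i$, $B_j\in\cP_j$ with $i\neq j$ we have either $B_i=B_j$ or $B_i\cap B_j=\emptyset$ — which is precisely the admissible input shape for \gonepis (equal-or-disjoint, with repetitions collected into multiplicities). For each tuple $(B_1,\ldots,B_d)\in\cP_1\times\cdots\times\cP_d$ I would make one \gonepis query, and answer {\sc Yes} to the \gtwopis query iff at least one of them answers {\sc Yes}. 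This is correct because any hyperedge with exactly one vertex in each $A_i$ places that vertex in a unique block of $\cP_i$, hence is detected by exactly the corresponding tuple's \gonepis query, and conversely any hyperedge detected by a tuple's query has one vertex in each $A_i$. The number of \gonepis queries is at most $\prod_{i=1}^d|\cP_i|\leq 2^{d(d-1)}=2^{\Oh(d^2)}$.

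\textbf{Part (iii).} Substitute: each of the $2^{\Oh(d^2)}=\Oh_d(1)$ \gonepis queries of part (ii) becomes, by part (i), a bundle of $\Oh_d(\log n)$ \gpis queries, so the total is $\Oh_d(\log n)$, and a union bound over these $\Oh_d(\log n)$ simulations keeps the overall failure probability $n^{-\Omega(1)}$. The substantive work is entirely in part (i): the two points to get right are (a) the constant lower bound $d^{-d}$ on the chance that a uniformly random refinement isolates a fixed witness hyperedge, and (b) the observation that refining can only create false negatives, never false positives, which is what legitimizes amplification by repetition. Parts (ii) and (iii) are bookkeeping once (i) is in hand; the one subtlety there is why \gonepis, rather than \gpis, is the right intermediate oracle — the product tuples in (ii) may repeat a block, and \gonepis is exactly the variant built to absorb such repetitions.
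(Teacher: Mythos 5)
Your proposal matches the paper's own proof essentially step for step: the same random refinement of each $A_i$ into $a_i$ parts with one-sided error and a per-trial success probability of at least $d^{-d}$ boosted by $\Oh_d(\log n)$ repetitions for (i), the same equal-or-disjoint partition by intersection pattern with $2^{\Oh(d^2)}$ \gonepis queries over $\cP_1\times\cdots\times\cP_d$ for (ii), and (iii) by composition. The only (harmless) difference is cosmetic: you compute the exact detection probability $\prod_i a_i!/a_i^{a_i}$ where the paper just lower-bounds it by $\prod_i a_i^{-a_i}$.
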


\begin{proof}
\begin{itemize}
	\item[(i)] Let the input of \gonepis query oracle be $\ssubset$ such that $a_i \in [d]~\forall i \in [s]$ and $\sum\limits_{i=1}^s a_i =d$. For each $i \in [s]$, we partition $A_i$ (only one copy of $A_i$, and not $a_i$ many copies of $A_i$) randomly into $a_i$ parts, let $\{B_i^j:j \in [a_i]\}$ be the resulting partition of $A_i$.   Then we make a \gpis query with input $B_1^{1},\ldots,B_1^{a_1},\ldots, B_s^{1},\ldots,B_s^{a_s}$. Note that 
$$
\cF(B_1^{1},\ldots,B_1^{a_1},\ldots, B_s^{1},\ldots,B_s^{a_s}) \subseteq \cF(\ssubset).
$$

So, if \gonepis outputs {\sc `No'} to query $\ssubset$, then the above \gpis query will also report {\sc `No'} as its answer. If \gonepis answers {\sc `Yes'}, then consider a particular hyperedge $F \in \cF(\ssubset)$. Observe that 
\begin{eqnarray*}
&& \pr(\mbox{\gpis oracle answers {\sc `Yes'}})\\
&\geq& \pr(\mbox{$F$ is present in $\cF(B_1^{1},\ldots,B_1^{a_1}, \ldots \ldots, B_s^{1},\ldots,B_s^{a_s})
$})\\
	&\geq& \prod\limits_{i=1}^s \frac{1}{a_i ^{a_i}} \\ 
	&\geq& \prod\limits_{i=1}^s \frac{1}{d ^{a_i}}~~~~~~~~~~(\because a_i \leq d~\mbox{for all}~i\in [d])\\
	 &=& \frac{1}{d^{d}}~~~~~~~~~~(\because \sum\limits_{i=1}^s a_i =d)
\end{eqnarray*}

We can boost up the success probability arbitrarily by repeating the above procedure polylogarithmic many times.
  
	\item[(ii)] Let the input to \gtwopis query oracle be $\dsubset$. Let us partition each set $A_i$ into at most $2^{d-1}-1$ subsets depending on $A_i$'s intersection with $A_j$'s for $j \neq i$. Let $\cP_i$ denote the corresponding partition of $A_i$, $i \in [d]$. Observe that for any $i \neq j$, if we take any $B_i \in \cP_i$ and $B_j \in \cP_j$, then either $B_i=B_j$ or $B_i \cap B_j = \emptyset$.
  
For each $(B_1,\ldots,B_d) \in \cP_1 \times \ldots \times \cP_d$, we make a \gonepis query with input $(B_1,\ldots,B_d)$. Total number of such \gonepis queries is at most $2^{\Oh(d^2)}$, and we report {\sc `Yes'} to the \gtwopis query if and only if at least one \gonepis query, out of the $2^{\Oh(d^2)}$ queries, reports {\sc `Yes'}.
 
 \item[(iii)] It follows from (i) and (ii).
\end{itemize}

\end{proof}

\remove{
To prove Theorem~\ref{theo:main_restate}, first consider the following Lemma.
\begin{lem}
\label{lem:prob1}
Let $\cH$ be a hypergraph with $\size{U(\cH)}=n$. For any $\eps > \lbeps$, \hest can be solved with probability $1-\frac{1}{n^{4d}}$ and using $\Oh\left( \frac{\log ^{5d+4} n}{\eps^4}\right)$ many queries, where each query is either a \gonepis query or a \gtwopis query.
\end{lem}
Assuming Lemma~\ref{lem:prob1} holds, we prove Theorem~\ref{theo:main_restate}.
\begin{proof}[Proof of Theorem~\ref{theo:main_restate} ]
If $\eps \leq \lbeps$, we query for $m(\{a_1\},\ldots,\{a_d\})$ for all distinct $a_1,\ldots,$ $a_d \in U(\cH)=U$ and compute the exact value of $m_o(\cH)$. So, we make at most $n^d=\Oh_d\left( \frac{\log ^{5d+5} n}{\eps^4}\right)$ many \gpis queries as $\eps \leq \lbeps$.
 If $\eps > \lbeps$, we use the algorithm corresponding to Lemma~\ref{lem:prob1}, where each query is either a \gonepis query or a \gtwopis query. However, by Observation~\ref{obs:queryoracles}, each \gonepis and \gtwopis query can be simulated by $\Oh_d(\log n)$ many \gpis queries with high probability. So, we can replace each step of the algorithm, where we make either \gonepis or \gtwopis query, by $\Oh_d(\log n)$ many \gpis queries. Hence, we are done with the proof of Theorem~\ref{theo:main_restate}.
\end{proof}
In the rest of the paper, we mainly focus on proving Lemma~\ref{lem:prob1}.
}

\remove{
\section{The flowchart for the algorithm}
\label{append:flowchart}
The algorithmic framework we use involves \emph{sparsification}, \emph{coarse and exact estimation} and \emph{sampling} as in~\cite{BeameHRRS18}, but there exists no easy generalization of Beame et al.'s~\cite{BeameHRRS18} edge estimation to hyperedge estimation mostly because edges intersect in at most one vertex whereas hyperedge intersections can be arbitrary. In Figure~\ref{fig:flowchart}, we give a \emph{flowchart} of the algorithm.

\begin{figure}[h!]
	\centering
	\includegraphics[width=1\linewidth]{flowchart}
	\caption{Flow chart of the algorithm. The highlighted texts indicate the basic building blocks of the algorithm. We also indicate the corresponding lemmas that support the building blocks.}
	\label{fig:flowchart}
\end{figure}
}

\end{document}